\documentclass[twoside,10pt,a4paper]{article}

\usepackage[utf8]{inputenc}
\usepackage[T1]{fontenc}
\usepackage{csquotes}
\usepackage[english]{babel}
\usepackage{amsfonts}
\usepackage{amsmath}
\usepackage{amsthm}
\usepackage{amssymb}
\usepackage{faktor } % Fancy quotients
\usepackage{titlesec}
\usepackage{hyperref} % Pour que la table des matières ait des hyperliens
\usepackage{MnSymbol} % floor commands
\usepackage{graphicx}
\usepackage{bm}
\usepackage{subcaption}
\usepackage{dsfont}
\usepackage{tikz}
\usetikzlibrary{shapes}
\usepackage{pgfplots}
\usetikzlibrary{calc}
\usetikzlibrary{matrix,arrows}
\usepackage{graphicx}

\newtheorem{theo}{Theorem}

\newcounter{nb}[subsection]

\newtheorem{theorem}[nb]{Theorem}

\newtheorem{lemma}[nb]{Lemma}
\newtheorem{prop}[nb]{Proposition}
\newtheorem{corollary}[nb]{Corollary}

\theoremstyle{definition}
\newtheorem{definition}[nb]{Definition}
\newtheorem{notation}[nb]{Notation}

\theoremstyle{example}
\newtheorem{example}[nb]{Example}

\theoremstyle{remark}
\newtheorem{remark}[nb]{Remark}
\newtheorem{obs}[nb]{Observation}

\usepackage{scrextend}%Ajouter des marges dans le document
 
 %Vertical \in 

\DeclareMathOperator{\ev}{ev}
\DeclareMathOperator{\bideg}{bideg}

\DeclareMathOperator{\Pic}{Pic}

\DeclareMathOperator{\LT}{LT}
\DeclareMathOperator{\LM}{LM}

\DeclareMathOperator{\Span}{Span}
\DeclareMathOperator{\range}{range}
\DeclareMathOperator{\Id}{Id}

\newcommand{\A}{\mathbb{A}}
\newcommand{\Gb}{\mathbb{G}}
\newcommand{\R}{\mathbb{R}}

\newcommand{\Z}{\mathbb{Z}}
\newcommand{\F}{\mathbb{F}}
\newcommand{\PP}{\mathbb{P}}
\newcommand{\N}{\mathbb{N}}
\newcommand{\Mon}{\mathcal{M}}

\renewcommand{\tilde}{\widetilde}

\title{Algebraic Geometric codes on minimal Hirzebruch surfaces}
\author{Jade Nardi \thanks{Institut de Math\'ematiques de Toulouse~; UMR 5219, Universit\'e de Toulouse~; CNRS UPS IMT, F-31062 Toulouse Cedex 9, France jade.nardi@math.univ-toulouse.fr.
Funded by ANR grant ANR-15-CE39-0013-01 "manta"}}

\begin{document}

\maketitle

\begin{abstract}
We define a linear code $C_\eta(\delta_T,\delta_X)$ by evaluating polynomials of bidegree $(\delta_T,\delta_X)$ in the Cox ring on $\F_q$-rational points of a minimal Hirzebruch surface over the finite field $\F_q$. We give explicit parameters of the code, notably using Gr\"obner bases. The minimum distance provides an upper bound of the number of $\F_q$-rational points of a non-filling curve on a Hirzebruch surface.
\end{abstract}

\noindent AMS classification : 94B27, 14G50, 13P25, 14G15, 14M25

\noindent Keywords: Hirzebruch surface, Algebraic Geometric code, Gr\"obner basis, Rational scroll

\section*{Introduction}

Until the $00$'s, most Goppa codes were associated to curves. In 2001 S.H. Hansen \cite{Han} estimated parameters of Goppa codes associated to normal projective varieties of dimension at least $2$. As Hansen required very few assumptions on the varieties, the parameters he gave depended only on the Seshadri constant of the line bundle, which is hard to compute in practice. New classes of error correcting codes have thus been constructed, focusing on specific well-known families of varieties to better grasp the parameters. Among Goppa codes associated to a surface which have been studied so far, some toric and projective codes are based on Hirzebruch surfaces.

\emph{Toric codes}, first introduced by J. P. Hansen \cite{HanToric} and further investigated by D. Joyner \cite{Joyner}, J. Little and H. Schenck \cite{Little}, D. Ruano \cite{Ruano} and I. Soprunov and J. Soprunova \cite{SS}, are Goppa codes on toric varieties evaluating global sections of a line bundle at the $\F_q$-rational points of the torus. J. Little and H. Schenck \cite{Little} already computed the parameters of toric codes on Hirzeburch surfaces for some bidegrees and for $q$ large enough to make the evaluation map injective.

\emph{Projective codes} evaluate homogeneous polynomials on the rational points of a variety embedded in a projective space. A first example of projective codes is the family of Reed-Muller projective codes on $\PP^n$ \cite{Lachaud}. A. Couvreur and I. Duursma \cite{CD} studied codes on the biprojective space $\PP^1\times \PP^1$ embedded in $\PP^3$. The authors took advantage of the product structure of the variety, yielding a description of the code as a tensor product of two well understood Reed-Muller codes on $\PP^1$. More recently C. Carvalho and V. G.L. Neumann \cite{CN} examined the case of rational surface scrolls $S(a_1,a_2)$ as subvarieties of $\PP^{a_1+a_2+1}$, which extends the result on $\PP^1\times\PP^1$, isomorphic to $S(1,1)$.

\medskip

In this paper we establish the parameters of Goppa codes corresponding to complete linear systems on minimal Hirzebruch surfaces $\mathcal{H}_\eta$, a family of projective toric surfaces indexed by $\eta \in \N$. This framework expands preceding works while taking advantage of both toric and projective features.

Regarding toric codes, we extend the evaluation map on the whole toric variety. This is analogous to the extension of affine Reed-Muller codes by projective ones introduced by G. Lachaud \cite{Lachaud}, since we also evaluate at "points at infinity". In other words toric codes on Hirzebruch surfaces can be obtained by puncturing the codes studied here at the $4q$ points lying on the $4$ torus-invariant divisors, that have at least one zero coordinate. As in the Reed-Muller case, through the extension process, the length turns to grow about twice as much as the minimal distance, as proved in Section \ref{punc}. 

Respecting the projective codes cited above, it turns out that rational surface scrolls are the range of some projective embeddings of a Hirzeburch surface, $\mathcal{H}_0$ for $\PP^1 \times \PP^1$ and  $\mathcal{H}_{a_1-a_2}$ for $S(a_1,a_2)$. However no embedding of the Hirzebruch surface into a projective space is required for our study and the Cox ring replaces the usual $\F_q[X_0,\dots,X_r]$ used in the projective context. Moreover, the embedded point of view forces to only evaluate polynomials of the Cox ring that are pullbacks of homogeneous polynomials of $\F_q[X_0,X_1,\dots,X_r]$ under this embedding. No such constraint appears using the Cox ring and polynomials of any bidegree can be examined.

\medskip

Whereas coding theorists consider evaluation codes with an injective evaluation map, C. Carvalho and V. G.L. Neumann (loc. cit.) extensively studied codes associated to a non necessarily injective evaluation map. In the present work no assumption of injectivity is needed. In particular, the computation of the dimension of the code does not follow from Riemann-Roch theorem. For a given degree, this grants us a wider range of possible sizes for the alphabet, including the small ones.

\medskip

Our study focuses on minimal Hirzeburch surfaces, putting aside $\mathcal{H}_1$, the blown-up of $\PP^2$ at a point. Although most techniques can be used to tackle this case, some key arguments fail, especially when estimating the minimal distance.

\medskip

The linear code $C_\eta(\delta_T,\delta_X)$ is defined as the evaluation code on $\F_q$-rational points of $\mathcal{H}_\eta$ of the set $R(\delta_T,\delta_X)$ of homogeneous polynomials of bidegree $(\delta_T,\delta_X)$, defined in Section \ref{secdebut}. The evaluation is naively not well-defined for a polynomial but a meaningful definition \textit{à la} Lachaud \cite{Lachaud} is given in Paragraph \ref{SensEval}.

Here the parameters of the code $C_\eta(\delta_T,\delta_X)$ are displayed as nice combinatoric quantities, from which quite intricate but explicit formulae can be deduced in Propositions \ref{dimcode} and \ref{distance}. The rephrasing of the problem in combinatorial terms is already a key feature in Hansen's \cite{HanToric} and Carvalho and Neumann's works \cite{CN} that is readjusted here to fit a wider range of codes.

A natural way to handle the dimension of these codes is to calculate the number of classes under the equivalence relation $\equiv$ on the set $R(\delta_T,\delta_X)$ that identifies two polynomials if they have the same evaluation on every $\F_q$-rational point of the Hirzebruch surface. Our strategy is to first restrict the equivalence relation $\equiv$ on the set of monomials $\Mon(\delta_T,\delta_X)$ of $R(\delta_T,\delta_X)$ and a handy characterization for two monomials to be equivalent is given. 

In most cases comprehending the equivalence relation over monomials is enough to compute the dimension. We have to distinguish a particular case:
\[\begin{array}{ccccccc}\tag{H}\label{H}
\eta \geq 2, &\delta_T < 0,& \eta \: | \: \delta_T, & q \leq \delta_X+\frac{\delta_T}{\eta}.
\end{array}\] 

\begin{theo}\label{DIM} The dimension of the code $C_\eta(\delta_T,\delta_X)$ satisfies
\[ \dim C_\eta(\delta_T,\delta_X) = \#  \left(\Mon(\delta_T,\delta_X)/\equiv\right) - \epsilon,\]
where $\epsilon$ is equal to $1$ if the couple $(\delta_T,\delta_X)$ satisfies (\ref{H}) and $0$ otherwise.
\end{theo}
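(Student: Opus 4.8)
The plan is to compare the dimension of $C_\eta(\delta_T,\delta_X)$, which equals $\#(R(\delta_T,\delta_X)/\equiv)$, with the number $\#(\Mon(\delta_T,\delta_X)/\equiv)$ of monomial classes. The natural map sends the class of a monomial to the class of that monomial viewed inside $R(\delta_T,\delta_X)/\equiv$; choosing one representative monomial per equivalence class gives a spanning set for the image of the evaluation map, so we always have $\dim C_\eta(\delta_T,\delta_X) \leq \#(\Mon(\delta_T,\delta_X)/\equiv)$. The whole content of the theorem is thus to decide exactly when these chosen monomial representatives are $\F_q$-linearly independent as functions on $\mathcal{H}_\eta(\F_q)$: generically they are (so $\epsilon = 0$), but in the exceptional regime (\ref{H}) there is precisely one linear relation among them (so $\epsilon = 1$). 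I would set this up via a Gröbner-basis argument: fix a monomial order, let $I$ be the ideal of the (finite) set of $\F_q$-points in the Cox ring restricted to bidegree $(\delta_T,\delta_X)$, and identify the standard monomials; the dimension of the code is the number of standard monomials of that bidegree, and one must check this matches $\#(\Mon/\equiv)$ up to the correction $\epsilon$.

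The key steps, in order, are: (1) recall from Section \ref{secdebut} the explicit description of $\Mon(\delta_T,\delta_X)$ and of the equivalence relation $\equiv$ on monomials, including the combinatorial characterization of when two monomials are equivalent; (2) produce, for each class, a canonical representative (the "lex-smallest" or similar), and show that off the hypothesis (\ref{H}) these representatives remain independent — equivalently, that the reduction of the vanishing ideal contributes no further relations in bidegree $(\delta_T,\delta_X)$; (3) under (\ref{H}), exhibit one explicit nontrivial polynomial of bidegree $(\delta_T,\delta_X)$ that is a combination of distinct-class monomials and vanishes at every $\F_q$-point of $\mathcal{H}_\eta$, e.g. built from the factor $\prod_{a\in\F_q}(X_1 - a X_0^{?})$ or the analogous degree-$q$ relation $T_i^q T_j - T_i T_j^q$ transported to the relevant graded piece — the arithmetic condition $q \le \delta_X + \delta_T/\eta$ with $\eta \mid \delta_T$, $\delta_T<0$ is exactly what makes room for such a relation among distinct monomial classes; (4) show this is the only such relation, i.e. the codimension drop is exactly $1$, by a counting or Gröbner-staircase argument showing no two independent relations can coexist in this graded piece.

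The main obstacle I expect is step (3)–(4): constructing the unique relation in the exceptional case and proving there is exactly one. One must pin down why the hypotheses $\eta\ge 2$, $\eta\mid\delta_T$, $\delta_T<0$, and $q\le\delta_X+\delta_T/\eta$ together force precisely a one-dimensional space of "extra" relations — the condition $\eta \mid \delta_T$ suggests that the relation lives in the image of multiplication by a specific power of a section pulled back along the fibration $\mathcal{H}_\eta \to \PP^1$, and the inequality controls whether the complementary degree $\delta_X + \delta_T/\eta$ is large enough (at least $q$) for the universal vanishing polynomial $\prod_{a\in\F_q^*}(\,\cdot\,)$ on that $\PP^1$-factor to fit. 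Verifying that this produces a genuinely new relation not already accounted for by $\equiv$ on monomials, and that enlarging it is impossible, is the delicate part; everything else (the upper bound, the reduction to monomials, the generic case) should follow routinely from the combinatorial description and the Gröbner-basis setup already introduced in the paper.
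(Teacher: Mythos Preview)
Your outline is structurally correct and matches the paper's plan: pick one monomial representative per $\equiv$-class, show these span the code, and determine when they are independent. The execution in the paper, however, differs from your Gr\"obner-basis suggestion in two respects worth noting.

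First, the paper does \emph{not} use Gr\"obner bases for this theorem; that machinery appears only later (Section~\ref{secgrob}) for the minimum distance. Instead it builds an explicit linear projection $\pi_{(\delta_T,\delta_X)}$ sending each monomial to its canonical representative (Definition~\ref{pi}), and proves independence of the representative set $\Delta^*(\delta_T,\delta_X)$ by direct evaluation at carefully chosen points (Lemma~\ref{zerocombi}): evaluating a hypothetical linear relation at $(1,0,1,0)$, $(0,1,1,0)$, then at $(1,a,1,0)$, $(1,0,1,b)$, and finally at $(1,a,1,b)$ kills the coefficients one layer at a time via the usual ``polynomial of degree $<q$ with $q$ roots'' argument. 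This is more elementary than a staircase computation and needs no monomial order.

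Second, the extra relation under (\ref{H}) that you anticipate but leave unspecified is not a product of linear forms: it is the four-term polynomial (Lemma~\ref{ajustement})
\[F_0 = M(A,0) - M(r,0) + M(r,q-1) - M(r,\eta k(q-1)),\]
where $A=\delta/\eta=k(q-1)+r$ with $1\le r\le q-1$. Your guesses $\prod_a(X_1-aX_0^?)$ or $T_i^qT_j-T_iT_j^q$ do not produce it directly. Uniqueness of this relation is then not a separate counting step: one simply removes the single monomial $M(A,0)$ from the representative set and observes that the \emph{same} evaluation-based independence proof (Lemma~\ref{zerocombi}) goes through for the reduced set $\Delta^*(\delta_T,\delta_X)$, so the codimension drop is exactly one.
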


This quantity depends on the parameter $\eta$, the bidegree $(\delta_T,\delta_X)$ and the size $q$ of the finite field.  

\smallskip

As for the dimension, the first step to determine the minimum distance is to bound it by below with a quantity that only depends on monomials. Again the strategy is similar to Carvalho and Neumann's one \cite{CN} but, even though they mentioned Gr\"obner bases, they did not fully benefit from the potential of the tools provided by Gr\"obner bases theory.
Indeed linear codes naturally involve linear algebra but the problem can be considered from a commutative algebra perspective. On this purpose, we consider the homogeneous vanishing ideal $\mathcal{I}$ of the subvariety constituted by the $\F_q$-rational points. A good understanding of a Gr\"obner basis of $\mathcal{I}$, through Section \ref{secgrob}, shortens the proof of the following theorem.

\begin{theo}\label{DIST}Let us fix $(\epsilon_T,\epsilon_X) \in \N^2$ such that $\epsilon_T, \: \epsilon_X \geq q$. The minimum distance $d_\eta(\delta_T,\delta_X)$ satisfies
\[d_\eta(\delta_T,\delta_X)  \geq \min_{M \in \Delta^*(\delta_T,\delta_X)} \# \Delta^*(\epsilon_T,\epsilon_X)_M\]
where $\Delta^*(\epsilon_T,\epsilon_X)_M$ is defined in Notation \ref{DeltaM}. It is an equality for $\epsilon_T=\delta_T+\eta\delta_X+q$ and $\epsilon_X=\delta_X+q$.
\end{theo}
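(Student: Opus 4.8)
The plan is to translate the minimum-distance problem into a statement about vanishing ideals and Gröbner bases, exploiting the structure of $\mathcal{H}_\eta$ as developed in Sections~\ref{secgrob} and the combinatorial description of $\Mon(\delta_T,\delta_X)/\equiv$. A codeword of $C_\eta(\delta_T,\delta_X)$ is the evaluation vector of some $f \in R(\delta_T,\delta_X)$, and its weight is the number of $\F_q$-rational points of $\mathcal{H}_\eta$ at which $f$ does not vanish. So I want to lower-bound $\#\{P : f(P) \neq 0\}$, equivalently upper-bound the number of zeros of a non-identically-vanishing $f$. First I would fix a monomial order and, for a given nonzero codeword, pick a representative $f$ whose leading monomial $M = \LM(f)$ lies in the footprint $\Delta^*(\delta_T,\delta_X)$ of the vanishing ideal $\mathcal{I}$ (this is exactly why the footprint, not all of $\Mon(\delta_T,\delta_X)$, is the relevant index set — monomials outside $\Delta^*$ are leading terms of elements of $\mathcal{I}$ and can be reduced away). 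The point $M$ ranges over $\Delta^*(\delta_T,\delta_X)$, which explains the outer minimum.

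**The footprint/multiplication trick.**

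The heart of the argument is a "multiply up to a large bidegree" trick, in the spirit of Carvalho--Neumann and the classical footprint bound. The idea: for the fixed auxiliary bidegree $(\epsilon_T,\epsilon_X)$ with $\epsilon_T,\epsilon_X \geq q$, and for each coordinate there is a polynomial (coming from $\prod_{\alpha \in \F_q}(Y-\alpha)$-type relations in the Cox ring) that vanishes on all $\F_q$-points; multiplying $f$ by suitable monomials of complementary bidegree, I get, for the word $\ev(f)$, that the set of \emph{non}zeros of $f$ on $\mathcal{H}_\eta$ has size at least $\#\Delta^*(\epsilon_T,\epsilon_X)_M$, where $\Delta^*(\epsilon_T,\epsilon_X)_M$ collects the footprint monomials of the larger bidegree that are divisible (in the appropriate Cox-ring sense recorded in Notation~\ref{DeltaM}) by $M = \LM(f)$. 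Concretely one shows $\dim_{\F_q} \big(R(\epsilon_T,\epsilon_X)/(\mathcal I \cap R(\epsilon_T,\epsilon_X) + \langle \text{monomials not divisible by }M\rangle)\big)$ counts points where $f$ is nonzero, and the Gröbner basis of $\mathcal I$ from Section~\ref{secgrob} identifies this quotient's standard monomials with $\Delta^*(\epsilon_T,\epsilon_X)_M$. Taking the minimum over the possible leading monomials $M \in \Delta^*(\delta_T,\delta_X)$ gives the stated inequality.

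**The equality case.**

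For the second assertion — that the bound is sharp at $\epsilon_T = \delta_T + \eta\delta_X + q$, $\epsilon_X = \delta_X + q$ — I would exhibit, for the monomial $M$ achieving the minimum, an explicit codeword meeting the bound: take $f$ to be a product of $M$ with factors of the form $\prod_{\alpha}(Y - \alpha)$ over the "missing" coordinate directions, chosen so that $f$ vanishes at exactly the complement of the $\Delta^*(\epsilon_T,\epsilon_X)_M$-indexed points. The specific choice of $\epsilon_X = \delta_X + q$ (one full batch of $q$ values in the $X$-direction) and $\epsilon_T = \delta_T + \eta\delta_X + q$ is dictated by the bidegree bookkeeping: $\eta\delta_X$ is the shift needed to make the relevant section effective on $\mathcal H_\eta$, and the extra $+q$ supplies one full set of $\F_q$-values in the $T$-direction. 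One must check this $f$ genuinely has bidegree $(\delta_T,\delta_X)$ after the reductions, i.e. that the construction stays inside $R(\delta_T,\delta_X)$ modulo $\equiv$.

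**Main obstacle.**

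The main obstacle is controlling the interaction between the two grading directions on $\mathcal H_\eta$: unlike the $\PP^1 \times \PP^1$ case, the Cox ring of $\mathcal H_\eta$ has the twisting parameter $\eta$, so "multiplying a monomial up to bidegree $(\epsilon_T,\epsilon_X)$" is not a clean product of two independent one-variable steps, and the set $\Delta^*(\epsilon_T,\epsilon_X)_M$ has a genuinely two-dimensional, $\eta$-dependent shape. Making the footprint bound sharp therefore requires the precise Gröbner basis description of $\mathcal I$ — in particular knowing that the chosen auxiliary bidegree is large enough that \emph{every} relevant standard monomial of bidegree $(\delta_T,\delta_X)$ can be "completed" to a standard monomial of bidegree $(\epsilon_T,\epsilon_X)$ in exactly the predicted number of ways. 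Verifying this completion count, and that no further reductions collapse it, is where the real work lies; the rest is the now-standard footprint machinery.
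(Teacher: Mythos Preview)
Your overall strategy matches the paper's: reduce to a representative $f \in \Span\Delta^*(\delta_T,\delta_X)$ with $\LM(f)=M$, apply a footprint-type bound via the Gr\"obner basis of $\mathcal I$ at the auxiliary bidegree $(\epsilon_T,\epsilon_X)$, and then exhibit explicit extremal polynomials for the equality. However, the ``concretely'' sentence misidentifies the relevant quotient. The space
\[
R(\epsilon_T,\epsilon_X)\big/\bigl(\mathcal I(\epsilon_T,\epsilon_X) + \langle\text{monomials not divisible by }M\rangle\bigr)
\]
has no interpretation as the number of nonzeros of $f$; nothing links that subspace to $\mathcal Z(f)(\F_q)$. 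What the paper does is use surjectivity of $\ev_{(\epsilon_T,\epsilon_X)}$ (which holds because $\epsilon_T,\epsilon_X\geq q$) to write $N_F=\dim R(\epsilon_T,\epsilon_X)/\ker\ev_{(\epsilon_T,\epsilon_X),F}$, and then bound $N_F\leq \tilde N_F:=\dim R(\epsilon_T,\epsilon_X)/\bigl(\mathcal I(\epsilon_T,\epsilon_X)+\langle F\rangle_{(\epsilon_T,\epsilon_X)}\bigr)$. Extending the Gr\"obner basis $\mathcal G'$ of $\mathcal I$ to one of $\mathcal I+\langle F\rangle$ containing $F$, every element of $\Delta^*(\epsilon_T,\epsilon_X)$ divisible by $\LT(F)=M$ drops out of the standard monomials, giving $N-\tilde N_F\geq\#\Delta^*(\epsilon_T,\epsilon_X)_M$. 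The ideal to add is $\langle F\rangle$, not the span of monomials not divisible by $M$. (An equivalent, and closer to your ``multiply $f$ up'' remark: the products $(N/M)\cdot f$ for $N\in\Delta^*(\epsilon_T,\epsilon_X)_M$ have pairwise distinct leading monomials in $\Delta^*(\epsilon_T,\epsilon_X)$, hence are linearly independent modulo $\mathcal I(\epsilon_T,\epsilon_X)$, and all vanish on $\mathcal Z(f)(\F_q)$; this yields the bound directly.)

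For the equality at $(\epsilon_T,\epsilon_X)=(\delta+q,\delta_X+q)$, the paper does not build $f$ in the large bidegree and then reduce; it first computes $\#\Delta^*(\epsilon_T,\epsilon_X)_{M(\alpha_0,\beta_0)}$ explicitly for this choice (Proposition~\ref{tomin}), minimises the resulting expression over $\mathcal K^*(\delta_T,\delta_X)$ case by case (Proposition~\ref{distance}), and then writes down polynomials already lying in $R(\delta_T,\delta_X)$ --- products of factors $T_2-\xi_iT_1$, $X_2-\xi_iT_1^\eta X_1$, $T_2^q-T_2T_1^{q-1}$ --- whose zero counts match (Proposition~\ref{atteint}). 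So the verification you anticipate is not ``$f$ has the right bidegree after reduction'' but rather a case-by-case matching of the combinatorial minimum to an explicit family of extremal polynomials.
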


The cardinality of $\Delta^*(\delta_T,\delta_X)$ depends on the parameter $\eta$, the bidegree $(\delta_T,\delta_X)$ and the size $q$ of the finite field.

\smallskip

The pullback of homogeneous polynomials of degree $\delta_X$ on $S(a_1,a_2) \subset \PP^r$ studied by C. Carvalho and V. G.L. Neumann are polynomials of bidegree $(a_2\delta_X,\delta_X)$ on $\mathcal{H}_{a_1-a_2}$. C. Carvalho and V. G.L. Neumann gave a lower bound of the minimum distance that we prove to be reached since it matches the parameters we establish here. The parameters also coincide with the one given by A. Couvreur and I. Duursma \cite{CD} in the case of the biprojective space $\PP^1 \times \PP^1$, isomorphic to Hirzebruch surface $\mathcal{H}_0$. 

It is worth pointing out that the codes $C_\eta(\delta_T,\delta_X)$ with $\delta_T$ negative have never been studied until now. Although this case is intricate when the parameter $\eta$ divides $\delta_T$ and the situation (\ref{H}) occurs, it brings the ideal $\mathcal{I}$ to light as an example of a non binomial ideal on the toric variety $\mathcal{H}_\eta$.

The last section highlights an interesting feature of these codes which leads to a good puncturing. It results codes of length $q(q+1)$ but with identical dimension and minimum distance.

\smallskip

We emphasize that the lower bound of the minimum distance in this paper does not result from upper bound of the number of rational point of embedded curves but from purely algebraic and combinatoric considerations. This approach, already highlighted by Couvreur and Duursma \cite{CD}, stands out from the general idea that one would estimate the parameters of an evaluation code on a variety $X$ though the knowledge of features of $X$, like some cohomology groups for the dimension or the number of rational points of subvarieties of $X$ for the minimum distance. It also offers the great perspective of solving geometric problems thanks to coding theory results. Moreover, the non injectivity of the evaluation map means that there exists a \emph{filling curve}, i.e. a curve that contains every $\F_q$-rational point of $\mathcal{H}_\eta$. From a number theoretical point of view, the minimum distance provides an upper bound of the number of $\F_q$-rational points of a non filling curve, regardless of its geometry and its smoothness, even if there exist some filling curves.

\section{Defining evaluation codes on Hirzebruch surfaces}\label{secdebut}

\subsection{Hirzebruch surfaces}

We gather here some results about Hirzebruch surfaces over a field $k$, given in \cite{CoxToric} for instance.

Let $\eta$ be a non negative integer. The \emph{Hirzebruch surface} $\mathcal{H}_\eta$ can be considered from different points of view.

\medskip

On one hand, the Hirzebruch surface $\mathcal{H}_\eta$ is the toric variety corresponding to the fan $\Sigma_\eta$ (see Figure \ref{f1}).

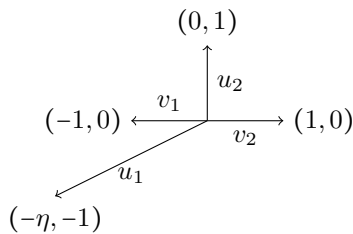
\begin{figure}[h]
\centering
\begin{tikzpicture}
\draw [->] (0,0) -- (0,1) node[above]{$(0,1)$} node[midway,right]{$u_2$};;
\draw [->] (0,0) -- (1,0) node[right]{$(1,0)$} node[midway,below]{$v_2$} ;
\draw [->] (0,0) -- (-1,0) node[left]{$(-1,0)$} node[midway,above]{$v_1$} ;
\draw [->] (0,0) -- (-2,-1) node[below]{$(-\eta,-1)$}node[midway,below]{$u_1$};
\end{tikzpicture}
\caption{Fan $\Sigma_\eta$}\label{f1}
\end{figure}

The fan $\Sigma_\eta$ being a refining of the one of $\PP^1$, it yields a ruling $\mathcal{H}_\eta \rightarrow \PP^1$ of fiber $\mathcal{F} \simeq \PP^1$ and section $\sigma$. The torus-invariant divisors $D_1$, $D_2$, $E_1$ and $E_2$ corresponding to the rays spanned respectively by $v_1$, $v_2$, $u_1$, $u_2$ generate the Picard group of $\mathcal{H}_\eta$, described in the following proposition.

\begin{prop}
The Picard group of the Hirzebruch surface $\mathcal{H}_\eta$ is the free Abelian group
\[\Pic  \mathcal{H}_\eta= \Z \mathcal{F} + \Z \sigma\]
where
\begin{equation}\label{gen}
\mathcal{F} = E_1 \sim E_2 \text{ and }
\sigma = D_2 \sim D_1 + \eta E_1.
\end{equation}
We have the following intersection matrix.
\[\begin{array}{c|cc}& \mathcal{F} & \sigma \\
\hline
\mathcal{F} & 0 & 1 \\
\sigma & 1 & \eta \end{array}\]
\end{prop}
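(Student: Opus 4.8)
The plan is to verify the proposition by directly computing in the toric geometry of the fan $\Sigma_\eta$, using the standard dictionary between torus-invariant divisors, their linear relations, and the intersection theory on a smooth complete toric surface. I would proceed in three stages: first establish the linear equivalences among the four boundary divisors $D_1, D_2, E_1, E_2$, then deduce that $\mathcal{F}$ and $\sigma$ freely generate $\Pic \mathcal{H}_\eta$, and finally compute the four entries of the intersection matrix.

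For the first stage I would use the fact that on a toric variety the divisor of a character $\chi^m$ for $m \in M = \Z^2$ is $\operatorname{div}(\chi^m) = \sum_\rho \langle m, u_\rho\rangle D_\rho$, summing over rays $\rho$ with primitive generators $u_\rho$. With the generators $v_1 = (-1,0)$, $v_2 = (1,0)$, $u_1 = (-\eta,-1)$, $u_2 = (0,1)$ read off from Figure \ref{f1}, taking $m = e_1^* = (1,0)$ gives $-D_1 + D_2 - \eta E_1 \sim 0$, i.e. $D_2 \sim D_1 + \eta E_1$, and taking $m = e_2^* = (0,1)$ gives $-E_1 + E_2 \sim 0$, i.e. $E_1 \sim E_2$. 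This yields the relations (\ref{gen}) with $\mathcal{F} := E_1$ and $\sigma := D_2$. Since $\mathcal{H}_\eta$ is a smooth complete toric surface, $\Pic \mathcal{H}_\eta \cong \Z^{\#\text{rays}} / M = \Z^4 / \Z^2$ is free of rank $2$, and the two relations just found express $D_1$ and $E_2$ in terms of $\mathcal{F}$ and $\sigma$, so $\{\mathcal{F}, \sigma\}$ is a basis; one checks there is no further relation by observing the rank is exactly $2$.

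For the intersection matrix I would compute $\mathcal{F}^2$, $\mathcal{F}\cdot\sigma$, and $\sigma^2$ by choosing convenient representatives among the $D_\rho$ and using that $D_\rho \cdot D_{\rho'} = 1$ when $\rho, \rho'$ span a cone of $\Sigma_\eta$ and $0$ when they are non-adjacent rays, together with the self-intersection formula coming from the local linear relation at each ray. Concretely: $\mathcal{F}^2 = E_1 \cdot E_2 = 0$ since the rays spanned by $u_1$ and $u_2$ are not adjacent in $\Sigma_\eta$; $\mathcal{F}\cdot\sigma = E_1 \cdot D_2$, and the rays of $u_1$ and $v_2$ do span a cone, giving $1$; and $\sigma^2 = D_2 \cdot (D_1 + \eta E_1) = D_1\cdot D_2 + \eta\, E_1 \cdot D_2 = 0 + \eta = \eta$ (here $D_1 \cdot D_2 = 0$ as the rays of $v_1$ and $v_2$ are opposite, hence non-adjacent). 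This gives the stated matrix.

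I do not expect a serious obstacle here: the result is a standard computation in toric geometry, and every ingredient (the character-divisor formula, freeness of the Picard group of a smooth complete toric variety, and the combinatorial intersection rules for toric surfaces) may be quoted from \cite{CoxToric}. The only point requiring mild care is bookkeeping the orientation and labelling of the rays in Figure \ref{f1} so that the adjacency relations among cones — and hence which products vanish — are read off correctly; once that is fixed, the computation is mechanical.
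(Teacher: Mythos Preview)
Your proposal is correct and is exactly the standard toric computation one would expect; the paper itself does not give a proof of this proposition but simply quotes it as a known fact from \cite{CoxToric}. So there is nothing to compare: you have supplied a proof where the paper supplies none, and your argument is the canonical one.
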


As a simplicial toric variety, the surface $\mathcal{H}_\eta$ considered over $k$ carries a \emph{Cox ring} $R=k[T_1,T_2,X_1,X_2]$. Each monomial $M=T_1^{c_1} T_2 ^{c_2} X_1^{d_1} X_2^{d_2}$ of $R$ is associated to a torus-invariant divisor 
\begin{equation}\label{DIV}D_M=d_1 D_1 +d_2 D_2 + c_1 E_1 + c_2 E_2.\end{equation}
The \emph{degree} of the monomial $M$ is defined as the Picard class of the divisor $D_M$. The couple of coordinates $(\delta_T,\delta_X)$ of $D_M$ in the basis $(\mathcal{F},\sigma)$ is called the \emph{bidegree} of $M$ and denoted by $\bideg(M)$. By (\ref{gen}) and (\ref{DIV}),
\begin{equation}\label{systdelta}
\left\{\begin{array}{rl} \delta_T&=c_1+c_2-\eta d_1,\\
\delta_X&=d_1+d_2.\end{array}\right.
\end{equation}
It is convenient to set 
\[\delta=\delta_T+\eta \delta_X.\]

This gives the $\Z^2$-grading on $R$
\[R=\bigoplus_{(\delta_T,\delta_X) \in \Z^2} R(\delta_T,\delta_X)\]
where $R(\delta_T,\delta_X) \simeq H^0(\mathcal{H}_\eta, \mathcal{O}_{\mathcal{H}_\eta}(\delta_T \mathcal{F} +\delta_X \sigma))$ is the $k$-module of homogeneous polynomials of bidegree $(\delta_T,\delta_X)\in \Z^2$. Note that the $\F_q$-module $R(\delta_T,\delta_X)$ is non zero if and only if $\delta_X \in \N$ and $\delta \in \N$.

\medskip

On the other hand, the Hirzebruch surface can be displayed as a geometric quotient of an affine variety under the action of an algebraic group (\cite{CoxToric} Theorem $5.1.11$ ). This description is given for instance by M. Reid \cite{Reid}.

Let us define an action of the product of multiplicative groups $\Gb_m \times \Gb_m$ over $\left(\A^2 \setminus \{(0,0)\}\right) \times \left(\A^2 \setminus \{(0,0)\}\right)$: write $(t_1,t_2)$ for the first  coordinates on $\A^2$, $(x_1,x_2)$ on the second coordinates on $\A^2$ and $(\lambda,\mu)$ for elements of $\Gb_m \times \Gb_m$. The action is
given as follows:
\[(\lambda,\mu) \cdot(t_1,t_2,x_1,x_2)=(\lambda t_1 , \lambda t_2, \mu\lambda^{-\eta} x_1,\mu x_2).\]
Then the Hirzebruch surface $\mathcal{H}_\eta$ is isomorphic to the geometric quotient 
\[\left(\A^2 \setminus \{(0,0)\}\right) \times \left(\A^2 \setminus \{(0,0)\}\right) / \Gb_m^2.\]

This description enables us to describe a point of $\mathcal{H}_\eta$ by its homogeneous coordinates $(t_1,t_2,x_1,x_2)$.

\bigskip

In this paper, we focus only on minimal Hirzebruch surfaces. A surface is minimal if it contains no $-1$ curve. We recall the following well-known result about minimal Hirzebruch surface.

\begin{theorem}[\cite{LP}]
The Hirzeburch surface $\mathcal{H}_\eta$ is minimal if and only if $\eta \neq 1$.
\end{theorem}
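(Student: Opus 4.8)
The plan is to use the intersection matrix of the proposition above together with the definition of a $-1$-curve as an irreducible curve $C \cong \PP^1$ with $C^2 = -1$, and to show that the only irreducible curve on $\mathcal{H}_\eta$ carrying negative self-intersection is the torus-invariant divisor $D_1$, whose self-intersection turns out to be exactly $-\eta$. Minimality then fails precisely when $-\eta = -1$.

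First I would treat the "only if" direction. From the relation $\sigma \sim D_1 + \eta E_1$ in (\ref{gen}) together with $E_1 \sim \mathcal{F}$ one gets $D_1 \sim \sigma - \eta\mathcal{F}$, hence by the intersection matrix
\[ D_1^2 = \sigma^2 - 2\eta\,(\mathcal{F}\cdot\sigma) + \eta^2\,\mathcal{F}^2 = \eta - 2\eta = -\eta. \]
Since $D_1$ is a torus-invariant prime divisor on a smooth toric surface, it is isomorphic to $\PP^1$; therefore for $\eta = 1$ the curve $D_1$ is a $-1$-curve and $\mathcal{H}_1$ is not minimal. This is of course consistent with the classical description of $\mathcal{H}_1$ as the blow-up of $\PP^2$ at a point, $D_1$ being the exceptional divisor.

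For the "if" direction, assume $\eta \neq 1$ and let $C$ be an arbitrary irreducible curve on $\mathcal{H}_\eta$, with numerical class $C \equiv a\mathcal{F} + b\sigma$. Because $\mathcal{F}$ is the class of a fibre of the ruling $\mathcal{H}_\eta \to \PP^1$, it is base-point free, hence nef, so $b = C\cdot\mathcal{F} \geq 0$. If $C$ is not the prime divisor $D_1$, then $C$ and $D_1$ share no irreducible component, so $C\cdot D_1 \geq 0$; using $D_1 \sim \sigma - \eta\mathcal{F}$ a direct computation with the intersection matrix gives $C\cdot D_1 = a$, whence $a \geq 0$ and
\[ C^2 = (a\mathcal{F}+b\sigma)^2 = 2ab + \eta b^2 \geq 0. \]
If instead $C = D_1$, then $C^2 = -\eta$, which is $0$ when $\eta = 0$ and at most $-2$ when $\eta \geq 2$. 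In every case $C^2 \neq -1$, so $\mathcal{H}_\eta$ carries no $-1$-curve and is minimal.

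The only facts needing care are the standard ones that $\mathcal{F}$ is nef and that two distinct irreducible curves on a smooth surface meet nonnegatively, together with the identification of the "negative section" with the torus-invariant divisor $D_1$; none of these is a real obstacle, so the argument is short once the intersection data of the preceding proposition are available.
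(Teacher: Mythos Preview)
Your argument is correct. Note, however, that the paper does not actually prove this statement: it is quoted as a well-known fact with a reference to \cite{LP}, so there is no ``paper's own proof'' to compare against. What you have written is a clean self-contained verification using only the intersection matrix already displayed in the preceding proposition, which is more than the paper itself supplies.

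The computations check out: $D_1\sim\sigma-\eta\mathcal{F}$ gives $D_1^2=-\eta$, and for any irreducible $C\equiv a\mathcal{F}+b\sigma$ distinct from $D_1$ one indeed gets $C\cdot\mathcal{F}=b\geq 0$ (nefness of the fibre) and $C\cdot D_1=a\geq 0$ (distinct irreducible curves meet nonnegatively), whence $C^2=2ab+\eta b^2\geq 0$. The only curve with negative self-intersection is therefore $D_1$, and $-\eta=-1$ exactly when $\eta=1$. Nothing is missing.
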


\subsection{Evaluation map}\label{SensEval}

We consider now the case $k=\F_q$, $q$ being a power of a prime integer.

\medskip

From the ruling $\mathcal{H}_\eta \rightarrow \PP^1$, the number of $\F_q$ rational points of the Hirzebruch surface $\mathcal{H}_\eta$ is
\[N=\# \mathcal{H}_\eta (\F_q)=(q+1)^2.\]

Let $(\delta_T,\delta_X) \in\Z\times\N$ such that $\delta \geq 0$. Given a polynomial $F \in R(\delta_T,\delta_X)$ and a point $P$ of  $\mathcal{H}_\eta$, \emph{the evaluation of $F$ at $P$} is defined by $F(P) = F(t_1,t_2,x_1,x_2)$, where $(t_1,t_2,x_1,x_2)$ is the only tuple that belongs to the orbit of $P$ under the action of $\Gb_m^2$ and has one of these forms:
\begin{itemize}
\item $(1,a,1,b)$ with $a, \:b \in \F_q$,
\item $(0,1,1,b)$ with $b \in \F_q$,
\item $(1,a,0,1)$ with $a\in \F_q$,
\item $(0,1,0,1)$.
\end{itemize}

The \emph{evaluation code} $C_\eta(\delta_T,\delta_X)$ is defined as the image of the evaluation map
\begin{equation}\label{evaluation}\ev_{(\delta_T,\delta_X)}: \left\{ \begin{array}{rcl} R(\delta_T,\delta_X) & \rightarrow & \F_q^N \\
F & \mapsto & (F(P))_{P \in \mathcal{H}\eta(\F_q)}.
\end{array}\right. \end{equation}
Note that this code is Hamming equivalent to the Goppa code $C(\mathcal{O}_{\mathcal{H}_\eta}(\delta_T \mathcal{F} + \delta_X \sigma),\mathcal{H}_\eta(\F_q))$, as defined by Hansen \cite{Han}. The \emph{weight} $\omega(c)$ of a codeword $c\in C_\eta(\delta_T,\delta_X)$ is the number of non-zero coordinates. The minimum weight among all the non-zero codewords is called the \emph{minimum distance} of the code $C_\eta(\delta_T,\delta_X)$ and is denoted by $d_\eta(\delta_T,\delta_X)$.

\section{Dimension of the evaluation code $C_\eta(\delta_T,\delta_X)$ on the Hirzebruch surface $\mathcal{H}_\eta$}\label{secdim}

Let us consider $\eta \geq 0$ and $(\delta_T,\delta_X) \in \Z \times \N$ such that $\delta=\delta_T+\eta \delta_X \geq 0$.
\begin{notation}
The kernel of the map $\ev_{(\delta_T,\delta_X)}$ is denoted by $\mathcal{I}(\delta_T,\delta_X)$.
\end{notation}

From the classical isomorphism
\[ C_\eta(\delta_T,\delta_X) \simeq \faktor{R(\delta_T,\delta_X)}{\mathcal{I}(\delta_T,\delta_X)},\]
the dimension of the evaluation code $C_\eta(\delta_T,\delta_X)$ equals the dimension of any complementary vector space of $\mathcal{I}(\delta_T,\delta_X)$ in $R(\delta_T,\delta_X)$. This is tantamount to compute the range of a well-chosen projection map on $R(\delta_T,\delta_X)$ along $\mathcal{I}(\delta_T,\delta_X)$.

\subsection{Focus on monomials}

The aim of this section is to display a projection map, denoted by $\pi_{(\delta_T,\delta_X)}$, that would have the good property of mapping a monomial onto a monomial. The existence of such a projection is not true in full generality: given a vector subspace $W$ of a vector space $V$ and a basis $\mathcal{B}$ of $V$, it is not always possible to find a basis of $W$ composed of difference of elements of $\mathcal{B}$ and a complementary space of $W$ which basis is a subset of $\mathcal{B}$. This will be possible here except if (\ref{H}) holds.

With this goal in mind, our strategy is to focus first on monomials of $R(\delta_T,\delta_X)$. Let us define the following equivalence relation on the set of monomials of $R(\delta_T,\delta_X)$.
\begin{definition}
Let us define a binary relation $\equiv$ on the set $\Mon(\delta_T,\delta_X)$ of monomials of $R(\delta_T,\delta_X)$. Let $M_1, \: M_2 \in \Mon(\delta_T,\delta_X)$. We note $M_1 \equiv M_2$ if they have the same evaluation at every $\F_q$-rational point of $\mathcal{H}_\eta$, i.e. 
\[M_1 \equiv M_2 \: \Leftrightarrow \: \ev_{(\delta_T,\delta_X)}(M_1)=\ev_{(\delta_T,\delta_X)}(M_2) \: \Leftrightarrow \: M_1-M_2 \in \mathcal{I}(\delta_T,\delta_X).\]
\end{definition}
This section is intended to prove that, even if this equivalence relation can be defined over all $R(\delta_T,\delta_X)$, the number of equivalence classes when considering all polynomials is the same as when regarding only monomials, unless (\ref{H}) holds. This section thus goals to prove Theorem \ref{DIM}, stated in the introduction.

\subsection{Combinatorial point of view of the equivalence relation on monomials}

Throughout this article, the set $R(\delta_T,\delta_X)$ are pictured as a polygon is $\N\times \N$ of coordinates $(d_2,c_2)$. This point of view, inherited directly from the toric structure, is common in the study of toric codes (\cite{HanToric}, \cite{Joyner}, \cite{Ruano}, \cite{Little}, \cite{SS}). It will be useful to handle the computation of the dimension and the minimum distance as a combinatorial problem.

\begin{definition}\label{polygone}
Let $(\delta_T,\delta_X) \in \Z \times \N$. Let us define the polygon
\[P_D=\{ (a,b) \in \R^2 \: | \: a \geq 0, \: b \geq 0, \: a \leq \delta_X \text{ and } \eta a + b \leq \delta\}\]
associated to the divisor $D=\delta E_1 + \delta_X D_1 \sim \delta_T \mathcal{F} + \delta_X \sigma$ and
\[\mathcal{P}(\delta_T,\delta_X)=P_D \cap \Z^2.\]
\end{definition}

Being intersection of $\Z^2$ with half planes, it is easily seen that $\mathcal{P}(\delta_T,\delta_X)$ is the set of lattice points of the polygon $P_D$, which vertices are
\begin{itemize}
\item $(0,0),(\delta_X,0),(\delta_X,\delta_T), (0,\delta)$ if $\delta_T > 0$,
\item $(0,0),(\frac{\delta}{\eta},0),(0,\delta)$ if $\delta_T < 0$ and $\eta > 0$ or $\delta_T=0$.
\end{itemize}
Note that $P_D$ is a lattice polygone except if $\delta_T <0$ and $\eta$ does not divide $\delta_T$.

\begin{notation}\label{notA}
Let us set
\[A=A(\eta,\delta_T,\delta_X)=\min\left(\delta_X,\frac{\delta}{\eta}\right)=\left\{\begin{array}{cl}\delta_X & \text{if } \delta_T \geq 0, \\
\frac{\delta}{\eta}=\delta_X+\frac{\delta_T}{\eta} &\text{otherwise,}\end{array}\right.\]
the $x$-coordinate of the right-most vertices of the polygon $P_D$.
\end{notation}

Let us highlight that  $A$ is not necessary an integer if $\delta_T < 0$. Thus it does not always appear as the first coordinate of an element of $\mathcal{P}(\delta_T,\delta_X)$. It is the case if and only if $\eta \: | \: \delta_T$. If so, the only element of $\mathcal{P}(\delta_T,\delta_X)$ such that $A$ is its first coordinate is $(A,0)$.

We thus observe that 
\begin{equation}\label{Pexplicite}\mathcal{P}(\delta_T,\delta_X)=\{(a,b) \in \N^2 \: | \: a \in \lsem 0 , \lfloor A \rfloor \rsem \text{ and } b \in \lsem 0,\delta_T+\eta(\delta_X-a)\rsem\}.\end{equation}

\begin{figure}[h!]

\begin{subfigure}[b]{0.3\textwidth}
  \centering
  \begin{tikzpicture}[scale=0.4]

    \clip (-3,-2) rectangle (6cm,10cm); % Clips the picture...

    \draw[style=help lines,dashed,dashed,opacity=0.5] (-3,-10) grid[step=1cm] (12,14); % Draws a grid in the new coordinates.

 \draw [thick,-latex] (0,0) -- (0,9) node [left,font=\small] {$c_2$};
    \draw [thick,-latex] (0,0) -- (5,0) node [below right,font=\small] {$d_2$};
        
    \draw [very thick,blue] (0,0) -- (0,7)  node [left, font=\small] {$\delta=\delta_T$};
    \draw [very thick,blue] (0,0) -- (4,0) node [below, font=\small] {$A=\delta_X$};
    \draw [very thick,blue] (4,0) -- (4,7) ;
   \draw [very thick,blue] (0,7) -- (4,7) ;
   \end{tikzpicture}
\caption{$\eta=0$ \\
e.g. $\mathcal{P}(7,4)$ }
\end{subfigure}
\begin{subfigure}[b]{0.3\textwidth}
  \centering
  \begin{tikzpicture}[scale=0.4]

     \clip (-2,-2) rectangle (7cm,10cm); % Clips the picture...

    \draw[style=help lines,dashed,dashed,opacity=0.5] (-4,-10) grid[step=1cm] (14,14); % Draws a grid in the new coordinates.

 \draw [thick,-latex] (0,0) -- (0,9)  node [left,font=\small] {$c_2$};
    \draw [thick,-latex] (0,0) -- (5,0) node [below right,font=\small] {$d_2$};
        
    \draw [very thick,blue] (0,0) -- (0,8) node [left, font=\small] {$\delta$};
    \draw [very thick,blue] (0,0) -- (3,0) node [below, font=\small] {$A=\delta_X$} ;
       \draw [very thick,blue] (3,2) -- (0,8);
   \draw [very thick,blue] (3,2) -- (3,0) ;
   \end{tikzpicture}
\caption{$\eta >0$, $\delta_T > 0$ \\
e.g. $\mathcal{P}(2,3)$ in $\mathcal{H}_2$}
\end{subfigure}
\begin{subfigure}[b]{0.3\textwidth}
  \centering
  \begin{tikzpicture}[scale=0.4]

      \clip (-2,-2) rectangle (8cm,10cm); % Clips the picture...

    \draw[style=help lines,dashed,dashed,opacity=0.5] (-4,-10) grid[step=1cm] (14,14); % Draws a grid in the new coordinates.

 \draw [thick,-latex] (0,0) -- (0,9) node [left,font=\small] {$c_2$};
    \draw [thick,-latex] (0,0) -- (6,0) node [below right,font=\small] {$d_2$};
        
    \draw [very thick,blue] (0,0) -- (0,8) node [left, font=\small] {$\delta$};
    \draw [very thick,blue] (0,0) -- (4,0) node [below, font=\small] {$A<\delta_X$}  ;
   \draw [very thick,blue] (0,8) -- (4,0) ;
   \end{tikzpicture}
\caption{$\eta >0$, $\delta_T \leq 0$ \\
e.g. $\mathcal{P}(-2,5)$ in $\mathcal{H}_2$}
\end{subfigure}
\caption{Different shapes of the polygon $\mathcal{P}(\delta_T,\delta_X)$}\label{f2}
\end{figure}
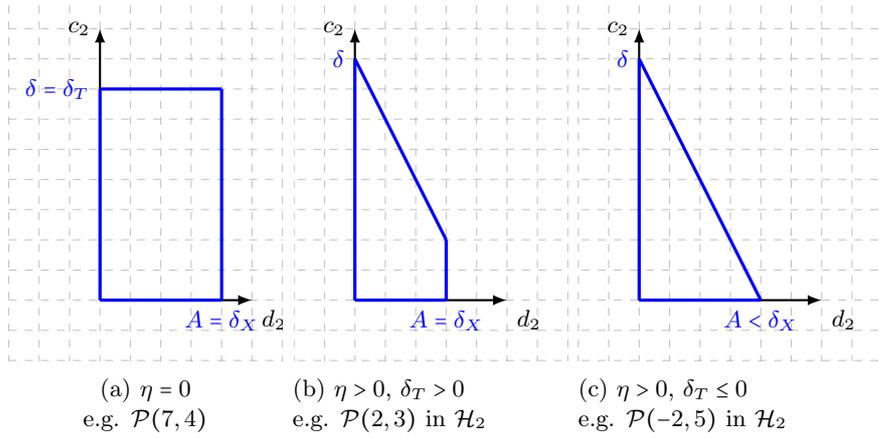

\begin{example}
Figure \ref{f2} gives the three examples of possible shapes of the polygon $\mathcal{P}(\delta_T,\delta_X)$. The first one is the case $\eta=0$, and the last two ones correspond to $\eta > 0$ and depend on the sign of $\delta_T$, which determines the shape of $P_D$. All proofs of explicit formulae in Propositions \ref{dimcode} and \ref{distance} distinguish these cases.
\end{example}

Thanks to (\ref{systdelta}), a monomial of $R(\delta_T,\delta_X)$ is entirely determined by the couple $(d_2,c_2)$. Then each element of $\mathcal{P}(\delta_T,\delta_X)$ corresponds to a unique monomial. More accurately, for any couple $(d_2,c_2) \in \mathcal{P}(\delta_T,\delta_X)$, we define the monomial
\begin{equation}\label{Mdc}M(d_2,c_2)=T_1^{\delta_T+\eta(\delta_X-d_2)-c_2} T_2^{c_2} X_1^{\delta_X-d_2} X_2^{d_2} \in \Mon(\delta_T,\delta_X).\end{equation}

\begin{definition}\label{pont}
The equivalence relation $\equiv$ on  $\Mon(\delta_T,\delta_X)$ and the bijection
\begin{equation}\label{bijection}\left\{\begin{array}{rcl}
\mathcal{P}(\delta_T,\delta_X) & \rightarrow &  \Mon(\delta_T,\delta_X)\\
(d_2,c_2) & \mapsto & M(d_2,c_2) \end{array}\right.\end{equation}
endow $\mathcal{P}(\delta_T,\delta_X)$ with a equivalence relation, also denoted by $\equiv$, such that
\[(d_2,c_2) \equiv (d'_2,c'_2) \: \Leftrightarrow \: M(d_2,c_2) \equiv M(d'_2,c'_2).\]
\end{definition}

\begin{prop}\label{traduc}
Let two couples $(d_2,c_2)$ and $(d'_2,c'_2)$ be in $\mathcal{P}(\delta_T,\delta_X)$ and let us write
\[M=M(d_2,c_2)=T_1^{c_1}T_2^{c_2}X_1^{d_1}X_2^{d_2} \text{ and } M'=M(d'_2,c'_2)=T_1^{c'_1}T_2^{c'_2}X_1^{d'_1}X_2^{d'_2}.\]
Then $(d_2,c_2)\equiv(d'_2,c'_2)$ if and only if
\begin{align}
\tag{C1} \label{D1}  q-1 \: &| \: d_i-d'_i, \\
\tag{C2} \label{D2} q-1 \: &| \: c_j-c'_j, \\
\tag{C3} \label{D3} d_i = 0 \: &\Leftrightarrow d'_i=0, \\
\tag{C4} \label{D4} c_j = 0 \: &\Leftrightarrow c'_j=0.
\end{align}
\end{prop}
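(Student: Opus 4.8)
The plan is to work directly with the homogeneous-coordinate description of the evaluation map. Recall that every $\F_q$-rational point of $\mathcal{H}_\eta$ has a unique representative of one of the four normal forms $(1,a,1,b)$, $(0,1,1,b)$, $(1,a,0,1)$, $(0,1,0,1)$ with $a,b\in\F_q$. So the evaluation of a monomial $M=T_1^{c_1}T_2^{c_2}X_1^{d_1}X_2^{d_2}$ at such a point is simply $a^{c_2}b^{d_2}$ on $(1,a,1,b)$ (with the convention $0^0=1$), $[c_1=0]\,b^{d_2}$ on $(0,1,1,b)$, $[d_1=0]\,a^{c_2}$ on $(1,a,0,1)$, and $[c_1=0][d_1=0]$ on $(0,1,0,1)$. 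Thus $M\equiv M'$ means exactly that the four functions agree for all $a,b\in\F_q$.

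First I would treat the generic points $(1,a,b)$, i.e. compare $a^{c_2}b^{d_2}$ and $a^{c'_2}b^{d'_2}$ as functions on $\F_q\times\F_q$. Using that the function $t\mapsto t^m$ on $\F_q$ depends only on $m$ modulo $q-1$ together with whether $m=0$ (since $t^0\equiv 1$ while $t^m$ with $(q-1)\mid m$, $m>0$, vanishes at $t=0$), one gets that these two functions agree on all of $\F_q$ iff ($d_2=0\Leftrightarrow d'_2=0$), ($c_2=0\Leftrightarrow c'_2=0$), $(q-1)\mid d_2-d'_2$ and $(q-1)\mid c_2-c'_2$; here one should be slightly careful to separate the variables, e.g. by specializing $a=1$ to isolate the $b$-dependence and vice versa, then checking the joint statement. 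This yields the "second-coordinate" halves of (C1)–(C4). Then I would feed in the three remaining families of points: evaluating on $(0,1,1,b)$ for all $b$ forces $[c_1=0]b^{d_2}=[c'_1=0]b^{d'_2}$, which (given we already know the $d_2$ relations) forces $c_1=0\Leftrightarrow c'_1=0$; symmetrically $(1,a,0,1)$ gives $d_1=0\Leftrightarrow d'_1=0$; and $(0,1,0,1)$ adds nothing new. Finally I would use the bidegree relations (\ref{systdelta}), namely $d_1=\delta_X-d_2$, $c_1=\delta_T+\eta(\delta_X-d_2)-c_2$, to convert the conditions on $c_1,d_1$ back into the stated conditions on all indices: since $d_1+d_2=d'_1+d'_2=\delta_X$ and $c_1+c_2=c'_1+c'_2=\delta_T+\eta d_1$... (one must note $d_1-d'_1=-(d_2-d'_2)$, so $(q-1)\mid d_1-d'_1$ is automatic, and $c_1-c'_1=-\eta(d_1-d'_1)-(c_2-c'_2)$ need not be divisible by $q-1$ a priori — hence (C2) really is a condition on \emph{both} $c_1$ and $c_2$, not just one). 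This bookkeeping is what produces the indexed statements (C1)–(C4) over $i,j\in\{1,2\}$.

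For the converse direction, assuming (C1)–(C4), I would just re-run the evaluations: on $(1,a,b)$, conditions (C1)–(C4) guarantee $a^{c_2}b^{d_2}=a^{c'_2}b^{d'_2}$ (the vanishing conditions handle $a=0$ or $b=0$, the congruences handle the units); on the boundary points the indicator functions $[c_1=0]$, $[d_1=0]$ match by (C3)–(C4), and the surviving power factors match by (C1)–(C2). Hence $M-M'\in\mathcal{I}(\delta_T,\delta_X)$, i.e. $(d_2,c_2)\equiv(d'_2,c'_2)$.

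The main obstacle I anticipate is not conceptual but organizational: cleanly justifying the "separation of variables" step — that agreement of $a^{c_2}b^{d_2}$ with $a^{c'_2}b^{d'_2}$ as a two-variable function is equivalent to the four componentwise conditions — and then keeping straight which of the exponents $c_1,c_2,d_1,d_2$ are constrained directly versus forced through the linear relations (\ref{systdelta}). In particular one should double-check the edge cases where $\delta_T<0$ or $\eta=0$ so that some exponent is automatically forced to take a boundary value; the statement as phrased over $i,j\in\{1,2\}$ is designed precisely to be uniform across these cases, and verifying that uniformity is where a little care is needed. Everything else is the elementary fact about power functions on $\F_q$ recalled above.
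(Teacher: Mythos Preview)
Your approach is essentially identical to the paper's: evaluate at the four normal-form families to pin down the conditions on $c_2,d_2$ and the zero/nonzero status of $c_1,d_1$, then use the bidegree relations to transfer divisibility to $c_1,d_1$. One small slip to fix: you write ``$c_1+c_2=c'_1+c'_2=\delta_T+\eta d_1$'' and then claim $c_1-c'_1$ ``need not be divisible by $q-1$ a priori'', but in fact $c_1+c_2=\delta_T+\eta d_1$ while $c'_1+c'_2=\delta_T+\eta d'_1$, so $c_1-c'_1=\eta(d_1-d'_1)-(c_2-c'_2)$, which \emph{is} divisible by $q-1$ once (C1) and (C2) for $i=j=2$ are known---exactly as in the paper's proof.
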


\begin{proof}The conditions (\ref{D1}), (\ref{D2}), (\ref{D3}) and (\ref{D4}) clearly imply that $M(d_2,c_2) \equiv M(d'_2,c'_2)$, hence $(d_2,c_2)\equiv(d'_2,c'_2)$. To prove the converse, assume that $M \equiv M'$ and write
\[M=M(d_2,c_2)=T_1^{c_1}T_2^{c_2}X_1^{d_1}X_2^{d_2} \text{ and } M'=M(d'_2,c'_2)=T_1^{c'_1}T_2^{c'_2}X_1^{d'_1}X_n^{d'_2}.\]

Let $x \in \F_q$. Then $M(1,x,1,1)=M'(1,x,1,1)$, which means
$x^{c_2}=x^{c'_2}$. But this equality is true for any element $x$ of $\F_q$ if and only if $(T_2^q-T_2) \: | \: (T_2^{c_2}-T_2^{c'_2})$. This is equivalent to $c_2=c'_2=0$ or $c_2c'_2 \neq 0$ and $T_2^{q-1} - 1 \: | \: T_2^{c'_2-1}(T_2^{c_2-c'_2} - 1)$, which proves  (\ref{D2}) and (\ref{D4}) for $i=2$.

Repeating this argument evaluating at $(1,1,1,x)$ for every $x \in \F_q$ gives $q-1 \: | \: d_2-d'_2$ and $d_2=0$ if and only if $d'_2=0$, i.e. (\ref{D1}) and (\ref{D3}) for $i=2$.

Moreover, we have $d_1+d_2=d'_1+d'_2=\delta_X$, which means that $q-1 \: | \: d_2-d'_2$ if and only if $q-1 \: | \: d_1-d'_1$. Evaluating at $(1,1,0,1)$ gives $0^{d_1}=0^{d'_1}$. Then $d_1=0$ if and only if $d'_1=0$. This proves (\ref{D1}) and (\ref{D3}) for $i=1$.

It remains the case of $c_1$ and $c'_1$. We have
\[c_1-c'_1=c'_2-c_2 -\eta (d'_1-d_1)\]
and $q-1$ divides $c_2-c'_2$ and $d'_1-d_1$. Then it also divides $c_1-c'_1$. Evaluating at $(0,1,1,1)$ yields like previously $c_1=0$ of and only if $c'_1=0$.
\end{proof}

\begin{remark}
The conditions of Lemma \ref{traduc} also can be written
\begin{align}
c_i=c'_i=0 &\text{ or } c_ic'_i \neq 0 \text{ and } q-1 \: | \: c'_i-c_i,\\
d_i=d'_i=0 &\text{ or } d_id'_i \neq 0 \text{ and } q-1 \: | \: d'_i-d_i.
\end{align}
Besides, the conditions involving $q$ are always satisfied for $q=2$.
\end{remark}

\begin{obs}\label{seul}
The conditions (\ref{D3}) and (\ref{D4}) mean that a point of $\mathcal{P}(\delta_T,\delta_X)$ lying on an edge of $P_D$ can be equivalent only with a point lying on the same edge. Therefore the equivalence class of a vertex of $P_D$ is a singleton.
\end{obs}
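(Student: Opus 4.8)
The plan is to reinterpret the condition ``$(d_2,c_2)$ lies on an edge of $P_D$'' as the vanishing of one of the four exponents of the monomial $M(d_2,c_2)$, and then to read the conclusion straight off conditions (\ref{D3}) and (\ref{D4}) of Proposition \ref{traduc}.

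First I would set up the dictionary between edges and exponents. Writing $M(d_2,c_2)=T_1^{c_1}T_2^{c_2}X_1^{d_1}X_2^{d_2}$ as in (\ref{Mdc}), so that $d_1=\delta_X-d_2$ and $c_1=\delta_T+\eta(\delta_X-d_2)-c_2$, the four inequalities cutting out $P_D$ --- namely $a\ge 0$, $b\ge 0$, $a\le\delta_X$, $\eta a+b\le\delta$, with $(a,b)=(d_2,c_2)$ --- become exactly $d_2\ge 0$, $c_2\ge 0$, $d_1\ge 0$, $c_1\ge 0$. Hence a lattice point $(d_2,c_2)\in\mathcal{P}(\delta_T,\delta_X)$ lies on a prescribed edge of $P_D$ precisely when the matching one of $d_1,d_2,c_1,c_2$ vanishes; when $\delta_T\le 0$ the ``edge'' $\{d_1=0\}$ carries no lattice point, so there is nothing to check there.

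Then, suppose $(d_2,c_2)\equiv(d_2',c_2')$ and that $(d_2,c_2)$ lies on an edge of $P_D$, say the one with $d_2=0$. By (\ref{D3}) applied with $i=2$ we get $d_2'=0$, so $(d_2',c_2')$ lies on the same edge; the three remaining edges are handled identically, using (\ref{D3}) with $i=1$ or (\ref{D4}) with $j=1,2$. This proves the first assertion. For the second, a vertex $v$ of $P_D$ is the intersection of two such edges $e$ and $e'$; by the first assertion any point equivalent to $v$ lies in $e$ and in $e'$, hence in $e\cap e'=\{v\}$ by convexity of $P_D$, and since $v\equiv v$ the class of $v$ is exactly $\{v\}$.

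The argument is essentially bookkeeping, so I do not expect a genuine obstacle; the only things that require a little care are pairing each edge with the correct exponent and keeping track of the degenerate shapes of $P_D$ (three edges when $\delta_T\le 0$, and a segment or a point for trivial bidegrees), none of which affects the reasoning.
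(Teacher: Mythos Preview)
Your proposal is correct and is precisely the intended reasoning: the paper states this as an observation with no proof, and the dictionary you set up between the four edges of $P_D$ and the vanishing of the four exponents $d_1,d_2,c_1,c_2$, followed by a direct application of (\ref{D3}) and (\ref{D4}), is exactly what the authors have in mind. Your handling of the degenerate shapes of $P_D$ is more careful than anything the paper makes explicit, but nothing further is needed.
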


To prove that the number of equivalence classes equals the dimension of the code $C_\eta(\delta_T,\delta_X)$ as stated in Theorem \ref{DIM} (unless (\ref{H}) holds), we goal to choose a set $\mathcal{K}(\delta_T,\delta_X)$ of representatives of the equivalence classes of $\mathcal{P}(\delta_T,\delta_X)$ under the relation $\equiv$, which naturally gives a set of representatives $\Delta(\delta_T,\delta_X)$ for $\Mon(\delta_T,\delta_X)$ under the binary relation $\equiv$.

\begin{notation}\label{notrep}
Let $(\delta_T,\delta_X) \in \Z \times \N$ and $q \geq 2$. Let us set
\[\begin{aligned}\mathcal{A}_X&=\{ \alpha \in \N \: | \: 0 \leq \alpha \leq \min(\lfloor A \rfloor, q -1)\}\cup\{A\} \cap \N,\\
\mathcal{K}(\delta_T,\delta_X)&=\left\lbrace  (\alpha,\beta) \in \N^2 \: \left| \:
\begin{array}{c} \alpha \in \mathcal{A}_X \\
0 \leq \beta \leq \min(\delta -\eta\alpha,q)-1 \text{ or } \beta=\delta -\eta\alpha\\
\end{array}\right.\right\rbrace,\\
\Delta(\delta_T,\delta_X)&=\{ M(\alpha,\beta) \: | \: (\alpha,\beta) \in \mathcal{K}(\delta_T,\delta_X)\}.\end{aligned}\]
\end{notation}

Notice that $\mathcal{K}(\delta_T,\delta_X)$ is nothing but $\mathcal{P}(\delta_T,\delta_X)$ cut out by the set
\[\left(\{d_2 \leq q-1\} \cup \{d_2=A\} \right) \cap \left(\{c_2 \leq q-1\} \cup \{c_2=\delta-\eta d_2)\} \right).\]

\begin{example}\label{firstexample}
Let us set $\eta=2$ and $q=3$. Let us sort the monomials of $\mathcal{M}(-2,5)$, grouping the ones with the same image under $\ev_{(-2,5)}$, using Proposition \ref{traduc}.

Figure \ref{f4} represents the set $\mathcal{K}(-2,5)$. Note that for each couple $(d_2,c_2) \in\mathcal{K}(-2,5)$, there is exactly one of these groups that contains the monomial $M(d_2,c_2)$.

\[\begin{array}{c|c}
\text{Exponents }(c_1,c_2,d_1,d_2) \text{ of } T_1^{c_1}T_2^{c_2}X_1^{d_1}X_2^{d_2} & \text{Couple in }\mathcal{K}(-2,5)\\
\hline
(8,0,5,0)&(0,0)\\
(7,1,5,0) \sim (5,3,5,0) \sim (3,5,5,0) \sim (1,7,5,0)&(0,1)\\
(6,2,5,0) \sim (4,4,5,0) \sim (2,6,5,0)&(0,2)\\
(0,8,5,0)&(0,8)\\
(6,0,4,1) \sim (2,0,2,3)&(1,0)\\
(5,1,4,1) \sim (3,3,4,1) \sim (1,5,4,1) \sim (1,1,2,3) & (1,1)\\
(4,2,4,1) \sim (2,4,4,1) & (1,2)\\
(0,6,4,1) \sim (0,2,2,3) & (1,6)\\
(4,0,3,2) & (0,2)\\
(3,1,3,2) \sim (1,3,3,2) & (2,1)\\
(2,2,3,2) & (2,2)\\
(0,4,3,2) & (2,4)\\
(0,0,1,4) & (4,0)
\end{array}\]
\end{example}

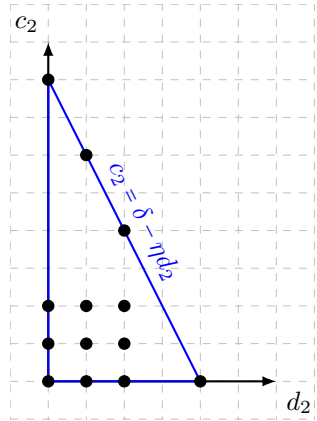
\begin{figure}[h!]
  \centering
  \begin{tikzpicture}[scale=0.5]

    \clip (-1,-1) rectangle (7cm,10cm); % Clips the picture...

    \draw[style=help lines,dashed,dashed,opacity=0.5] (-1,-10) grid[step=1cm] (14,14); % Draws a grid in the new coordinates.

 \draw [thick,-latex] (0,0) -- (0,9) node [above left] {$c_2$};
    \draw [thick,-latex] (0,0) -- (6,0) node [below right] {$d_2$};
        
    \draw [thick,blue] (0,0) -- (0,8);
    \draw [thick,blue] (0,0) -- (4,0) ;
   \draw [thick,blue] (0,8) -- (4,0) node [midway,above,sloped] {$c_2=\delta-\eta d_2$} ;
   
 \foreach \x in {0,1,2}{                           % Two indices running over each
      \foreach \y in {0,1,2,8-2*\x}{                       % node on the grid we have drawn 
        \node[draw,circle,inner sep=1.5pt,fill,black] at (\x,\y) {}; % Places a dot at those points
      }
    }   
   \node[draw,circle,inner sep=1.5pt,fill,black] at (4,0) {};
  \end{tikzpicture}
\caption{Dots in $\mathcal{P}(-2,5)$ correspond to elements of $\mathcal{K}(-2,5)$.}\label{f4}
\end{figure}

Motivated by Example \ref{firstexample}, we give a map that displays $\mathcal{K}(\delta_T,\delta_X)$ as a set of representatives of $\mathcal{P}(\delta_T,\delta_X)$ under the equivalence relation $\equiv$.

\begin{definition}\label{proj}
Let us set the map $p_{(\delta_T,\delta_X)}:\mathcal{P}(\delta_T,\delta_X) \rightarrow \mathcal{P}(\delta_T,\delta_X)$ such that for every couple ${(d_2,c_2) \in \mathcal{P}(\delta_T,\delta_X)}$ its image $p_{(\delta_T,\delta_X)}(d_2,c_2)=(d'_2,c'_2)$ is defined as follows.
\begin{itemize}
\item If $d_2 =0$ or $d_2=A$, then $d'_2=d_2$,
\item Otherwise, we choose $d'_2 \equiv d_2 \mod q-1$ with $1 \leq d'_2 \leq q-1$.
\end{itemize}
and
\begin{itemize}
\item If $c_2 =0$, then $c'_2=0$,
\item If $c_2=\delta-\eta d_2$, then $c'_2=\delta-\eta d'_2$,
\item Otherwise, we choose $c'_2 \equiv c_2 \mod q-1$ with $1 \leq c'_2 \leq q-1$.
\end{itemize}
\end{definition}

\begin{prop}\label{equi} 
\begin{enumerate}
\item The map $p_{(\delta_T,\delta_X)}$ induces a bijection from the quotient set $\mathcal{P}(\delta_T,\delta_X)/\equiv $ to $\mathcal{K}(\delta_T,\delta_X)$.
\item The set $\mathcal{K}(\delta_T,\delta_X)$ is a set of representatives of $\mathcal{P}(\delta_T,\delta_X)$ under the equivalence relation $\equiv$.
\item The set $\Delta(\delta_T,\delta_X)$ is a set of representatives of $\Mon(\delta_T,\delta_X)$ under the equivalence relation $\equiv$.
\end{enumerate}
\end{prop}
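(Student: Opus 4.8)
The plan is to derive all three assertions from four properties of the map $p:=p_{(\delta_T,\delta_X)}$ of Definition~\ref{proj}:
\emph{(a)} $\mathcal{K}(\delta_T,\delta_X)\subseteq\mathcal{P}(\delta_T,\delta_X)$ and $p\bigl(\mathcal{P}(\delta_T,\delta_X)\bigr)\subseteq\mathcal{K}(\delta_T,\delta_X)$;
\emph{(b)} $p$ fixes every point of $\mathcal{K}(\delta_T,\delta_X)$;
\emph{(c)} $p(d_2,c_2)\equiv(d_2,c_2)$ for every $(d_2,c_2)\in\mathcal{P}(\delta_T,\delta_X)$;
\emph{(d)} $(d_2,c_2)\equiv(d_2',c_2')$ implies $p(d_2,c_2)=p(d_2',c_2')$.
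Granting these, the proposition is formal. By~\emph{(d)} the map $p$ descends to $\bar p\colon\mathcal{P}(\delta_T,\delta_X)/\!\equiv\ \longrightarrow\ \mathcal{K}(\delta_T,\delta_X)$ (the target being correct by~\emph{(a)}); by~\emph{(c)} the composite of $\bar p$ with $\mathcal{K}(\delta_T,\delta_X)\hookrightarrow\mathcal{P}(\delta_T,\delta_X)\twoheadrightarrow\mathcal{P}(\delta_T,\delta_X)/\!\equiv$ is the identity, hence $\bar p$ is injective; and by~\emph{(b)} it is surjective. This gives~(1). Then~(2) is immediate: $p(d_2,c_2)\in\mathcal{K}(\delta_T,\delta_X)$ represents the class of $(d_2,c_2)$ by~\emph{(a)} and~\emph{(c)}, and if two points of $\mathcal{K}(\delta_T,\delta_X)$ belonged to the same class, applying $p$ and invoking~\emph{(d)} and~\emph{(b)} would force them equal. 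Finally~(3) follows from~(2): by Definition~\ref{pont} the bijection $(d_2,c_2)\mapsto M(d_2,c_2)$ transports $\equiv$ on $\mathcal{P}(\delta_T,\delta_X)$ to $\equiv$ on $\Mon(\delta_T,\delta_X)$, so it carries the system of representatives $\mathcal{K}(\delta_T,\delta_X)$ to the system of representatives $\Delta(\delta_T,\delta_X)$.

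Properties~\emph{(c)} and~\emph{(d)} rest on Proposition~\ref{traduc}. Writing $(d_1,c_1)=(\delta_X-d_2,\ \delta-\eta d_2-c_2)$ and likewise for the primed data, observe that the edge conditions of $P_D$, namely $d_2=0$, $d_1=0$, $c_2=0$, $c_1=0$, are exactly the vanishing statements in~(\ref{D3})--(\ref{D4}), and that the branches defining $p$ are keyed to them: ``$d_2=0$'', ``$d_2=A$'' (which means $d_1=0$ when $A=\delta_X$, and otherwise occurs only at the vertex $(A,0)$ -- the unique lattice point of $P_D$ with that first coordinate, whose $\equiv$-class is a singleton by Observation~\ref{seul}), ``$c_2=0$'', and ``$c_2=\delta-\eta d_2$'' (i.e.\ $c_1=0$). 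For~\emph{(c)}: since $p$ changes each coordinate only by a multiple of $q-1$, the congruences~(\ref{D1})--(\ref{D2}) hold for all four exponents by linearity, while~(\ref{D3})--(\ref{D4}) are read off branch by branch. For~\emph{(d)}: by Observation~\ref{seul} two $\equiv$-equivalent points lie on the same edges of $P_D$ (or both in its interior), so $p$ applies the same branch to each, and every branch returns a value that depends only on the common residue modulo $q-1$ together with $A$, $\delta$, $\eta$.

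The real work, I expect, is~\emph{(a)} -- specifically showing that $p$ maps into $\mathcal{K}(\delta_T,\delta_X)$. The inclusion $\mathcal{K}(\delta_T,\delta_X)\subseteq\mathcal{P}(\delta_T,\delta_X)$ is a routine comparison of defining inequalities. For the image of $p$, one starts from~(\ref{Pexplicite}) and the elementary fact that reducing a positive integer modulo $q-1$ into $\{1,\dots,q-1\}$ never increases it; thus for $(d_2',c_2')=p(d_2,c_2)$ one gets $d_2'\le d_2\le\lfloor A\rfloor$, hence $d_2'\in\mathcal{A}_X$, and $\delta-\eta d_2'\ge\delta-\eta d_2\ge c_2\ge c_2'$, from which $c_2'$ is checked to satisfy one of the two alternatives defining $\mathcal{K}(\delta_T,\delta_X)$. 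The delicate subcases are $A\ge q$, where $A\notin\{0,\dots,q-1\}$ is captured only by the clause $\{A\}\cap\N$ in $\mathcal{A}_X$ and where $d_2=A$ forces $c_2=0$, together with the separate treatment of the three shapes of $P_D$ according to the sign of $\delta_T$ (Figure~\ref{f2}). Once~\emph{(a)}--\emph{(d)} are secured, assertions~(1)--(3) follow by the formal argument above.
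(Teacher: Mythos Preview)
Your approach coincides with the paper's: both establish that $p$ has image $\mathcal{K}(\delta_T,\delta_X)$, fixes it pointwise, and satisfies $p(d_2,c_2)\equiv(d_2,c_2)$, then read off the three assertions formally. Two remarks are worth making.

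First, your property~\emph{(d)} (constancy of $p$ on $\equiv$-classes, equivalently pairwise inequivalence of points in $\mathcal{K}(\delta_T,\delta_X)$) is an addition the paper does not make explicit: after establishing \emph{(a)}--\emph{(c)} the paper simply writes ``This proves the first item.'' Your formal derivation of~(1) from \emph{(a)}--\emph{(d)} is cleaner.

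Second, you misplace the technical effort. The paper dispatches your ``real work'' \emph{(a)} in a single clause (``clear by definitions of $\mathcal{K}(\delta_T,\delta_X)$ and $p_{(\delta_T,\delta_X)}$'') and instead devotes the bulk of its proof to your \emph{(c)}, specifically to the backward implication $c'_2=0\Rightarrow c_2=0$ in~(\ref{D4}). This is not quite ``read off branch by branch'': in the branch $c_2=\delta-\eta d_2$ the definition gives $c'_2=\delta-\eta d'_2$, and $c'_2=0$ then forces $d'_2=\delta/\eta=A$ (so $\delta_T\le 0$); one must now argue from the $d$-branches that this entails $d_2=A$, whence $c_2=\delta-\eta A=0$. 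Your inequality $d'_2\le d_2\le\lfloor A\rfloor$ from the discussion of \emph{(a)} is precisely what closes this, so the ingredients are in your write-up --- only the emphasis differs from the paper's.
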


\begin{proof}
First notice that elements of $\mathcal{K}(\delta_T,\delta_X)$ are invariant under $p_{(\delta_T,\delta_X)}$.

\smallskip

The inclusion $p_{(\delta_T,\delta_X)}(\mathcal{P}(\delta_T,\delta_X)) \subset \mathcal{K}(\delta_T,\delta_X)$ is clear by definitions of $\mathcal{K}(\delta_T,\delta_X)$ (Not. \ref{notrep}) and $p_{(\delta_T,\delta_X)}$ (Def. \ref{proj}). The equality follows from the invariance of $\mathcal{K}(\delta_T,\delta_X)$.

\smallskip

Last, we prove that $p_{(\delta_T,\delta_X)}(d_2,c_2) \equiv (d_2,c_2)$ for every couple $(d_2,c_2) \in \mathcal{P}(\delta_T,\delta_X)$.
Take a couple $(d_2,c_2) \in \mathcal{P}(\delta_T,\delta_X)$ and denote by $(d'_2,c'_2)$ its image under $p_{(\delta_T,\delta_X)}$. We have to prove that $(d_2,c_2)$ and $(d'_2,c_2)$ satisfy all the conditions of Proposition \ref{traduc}.

By definition of $p_{(\delta_T,\delta_X)}$, it is clear that conditions (\ref{D1}), (\ref{D2}), (\ref{D3}), as well as the the forward implication of (\ref{D4}), are true. It remains to prove that $c'_i=0 \: \Rightarrow c_i=0$ for $i \in \{1,2\}$.

Let us prove only the case $i=2$. So assume that $c'_2=0$. Then $c_2=0$ or $c_2=\delta-\eta d_2$. However,
\[c_2=\delta-\eta d_2 \: \Leftrightarrow \: c'_2=\delta-\eta d'_2=0 \: \Leftrightarrow \: d'_2=\frac{\delta}{\eta}.\]
This is only possible when $\delta_T \leq 0$ and then $d'_2=A$. By condition (\ref{D3}), this implies that $d_2=A$ and then $c_2=0$. This proves the first item.

The second assertion is a straightforward consequence of the first one.

Finally the third assertion yields from the definition of the equivalence relation $\equiv$ on $\mathcal{P}(\delta_T,\delta_X)$ via the bijection (\ref{bijection}). 
\end{proof}

\begin{corollary}\label{nbDelta}
The number of equivalence classes $\#\Delta(\delta_T,\delta_X)$ of $\Mon(\delta_T,\delta_X)$ under $\equiv$ is equal to the cardinality of $\mathcal{K}(\delta_T,\delta_X)$.
\end{corollary}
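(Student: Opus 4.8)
The plan is to obtain the statement as an immediate consequence of Proposition \ref{equi}, together with the bookkeeping that links the various sets and quotients introduced so far; no substantial new argument is needed.

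First I would record the chain of bijections that identifies the number of equivalence classes of $\Mon(\delta_T,\delta_X)$ with $\#\mathcal{K}(\delta_T,\delta_X)$. By Definition \ref{pont}, the bijection \eqref{bijection}, $(d_2,c_2)\mapsto M(d_2,c_2)$, carries the relation $\equiv$ on $\mathcal{P}(\delta_T,\delta_X)$ to the relation $\equiv$ on $\Mon(\delta_T,\delta_X)$, so it descends to a bijection $\mathcal{P}(\delta_T,\delta_X)/\equiv\ \longrightarrow\ \Mon(\delta_T,\delta_X)/\equiv$; in particular these quotient sets have the same cardinality. By Proposition \ref{equi}(1), $p_{(\delta_T,\delta_X)}$ induces a bijection $\mathcal{P}(\delta_T,\delta_X)/\equiv\ \longrightarrow\ \mathcal{K}(\delta_T,\delta_X)$, whence $\#\bigl(\mathcal{P}(\delta_T,\delta_X)/\equiv\bigr)=\#\mathcal{K}(\delta_T,\delta_X)$. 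Composing, the number of equivalence classes of $\Mon(\delta_T,\delta_X)$ under $\equiv$ equals $\#\mathcal{K}(\delta_T,\delta_X)$.

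It then remains to identify this count with $\#\Delta(\delta_T,\delta_X)$. By Notation \ref{notrep}, $\Delta(\delta_T,\delta_X)$ is the image of $\mathcal{K}(\delta_T,\delta_X)$ under $(\alpha,\beta)\mapsto M(\alpha,\beta)$, which is injective because it is the restriction of the bijection \eqref{bijection} to the subset $\mathcal{K}(\delta_T,\delta_X)\subseteq\mathcal{P}(\delta_T,\delta_X)$; hence $\#\Delta(\delta_T,\delta_X)=\#\mathcal{K}(\delta_T,\delta_X)$. Moreover Proposition \ref{equi}(3) asserts that $\Delta(\delta_T,\delta_X)$ is a set of representatives of $\Mon(\delta_T,\delta_X)$ under $\equiv$, so $\#\Delta(\delta_T,\delta_X)$ is exactly the number of equivalence classes, and the claimed equality follows. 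There is no genuine obstacle here — every step invokes a bijection already established in Proposition \ref{equi} — and the only points that require a little care are to treat $\mathcal{P}(\delta_T,\delta_X)/\equiv$ and $\mathcal{K}(\delta_T,\delta_X)$ as being merely in bijection rather than literally equal, and to note that a restriction of a bijection to a subset remains injective with the expected image.
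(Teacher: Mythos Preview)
Your proof is correct and follows exactly the same approach as the paper, which simply says the result follows from Definition \ref{pont} and Proposition \ref{equi}. You have merely spelled out in detail the chain of bijections that the paper leaves implicit.
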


\begin{proof}
Its results from Definition \ref{pont} and Proposition \ref{equi}.
\end{proof}

\subsection{Proof of Theorem \ref{DIM}}

The main idea of the proof is to define an endomorphism on the basis of monomials $\Mon(\delta_T,\delta_X)$ by conjugation of $p_{(\delta_T,\delta_X)}$ by the bijection (\ref{bijection}) and prove it to be a projection along $\mathcal{I}(\delta_T,\delta_X)$ onto $\Span \Delta(\delta_T,\delta_X)$. However, when (\ref{H}) occurs, there is a non trivial linear combination of elements of $\Delta(\delta_T,\delta_X)$ lying in $\mathcal{I}(\delta_T,\delta_X)$, as pointed out in the following lemma.

\begin{lemma}\label{ajustement}
Let $(\delta_T,\delta_X) \in \Z^2$. Assume that $\eta \geq 2$, $\delta_T < 0$, $\eta \, | \, \delta_T$ and $q \leq \frac{\delta}{\eta}$, i.e. (\ref{H}) holds. Let us set $k \in \N$ and $r \in \lsem 1,q-1\rsem$ such that
\[A=\frac{\delta}{\eta}=k(q-1)+r.\]
The polynomial
\[\begin{aligned}F_0=&M(A,0) - M(r,0) + M(r,q-1)-M(r,\eta k (q-1))\\
&X_1^{-\frac{\delta_T}{\eta}}X_2^{\frac{\delta}{\eta}}-T_1^{\eta k(q-1)}X_1^{k(q-1)-\frac{\delta_T}{\eta}}X_2 ^r \\
&+ T_1^{(\eta k-1)(q-1)}T_2^{q-1}X_1^{k(q-1)-\frac{\delta_T}{\eta}}X_2^r - T_2^{\eta k (q-1)} X_1^{k(q-1)-\frac{\delta_T}{\eta}}X_2^r\end{aligned}\]
belongs to $\mathcal{I}(\delta_T,\delta_X)$.
\end{lemma}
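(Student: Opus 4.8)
The statement $F_0\in\mathcal{I}(\delta_T,\delta_X)$ means, since $\mathcal{I}(\delta_T,\delta_X)=\ker\ev_{(\delta_T,\delta_X)}$ and by the definition of the evaluation map (see Paragraph \ref{SensEval}), that $F_0$ vanishes at every $\F_q$-rational point of $\mathcal{H}_\eta$, i.e. that $F_0(t_1,t_2,x_1,x_2)=0$ for each of the four normalized tuples $(1,a,1,b)$, $(0,1,1,b)$, $(1,a,0,1)$, $(0,1,0,1)$ with $a,b\in\F_q$. The plan is a direct substitution, organized around a single rewriting of $F_0$. First I would record what (\ref{H}) gives arithmetically: writing $A=k(q-1)+r$ with $r\in\{1,\dots,q-1\}$ forces $k\geq 1$, since $k=0$ would give $A=r\leq q-1<q\leq A$; and $s:=-\delta_T/\eta$ is a positive integer because $\delta_T<0$ and $\eta\mid\delta_T$. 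Together with $\eta\geq 2$ (so that $\eta k-1\geq 1$), this makes $s$, $k(q-1)+s$, $(\eta k-1)(q-1)$ and $\eta k(q-1)$ all strictly positive, which in particular shows that $M(A,0)$, $M(r,0)$, $M(r,q-1)$ and $M(r,\eta k(q-1))$ are genuine monomials of $\Mon(\delta_T,\delta_X)$. Expanding these monomials by (\ref{Mdc}) and using $\delta=\eta A=\eta k(q-1)+\eta r$ and $\delta_X-r=k(q-1)+s$, one factors
\[F_0 = X_1^{s}X_2^{r}\left(X_2^{k(q-1)}-X_1^{k(q-1)}\,G(T_1,T_2)\right),\qquad G(T_1,T_2)=T_1^{\eta k(q-1)}-T_1^{(\eta k-1)(q-1)}T_2^{q-1}+T_2^{\eta k(q-1)}.\]

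Then I would evaluate $F_0$ on the four tuples. On $(1,a,0,1)$ and $(0,1,0,1)$ we have $x_1=0$ and $s>0$, so the factor $X_1^{s}$ vanishes and $F_0=0$. On $(1,a,1,b)$ and $(0,1,1,b)$ we have $x_1=1$, hence $F_0=b^{r}\left(b^{k(q-1)}-G\right)$ with $G$ evaluated at $(T_1,T_2)=(1,a)$, resp. $(0,1)$. Here $G(1,a)=1-a^{q-1}+a^{\eta k(q-1)}=1$ for every $a\in\F_q$ (if $a\neq 0$ because $a^{q-1}=1$; if $a=0$ because the last two terms vanish), and $G(0,1)=1$ because the positivity of $\eta k(q-1)$ and $(\eta k-1)(q-1)$ kills the first two terms and leaves $1^{\eta k(q-1)}=1$. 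Hence on both families $F_0=b^{r}\left(b^{k(q-1)}-1\right)$, which is $0$ for $b=0$ and, for $b\in\F_q\setminus\{0\}$, equals $b^{r}\left((b^{q-1})^{k}-1\right)=0$. So $F_0$ vanishes at every $\F_q$-rational point and $F_0\in\mathcal{I}(\delta_T,\delta_X)$.

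There is no real obstacle in this argument: the substance of the lemma is the production of the specific polynomial $F_0$, and once it is written down the verification is the computation above. The only points demanding care are (i) confirming the nonnegativity of all exponents occurring in $F_0$, so that $F_0$ is an honest element of $R(\delta_T,\delta_X)$ — this is exactly where $k\geq 1$, $\eta\geq 2$ and $\delta_T<0$ enter — and (ii) keeping track of which monomials survive on each normalized tuple. For the use of this lemma in the proof of Theorem \ref{DIM} it is worth recording that the couples $(A,0)$, $(r,0)$, $(r,q-1)$, $(r,\eta k(q-1))$ all lie in $\mathcal{K}(\delta_T,\delta_X)$ (Notation \ref{notrep}), so the four monomials lie in $\Delta(\delta_T,\delta_X)$: thus $F_0$ is precisely the nontrivial linear relation among the chosen representatives that is responsible for the correction $\epsilon=1$ under (\ref{H}).
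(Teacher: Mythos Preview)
Your proof is correct and follows essentially the same approach as the paper: a direct verification that $F_0$ vanishes at each of the four families of normalized representatives of $\mathcal{H}_\eta(\F_q)$. Your factorization $F_0=X_1^{s}X_2^{r}\bigl(X_2^{k(q-1)}-X_1^{k(q-1)}G(T_1,T_2)\bigr)$ streamlines the case analysis, and your explicit check that $k\geq 1$ and all exponents are nonnegative is a welcome bit of care the paper leaves implicit.
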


\begin{proof}
Let us prove that the polynomial $F_0$ vanishes at every $\F_q$-rational of $\mathcal{H}_\eta$.

For any $a \in \F_q$, we have $F_0(1,a,0,1)=0$ and $F_0(0,1,0,1)=0$ since every polynomial in $R(\delta_T,\delta_X)$ is divisible by $X_1$ when $\delta_T <0$.

For $(a,b) \in \F_q^2$, $F_0(1,a,1,b)=b^{\frac{\delta}{\eta}} - b^r +a^{q-1} b ^r - a^{\eta k (q-1)} b^r=0$, as $q-1 \, | \, \frac{\delta}{\eta} -r \neq 0$.

For the same reason, $F_0(0,1,1,b)=b^{\delta_X+\frac{\delta_T}{\eta}} -0 + 0 -b^r=0$ for any $b \in \F_q$.
\end{proof}

The previous lemma displays a polynomial with 4 terms in the kernel when the couple $(\delta_T,\delta_X)$ satisfies (\ref{H}). We thus have to adjust the endomorphism in this case.

\begin{definition}\label{pi}
Let us set the linear map $\pi_{(\delta_T,\delta_X)}:R(\delta_T,\delta_X) \rightarrow R(\delta_T,\delta_X)$ such that for every $(d_2,c_2) \in P(\delta_T,\delta_X)$,
\[\pi_{(\delta_T,\delta_X)}(M(d_2,c_2))=M(p_{(\delta_T,\delta_X)}(d_2,c_2))\]
except for $(d_2,c_2)=\left(\frac{\delta}{\eta},0\right)$ when the couple $(\delta_T,\delta_X)$ satisfies (\ref{H}). In this case, set $(r,k)$ is the unique couple of integers such that $\frac{\delta}{\eta}= k(q-1)+ r$ with $r \in \lsem 1 ,q-1\rsem$ and
\[\pi_{(\delta_T,\delta_X)}\left(M\left(\frac{\delta}{\eta},0\right)\right)= M(r,0)+M(r,\eta k(q-1))-M(r,q-1).\]
\end{definition}

\begin{remark}\label{ok}
The monomials $M(r,0)$, $M(r,\eta k(q-1))$ and $M(r,q-1)$, that appears in the definition above, belong to $\Delta(\delta_T,\delta_X)$.
\end{remark}

\begin{notation}\label{etoile}
Let $(\delta_T,\delta_X) \in \Z  \times \N$ such that $\delta \geq 0$. If (\ref{H}) holds, we set
\[\begin{aligned}\mathcal{K}^*(\delta_T,\delta_X)&=\mathcal{K}(\delta_T,\delta_X) \setminus \left\{\left(\frac{\delta}{\eta},0\right)\right\}\\
&=
\left\lbrace  (\alpha,\beta) \in \N^2 \: \left| \:
\begin{array}{c} \alpha \in \lsem 0,q-1\rsem\\
0 \leq \beta \leq \min(\delta -\eta\alpha,q)-1 \text{ or } \beta=\delta -\eta\alpha\\
\end{array}\right.\right \rbrace ,\end{aligned}\]
and
\[\Delta^*(\delta_T,\delta_X)=\{ M(\alpha,\beta) \: | \: (\alpha,\beta) \in \mathcal{K}^*(\delta_T,\delta_X)\}.\]
Otherwise, we set
\[\mathcal{K}^*(\delta_T,\delta_X)=\mathcal{K}(\delta_T,\delta_X) \text{ and } \Delta^*(\delta_T,\delta_X)=\Delta(\delta_T,\delta_X).\]
\end{notation}

\begin{lemma}\label{zerocombi}
The only zero linear combination of elements of $\Delta^*(\delta_T,\delta_X)$ that belongs to $\mathcal{I}(\delta_T,\delta_X)$ is the trivial one.
\end{lemma}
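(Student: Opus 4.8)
The plan is to reduce the statement to a question about monomials of distinct bidegree components, exploiting the combinatorial description of $\Delta^*(\delta_T,\delta_X)$ provided by Proposition~\ref{traduc}. Concretely, suppose $\sum_{(\alpha,\beta)\in\mathcal{K}^*(\delta_T,\delta_X)}\lambda_{(\alpha,\beta)}\,M(\alpha,\beta)$ lies in $\mathcal{I}(\delta_T,\delta_X)$, i.e.\ it evaluates to zero at every $\F_q$-rational point of $\mathcal{H}_\eta$. Writing $M(d_2,c_2)=T_1^{\delta_T+\eta(\delta_X-d_2)-c_2}T_2^{c_2}X_1^{\delta_X-d_2}X_2^{d_2}$ and evaluating at the affine chart points $(1,a,1,b)$ for $(a,b)\in\F_q^2$, the value of $M(d_2,c_2)$ is $a^{c_2}b^{d_2}$. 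So the restriction of our combination to this chart is the two-variable polynomial function $\sum \lambda_{(\alpha,\beta)}\,a^{\beta}b^{\alpha}$, which must vanish on all of $\F_q^2$.

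First I would analyze exactly when two pairs $(\alpha,\beta),(\alpha',\beta')\in\mathcal{K}^*(\delta_T,\delta_X)$ give the same function $a\mapsto a^{\beta}$, $b\mapsto b^{\alpha}$ on $\F_q$: by the standard fact that $t^m$ and $t^n$ agree on $\F_q$ iff $m=n=0$ or ($mn\neq0$ and $q-1\mid m-n$), and by the very definition of $\mathcal{K}^*(\delta_T,\delta_X)$ (each coordinate is either $0$, or in $\lsem 1,q-1\rsem$, or equals the ``maximal'' value $\delta-\eta\alpha$ resp.\ $A$), one checks that the exponents appearing in $\mathcal{K}^*$ are \emph{reduced representatives} modulo this identification --- with the single subtlety that the maximal exponent $\delta-\eta\alpha$ may fall in $\lsem 1,q-1\rsem$ and thus coincide, as a function, with another listed exponent. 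The key point is that in $\mathcal{K}^*(\delta_T,\delta_X)$ we have removed exactly the problematic vertex $(\frac{\delta}{\eta},0)$ (in case (\ref{H})), so that distinct elements of $\mathcal{K}^*(\delta_T,\delta_X)$ yield \emph{linearly independent} functions on $\F_q^2$. Indeed, group the elements of $\mathcal{K}^*$ by the function $b\mapsto b^{\alpha}$ they induce; within each group the second coordinates $\beta$ induce pairwise-distinct functions $a\mapsto a^{\beta}$, and functions attached to different groups are supported on linearly independent ``$b$-layers'', so a vanishing combination forces, layer by layer and then within each layer, all $\lambda_{(\alpha,\beta)}=0$.

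The one genuinely delicate case is (\ref{H}), where $A=\frac{\delta}{\eta}\in\N$ is a bona fide first coordinate but has been excised from $\mathcal{K}^*$. I would argue that the potential collision is precisely between $(A,0)$ and its ``reduction'' $(r,0)$ together with the correction terms $(r,q-1)$ and $(r,\eta k(q-1))$ from Lemma~\ref{ajustement}/Definition~\ref{pi}; since $(A,0)\notin\mathcal{K}^*$, no such collision survives among the remaining elements, and the $X_1$-divisibility of every polynomial of negative $\delta_T$ (used in the proof of Lemma~\ref{ajustement}) shows the layers indexed by $\alpha<A$ are unaffected. Thus the main obstacle is bookkeeping: carefully verifying that the ``or $\beta=\delta-\eta\alpha$'' clause in the definition of $\mathcal{K}^*$ never produces a second element of $\mathcal{K}^*$ in the same $\F_q$-function class as one with $\beta\le q-1$ --- i.e.\ that when $\delta-\eta\alpha\le q-1$ the two descriptions of that exponent coincide as a \emph{single} element of $\mathcal{K}^*$ rather than two --- which is exactly what Observation~\ref{seul} and the construction of $\mathcal{K}$ guarantee. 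Once that is pinned down, evaluating on $\F_q^2$ (and, where needed, on the boundary chart points $(0,1,1,b)$, $(1,a,0,1)$ to handle the edge exponents $d_1=0$, $c_1=0$) and invoking linear independence of distinct monomial functions finishes the proof.
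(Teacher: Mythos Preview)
Your overall plan---evaluate the combination at the $\F_q$-points and use polynomial-degree arguments---is the same as the paper's, but the central claim in your second paragraph is wrong, and this is a genuine gap. You assert that distinct elements of $\mathcal{K}^*(\delta_T,\delta_X)$ give linearly independent functions $(a,b)\mapsto a^\beta b^\alpha$ on $\F_q^2$. They do not. Whenever $\delta-\eta\alpha\geq q$, both the ``edge'' pair $(\alpha,\delta-\eta\alpha)$ and the ``interior'' pair $(\alpha,r)$ with $r\in\lsem 1,q-1\rsem$, $r\equiv\delta-\eta\alpha\pmod{q-1}$, belong to $\mathcal{K}^*$ and yield the \emph{same} function on $\F_q^2$. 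The analogous collision occurs between $\alpha=\delta_X$ and its residue when $\delta_T\geq 0$ and $\delta_X\geq q$. So evaluation on the dense torus chart alone cannot separate these, and your ``where needed'' clause about boundary points is not a cleanup step: it is where the real work happens. Your diagnosis of case~(\ref{H}) is also off---the subtlety there is not merely a single collision on the chart $(1,a,1,b)$; the four-term relation of Lemma~\ref{ajustement} reflects a collision that persists on the boundary chart $(0,1,1,b)$ as well, between $M(A,0)$ and $M(r,\eta k(q-1))$.

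The paper fixes this by reversing the order: it \emph{first} evaluates at the four corner points $(1,0,1,0)$, $(1,0,0,1)$, $(0,1,0,1)$, $(0,1,1,0)$ to kill the vertex coefficients, \emph{then} at the four edge families $(1,a,1,0)$, $(1,a,0,1)$, $(1,0,1,a)$, $(0,1,1,a)$ to kill the edge coefficients (each step now a univariate polynomial of degree $<q$ with $q$ zeros), and only \emph{then} evaluates at $(1,a,1,b)$, at which point the surviving $(\alpha,\beta)$ all lie in $\lsem 1,q-1\rsem^2$ and the bivariate degree argument goes through. If you restructure your argument in this stratified ``corners $\to$ edges $\to$ interior'' order, it becomes correct and essentially identical to the paper's proof.
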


\begin{proof}

Let us assume that a linear combination of elements of $\Delta^*(\delta_T,\delta_X)$
\[H=\sum_{(\alpha,\beta)\in\mathcal{K}^*(\delta_T,\delta_X)}\lambda_{\alpha,\beta} M(\alpha,\beta)\]
satisfies $\ev_{(\delta_T,\delta_X)} \left(H\right)=0$.

On one side, $H(1,0,1,0)=\lambda_{0,0}$, $H(1,0,0,1)=\lambda_{\delta_X,0}$,   $H(0,1,0,1)=\lambda_{\delta_X,\delta_T}$ and $H(0,1,1,0)=\lambda_{0,\delta}$. Then $\lambda_{0,0}=\lambda_{\delta_X,0}=\lambda_{\delta_X,\delta_T}=\lambda_{0,\delta}=0$. On the other side, evaluating at $(1,a,1,0)$ for any $a \in \F_q$ gives
\[H(1,a,1,0)=\sum_{\beta=1}^{\min(q-1,\delta-1)} \lambda_{0,\beta} \: a^\beta=0.\]
Then the polynomial
\[\sum_{\beta=1}^{\min(q-1,\delta-1)} \lambda_{0,\beta} \: X^\beta\]
of degree lesser than $(q-1)$ has $q$ zeros. This implies that $\lambda_{0,\beta}=0$ for any $\beta$ such that $(0,\beta) \in \mathcal{K}^*(\delta_T,\delta_X)$. Evaluating at $(1,a,0,1)$, we can deduce that $\lambda_{\delta_X,\beta}=0$ for any $\beta$ such that $(\delta_X,\beta) \in \mathcal{K}^*(\delta_T,\delta_X)$.

To evaluate at $(1,0,1,a)$, two cases are distinguished.
\begin{itemize}
\item If $\delta_T \geq 0$, 
\[H=(1,0,1,a)=\sum_{\alpha = 1}^{\min(\delta_X,q)-1)} \lambda_{\alpha,0} \: a^\alpha=0,\]
which implies with the same argument that $\lambda_{\alpha,0}=0$ for every $\alpha$ such that $(\alpha,B(\alpha)) \in \mathcal{K}^*(\delta_T,\delta_X)$.
\item If $\delta_T <0$,
\[H=(1,0,1,a)=\sum_{\alpha = 1}^{\min(\lfloor A \rfloor,q-1)} \lambda_{\alpha,0} \: a^\alpha=0\]
and we can repeat the same argument than before.
\end{itemize}

Similarly, by evaluating at $(0,1,1,a)$, we have $\lambda_{\alpha,B(\alpha)}=0$ for any $\alpha$ such that $(\alpha,B(\alpha)) \in \mathcal{K}^*(\delta_T,\delta_X)$.

For any $a, \: b \in \F_q$, we then have
\[H(1,a,1,b)=\sum_{\alpha=1}^{\min(q-1,\delta_X-1)} \left( \sum_{\beta=1}^{\min(q-1,B(\alpha)-1)}\lambda_{\alpha,\beta \:} a^{\beta}\right) b^\alpha=0\]
which implies that for any $a \in \F_q$, the polynomial 
\[\sum_{\alpha=1}^{\min(q-1,\delta_X-1)} \left( \sum_{\beta=1}^{\min(q-1,B(\alpha)-1)}\lambda_{\alpha,\beta} \: a^{\beta}\right) X^\alpha\]
of degree lesser than $(q-1)$ has $q$ zeros and, thus, is zero. By the same argument on each coefficient as polynomials of variable $a$, we then have proved that the linear combination $H$ is zero.
\end{proof}

Theorem \ref{DIM} follows from the following proposition. 

\begin{prop}\label{span}
The linear map $\pi_{(\delta_T,\delta_X)}$ is the projection along $\mathcal{I}(\delta_T,\delta_X)$ onto $\Span \Delta^*(\delta_T,\delta_X)$. Moreover the elements of $\Delta^*(\delta_T,\delta_X)$ are linearly independent.
\end{prop}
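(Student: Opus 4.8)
The plan is to prove Proposition \ref{span} in three stages: first that $\pi_{(\delta_T,\delta_X)}$ maps $R(\delta_T,\delta_X)$ into $\Span \Delta^*(\delta_T,\delta_X)$, second that $\pi_{(\delta_T,\delta_X)}$ is idempotent, and third that it kills exactly $\mathcal{I}(\delta_T,\delta_X)$, i.e. $\ker \pi_{(\delta_T,\delta_X)} = \mathcal{I}(\delta_T,\delta_X)$. Combined with Lemma \ref{zerocombi}, which already gives that the elements of $\Delta^*(\delta_T,\delta_X)$ are linearly independent (indeed more: no nontrivial linear combination of them lies in $\mathcal{I}(\delta_T,\delta_X)$), this yields the direct-sum decomposition $R(\delta_T,\delta_X) = \Span \Delta^*(\delta_T,\delta_X) \oplus \mathcal{I}(\delta_T,\delta_X)$ and identifies $\pi_{(\delta_T,\delta_X)}$ as the associated projection. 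Theorem \ref{DIM} then follows immediately, since $\dim C_\eta(\delta_T,\delta_X) = \dim \Span \Delta^*(\delta_T,\delta_X) = \#\Delta^*(\delta_T,\delta_X)$, which is $\#\Delta(\delta_T,\delta_X) = \#(\Mon(\delta_T,\delta_X)/\!\equiv)$ when (\ref{H}) fails and one less when (\ref{H}) holds.

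For the first stage, I would argue on the monomial basis: for $(d_2,c_2) \neq (\frac{\delta}{\eta},0)$ (or always, when (\ref{H}) fails), $\pi_{(\delta_T,\delta_X)}(M(d_2,c_2)) = M(p_{(\delta_T,\delta_X)}(d_2,c_2))$ lies in $\Delta^*(\delta_T,\delta_X)$ because $p_{(\delta_T,\delta_X)}$ has image $\mathcal{K}(\delta_T,\delta_X)$ by Proposition \ref{equi}, and in the (\ref{H}) case the excluded point maps to a combination of three elements of $\Delta^*(\delta_T,\delta_X)$ by Remark \ref{ok}. Idempotence follows because every element of $\Delta^*(\delta_T,\delta_X)$ is fixed by $\pi_{(\delta_T,\delta_X)}$: for $(\alpha,\beta) \in \mathcal{K}^*(\delta_T,\delta_X)$ the point is $p_{(\delta_T,\delta_X)}$-invariant, and in the (\ref{H}) case one must check that the three monomials $M(r,0)$, $M(r,\eta k(q-1))$, $M(r,q-1)$ produced in Definition \ref{pi} are themselves fixed — which holds since $0 \le r \le q-1$, and the three second coordinates $0$, $\eta k(q-1) = \delta - \eta r$, and $q-1$ are precisely the $c_2$-values left untouched by $p_{(\delta_T,\delta_X)}$.

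The crux is the third stage, $\ker \pi_{(\delta_T,\delta_X)} = \mathcal{I}(\delta_T,\delta_X)$, which I would split into the two inclusions. For $\ker \pi_{(\delta_T,\delta_X)} \subseteq \mathcal{I}(\delta_T,\delta_X)$: since $\pi_{(\delta_T,\delta_X)}$ is idempotent, $\ker \pi_{(\delta_T,\delta_X)}$ is spanned by the elements $M(d_2,c_2) - \pi_{(\delta_T,\delta_X)}(M(d_2,c_2))$; for $(d_2,c_2) \neq (\frac{\delta}{\eta},0)$ this is $M(d_2,c_2) - M(p_{(\delta_T,\delta_X)}(d_2,c_2))$, which lies in $\mathcal{I}(\delta_T,\delta_X)$ because $p_{(\delta_T,\delta_X)}(d_2,c_2) \equiv (d_2,c_2)$ by Proposition \ref{equi}; for the exceptional point, $M(\frac{\delta}{\eta},0) - \pi_{(\delta_T,\delta_X)}(M(\frac{\delta}{\eta},0))$ is exactly (a rearrangement of) the polynomial $F_0$ of Lemma \ref{ajustement}, which was shown to lie in $\mathcal{I}(\delta_T,\delta_X)$. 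For the reverse inclusion $\mathcal{I}(\delta_T,\delta_X) \subseteq \ker \pi_{(\delta_T,\delta_X)}$: take $G \in \mathcal{I}(\delta_T,\delta_X)$ and write $G = (G - \pi_{(\delta_T,\delta_X)}(G)) + \pi_{(\delta_T,\delta_X)}(G)$. The first summand lies in $\mathcal{I}(\delta_T,\delta_X)$ by the inclusion just proved, hence $\pi_{(\delta_T,\delta_X)}(G) \in \mathcal{I}(\delta_T,\delta_X)$; but $\pi_{(\delta_T,\delta_X)}(G)$ is a linear combination of elements of $\Delta^*(\delta_T,\delta_X)$, so by Lemma \ref{zerocombi} it must be zero, i.e. $G \in \ker \pi_{(\delta_T,\delta_X)}$. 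This last step — the interplay between idempotence, the fact that the "correction terms" land in $\mathcal{I}$, and the linear independence modulo $\mathcal{I}$ supplied by Lemma \ref{zerocombi} — is the heart of the argument; the bookkeeping in the (\ref{H}) case, making sure $F_0$ really is $M(\frac{\delta}{\eta},0) - \pi_{(\delta_T,\delta_X)}(M(\frac{\delta}{\eta},0))$ up to terms already handled, is the part most prone to sign and indexing slips and deserves the most care.
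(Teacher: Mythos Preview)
Your proof is correct and follows essentially the same route as the paper's: both establish that $\pi_{(\delta_T,\delta_X)}$ has image $\Span\Delta^*(\delta_T,\delta_X)$ and fixes it (via Proposition~\ref{equi} and Remark~\ref{ok}), that $M-\pi_{(\delta_T,\delta_X)}(M)\in\mathcal{I}(\delta_T,\delta_X)$ for every monomial (via Proposition~\ref{equi} and Lemma~\ref{ajustement}), and then invoke Lemma~\ref{zerocombi} to conclude $\ker\pi_{(\delta_T,\delta_X)}=\mathcal{I}(\delta_T,\delta_X)$. The only cosmetic difference is that the paper phrases the reverse inclusion via the complement argument ($\ker\pi\subset\mathcal{I}$ and $\mathcal{I}\cap\Span\Delta^*=0$ force equality), whereas you argue elementwise; both are equivalent and your identification of $F_0$ with $M(\tfrac{\delta}{\eta},0)-\pi(M(\tfrac{\delta}{\eta},0))$ is exactly right.
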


\begin{proof}
By construction of $\Delta^*(\delta_X,\delta_T)$, the definition of $\pi_{(\delta_T,\delta_X)}$ and Remark \ref{ok}, it is clear that $\range\pi_{(\delta_T,\delta_X)} \subset \Span \Delta^*(\delta_T,\delta_X)$. Also, by Proposition \ref{equi} and the bijection (\ref{bijection}), any monomial of $\Delta^*(\delta_T,\delta_X)$ is invariant under $\pi_{(\delta_T,\delta_X)}$, which ensures $\range\pi_{(\delta_T,\delta_X)} = \Span \Delta^*(\delta_T,\delta_X)$ and $\pi_{(\delta_T,\delta_X)}$ is a projection. Then
\begin{equation}\label{directsum}R(\delta_T,\delta_X)= \range \pi_{(\delta_T,\delta_X)} \oplus \ker \pi_{(\delta_T,\delta_X)}= \Span \Delta^*(\delta_T,\delta_X) \oplus \ker \pi_{(\delta_T,\delta_X)}. \end{equation}

\smallskip

By Proposition \ref{equi} and Lemma \ref{ajustement}, we have
\[ \forall \:  M\in \Mon(\delta_T,\delta_X), \:M - \pi_{(\delta_T,\delta_X)}(M) \in \mathcal{I}(\delta_T,\delta_X),\]
which proves the inclusion $\ker \pi_{(\delta_T,\delta_X)}=\range( \Id - \pi_{(\delta_T,\delta_X)}) \subset \mathcal{I}(\delta_T,\delta_X)$.

The proof is completed by Lemma \ref{zerocombi}, which implies that the family $\Delta^*(\delta_T,\delta_X)$ is linearly independent modulo $\mathcal{I}(\delta_T,\delta_X)$. It also implies
\[\mathcal{I}(\delta_T,\delta_X) \cap \Span(\Delta^*(\delta_T,\delta_X)=\{0\},\]
which entails the equality $\ker \pi_{(\delta_T,\delta_X)} = \mathcal{I}(\delta_T,\delta_X)$. Indeed, $\ker \pi_{(\delta_T,\delta_X)}$ is a complementary space of $\Span(\Delta^*(\delta_T,\delta_X))$ in $R(\delta_T,\delta_X)$ by (\ref{directsum}). Since $\ker \pi_{(\delta_T,\delta_X)}$ is included in  $\mathcal{I}(\delta_T,\delta_X)$, if the intersection of $\mathcal{I}(\delta_T,\delta_X)$ and $\Span(\Delta^*(\delta_T,\delta_X))$ is the nullspace then $\ker \pi_{(\delta_T,\delta_X)} = \mathcal{I}(\delta_T,\delta_X)$.
\end{proof}

Proposition \ref{span} displays $\Delta^*(\delta_T,\delta_X)$ as a set of representatives of $R(\delta_T,\delta_X)$ modulo $\mathcal{I}(\delta_T,\delta_X)$ and proves Theorem \ref{DIM}, which can be rephrased as follows.

\begin{corollary}\label{DIM2}
The dimension of the code $C_\eta(\delta_T,\delta_X)$ equals
\[\dim C_\eta(\delta_T,\delta_X) = \#\mathcal{K}^*(\delta_T,\delta_X).\]
\end{corollary}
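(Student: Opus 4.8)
The plan is to derive this corollary directly from the two results just established, Theorem \ref{DIM} and Proposition \ref{span} (or equivalently Corollary \ref{nbDelta}), by tracking cardinalities through the bijections already in hand. First I would recall that by Proposition \ref{span} the set $\Delta^*(\delta_T,\delta_X)$ is a linearly independent family whose span is a complementary subspace of $\mathcal{I}(\delta_T,\delta_X)$ in $R(\delta_T,\delta_X)$; hence, via the isomorphism $C_\eta(\delta_T,\delta_X)\simeq R(\delta_T,\delta_X)/\mathcal{I}(\delta_T,\delta_X)$, we get $\dim C_\eta(\delta_T,\delta_X)=\#\Delta^*(\delta_T,\delta_X)$.

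Next I would compare $\#\Delta^*(\delta_T,\delta_X)$ with $\#\mathcal{K}^*(\delta_T,\delta_X)$. Since $M(\alpha,\beta)$ is the image of $(\alpha,\beta)$ under the bijection (\ref{bijection}) restricted to $\mathcal{K}^*(\delta_T,\delta_X)$, and distinct lattice points give distinct monomials, the map $(\alpha,\beta)\mapsto M(\alpha,\beta)$ is a bijection from $\mathcal{K}^*(\delta_T,\delta_X)$ onto $\Delta^*(\delta_T,\delta_X)$ by the very definition in Notation \ref{etoile}. Therefore $\#\Delta^*(\delta_T,\delta_X)=\#\mathcal{K}^*(\delta_T,\delta_X)$, and combining with the previous paragraph yields $\dim C_\eta(\delta_T,\delta_X)=\#\mathcal{K}^*(\delta_T,\delta_X)$, as claimed.

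It is worth checking consistency with the statement of Theorem \ref{DIM}: when (\ref{H}) does not hold, $\mathcal{K}^*=\mathcal{K}$ and $\Delta^*=\Delta$, so by Corollary \ref{nbDelta} the count $\#\mathcal{K}^*(\delta_T,\delta_X)=\#\Delta(\delta_T,\delta_X)=\#\big(\Mon(\delta_T,\delta_X)/\equiv\big)$, matching $\epsilon=0$; when (\ref{H}) holds, $\mathcal{K}^*$ is obtained from $\mathcal{K}$ by removing the single point $(\delta/\eta,0)$, so $\#\mathcal{K}^*(\delta_T,\delta_X)=\#\big(\Mon(\delta_T,\delta_X)/\equiv\big)-1$, matching $\epsilon=1$. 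There is essentially no obstacle here: the content has already been done in Proposition \ref{span}, and the corollary is just the translation of that statement into the combinatorial count $\#\mathcal{K}^*$. The only point requiring a word of care is the bookkeeping in the case (\ref{H}), namely that exactly one lattice point is discarded and that the three monomials reintroduced in Definition \ref{pi} already lie in $\Delta^*$ (Remark \ref{ok}), so no double counting occurs.
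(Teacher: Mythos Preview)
Your proposal is correct and essentially matches the paper's one-line proof, which simply cites Corollary~\ref{nbDelta} and Theorem~\ref{DIM}. You take the slightly more direct route of invoking Proposition~\ref{span} (from which Theorem~\ref{DIM} itself follows) to get $\dim C_\eta(\delta_T,\delta_X)=\#\Delta^*(\delta_T,\delta_X)$ and then using the bijection (\ref{bijection}) to identify this with $\#\mathcal{K}^*(\delta_T,\delta_X)$; the paper instead passes through $\#\Delta=\#\mathcal{K}$ and subtracts~$\epsilon$, but the content is identical.
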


\begin{proof}
It a straightforward consequence of Corollary \ref{nbDelta} and Theorem \ref{DIM}.
\end{proof}

\begin{example}\label{surj}
We can easily deduce from Corollary \ref{DIM2} that the evaluation map $\ev_{(\delta_T,\delta_X)}$ is surjective if $\delta_T \geq q$ and $\delta_X \geq q$. Indeed, in this case,
\[\mathcal{K}^*(\delta_T,\delta_X)=\mathcal{K}(\delta_T,\delta_X)=\left\lbrace  (\alpha,\beta) \in \N^2 \: \left| \:
\begin{array}{c} \alpha \in \lsem 0 , q-1 \rsem \cup \{\delta_X\} \\
\beta \in \lsem 0 , q-1 \rsem \cup \{\delta -\eta\alpha\}
\end{array}\right.\right\rbrace,\]
so that $\dim C_\eta(\delta_T,\delta_X) = \# \mathcal{K}^*(\delta_T,\delta_X)=(q+1)^2=N$.

\end{example}

\begin{figure}[h!]
\centering
\begin{tikzpicture}[scale=0.55]
    \clip (-2,-1) rectangle (5cm,11cm); % Clips the picture...
    \draw[style=help lines,dashed,opacity=0.5] (-4,-17) grid[step=1cm] (14,17); % Draws a grid in the new coordinates.

 \draw [thick,-latex] (0,0) -- (0,11) node [above left] {$c_2$};
    \draw [thick,-latex] (0,0) -- (4,0) node [below right] {$d_2$};

    \draw [thick,blue] (0,0) -- (0,10 ) node [left,font=\small] {$\delta$};
    \draw [thick,blue] (0,0) -- (3,0) node [below, font=\small] {$\delta_X$} ;    
    \draw [thick,blue] (3,0) -- (3,4) node [below] {} ;
        \draw [thick,blue] (3,0) -- (3,4) node [below] {} ;
                \draw [thick,blue] (0,10) -- (3,4) node [below] {} ;
   \draw [thick,blue,dashed] (3,4) -- (0,4) node [left,font=\small] {$\delta_T$} ;

       \foreach \x in {0,1,2,3}{                           % Two indices running over each
      \foreach \y in {0,1,2,10-2*\x}{                       % node on the grid we have drawn 
        \node[draw,circle,inner sep=1pt,fill,black] at (\x,\y) {}; % Places a dot at those points
      }
    }  
    \foreach \y in {0,1}{                       % node on the grid we have drawn 
        \node[draw,circle,inner sep=1pt,fill,black] at (3,\y) {}; % Places a dot at those points
      } 
  \end{tikzpicture} 
\caption{Example of $\mathcal{P}(\delta_T,\delta_X)$ when $\ev_{(\delta_T,\delta_X)}$ is surjective: $\mathcal{P}(4,3)$ with $q=3$ in $\mathcal{H}_2$}\label{f3}
\end{figure}
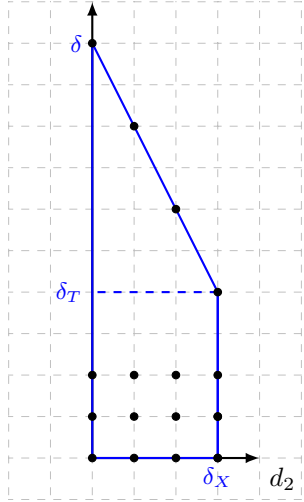

\subsection{Explicit formulae for the dimension of $C_\eta(\delta_T,\delta_X)$ and examples}\label{sectionS}

By Corollary \ref{DIM2}, computing the dimension is now reduced to the combinatorial question about the number of couples in $\mathcal{K}^*(\delta_T,\delta_X)$. The key of the proof of Proposition \ref{dimcode} below is to give a well-chosen partition of $\mathcal{K}^*(\delta_T,\delta_X)$ from which we can easily deduce its cardinality. Putting aside the very particular case of $\eta=0$, two cases have to be distinguished according to the sign of $\delta_T$, which determinates the shape of $\mathcal{P}(\delta_T,\delta_X)$ and the value of $A$. These two cases are themselves subdivided into several subcases, depending on the position of the preimage $s$ of $q$ under the function $ x \mapsto \delta - \eta x$ with respect to $\mathcal{A}_X$, defined in Notation \ref{notrep}.

\begin{prop}\label{dimcode}
On $\mathcal{H}_0$, the dimension of the evaluation code $C_0(\delta_T,\delta_X)$ equals
\[\dim C_0(\delta_T,\delta_X)= \left(\min(\delta_T,q) + 1 \right)\left(\min(\delta_X,q) + 1 \right).  \]

\smallskip

On $\mathcal{H}_\eta$ with $\eta \geq 2$, we set 
\[m = \min( \left\lfloor A\right\rfloor,q-1), \: h=\left\{\begin{array}{cl} \min(\delta_T,q)+1& \text{if } \delta_T \geq 0 \text{ and } q \leq \delta_X,\\
0 & \text{otherwise,} \end{array}\right.\]
\[s=\frac{\delta-q}{\eta} \text{ and }  \tilde{s}=\left\{\begin{array}{cl}
\lfloor s \rfloor &\text{if } s \in [0,m], \\
-1 & \text{if } s < 0,\\
m &\text{if } s > m.
\end{array}\right. \]
The evaluation code $C_\eta(\delta_T,\delta_X)$ on the Hirzebruch Surface $\mathcal{H}_\eta$ has dimension
\[ \dim C_\eta(\delta_T,\delta_X)=(q+1)(\tilde{s} +1) + (m-\tilde{s})\left(\delta+1-\eta \left(\frac{m + \tilde{s} +1}{2}\right)\right) + h.\]
\end{prop}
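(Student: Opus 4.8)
The plan is to prove Proposition \ref{dimcode} by computing the cardinality of $\mathcal{K}^*(\delta_T,\delta_X)$ directly, since Corollary \ref{DIM2} already identifies $\dim C_\eta(\delta_T,\delta_X)$ with $\#\mathcal{K}^*(\delta_T,\delta_X)$. The overall strategy is to slice the set $\mathcal{K}^*(\delta_T,\delta_X)$ by the value of the first coordinate $\alpha$ and count, for each admissible $\alpha$, the number of admissible $\beta$; this reduces the whole problem to adding up a handful of arithmetic sums. Recall from Notation \ref{notrep} that for a fixed $\alpha \in \mathcal{A}_X$, the admissible $\beta$ are those with $0 \leq \beta \leq \min(\delta - \eta\alpha, q) - 1$ together with the single extra value $\beta = \delta - \eta\alpha$ (which coincides with the range only degenerately). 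So for each $\alpha$ the fiber has size $\min(\delta - \eta\alpha, q) - 1 + 1 + 1 = \min(\delta - \eta\alpha, q) + 1$, unless $\delta - \eta\alpha \le q-1$ in which case the top value is already counted and one must be careful — this bookkeeping is exactly what the quantity $\tilde s$ is designed to track.

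For the case $\eta = 0$ the polygon $\mathcal{P}(\delta_T,\delta_X)$ is the rectangle $\lsem 0,\delta_X\rsem \times \lsem 0,\delta_T\rsem$ and $\mathcal{K}^*$ is a product set: the admissible $\alpha$ form $\lsem 0,\min(\delta_X,q-1)\rsem \cup \{\delta_X\}$, which has $\min(\delta_X,q)+1$ elements, and likewise for $\beta$, since $\delta - \eta\alpha = \delta_T$ is constant. Multiplying gives $(\min(\delta_T,q)+1)(\min(\delta_X,q)+1)$ immediately; the only subtlety is checking that when $\delta_X \le q-1$ the extra point $\{\delta_X\}$ is already in the initial segment so we do not double count, which matches $\min(\delta_X,q)+1 = \delta_X+1$.

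For $\eta \ge 2$ I would first reduce to the description of $\mathcal{K}^*$ as a union over $\alpha$ and set $B(\alpha) = \delta - \eta\alpha$, noting $\mathcal{A}_X$ consists of $\lsem 0,m\rsem$ with $m = \min(\lfloor A\rfloor, q-1)$, plus possibly the vertex $A$ when $A \in \N$ and $A > q-1$. The value $s = (\delta-q)/\eta$ is the threshold: for $\alpha \le s$ we have $B(\alpha) \ge q$ so the fiber over $\alpha$ has size $q+1$; for $\alpha > s$ (still $\le m$) we have $B(\alpha) \le q-1$ so the fiber has size $B(\alpha)+1 = \delta+1-\eta\alpha$. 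Splitting the sum at $\tilde s$ gives: $(q+1)(\tilde s + 1)$ from the first block; then $\sum_{\alpha=\tilde s+1}^{m}(\delta+1-\eta\alpha)$ from the second block, which is an arithmetic progression with $m - \tilde s$ terms, average value $\delta+1-\eta\cdot\frac{(\tilde s+1)+m}{2}$, yielding the middle term $(m-\tilde s)\left(\delta+1-\eta\frac{m+\tilde s+1}{2}\right)$. Finally the correction $h$ accounts for the extra admissible values of $\alpha$ beyond $m$: when $\delta_T \ge 0$ and $q \le \delta_X$ we have $A = \delta_X$ and $m = q-1 < \delta_X$, so the vertex column $\alpha = \delta_X$ contributes, and over that column $B(\delta_X) = \delta_T$, giving a fiber of size $\min(\delta_T,q)+1$; in all other configurations either $A \notin \N$, or $A = m$ so no extra column exists, hence $h = 0$. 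I would also note that in the case (\ref{H}) the passage from $\mathcal{K}$ to $\mathcal{K}^*$ removes exactly the point $(A,0) = (\delta/\eta, 0)$, but this is already excluded here because (\ref{H}) forces $\delta_T < 0$, so $h = 0$ and no further adjustment is needed — the formula is uniform.

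The main obstacle is the case analysis hidden inside $\tilde s$ and $h$: one must verify that the three branches defining $\tilde s$ (namely $s < 0$, $s \in [0,m]$, $s > m$) together with the two branches for $h$ cover every situation consistently, in particular checking the boundary cases $\delta_X = q-1$, $\delta_T \in \{0, q-1, q\}$, $A \in \N$ versus $A \notin \N$, and $s$ non-integral, so that no fiber is miscounted and the arithmetic-sum formula telescopes correctly to the closed form. This is routine but delicate; I expect the cleanest route is to organise it exactly as the paper suggests — first dispatch $\eta=0$, then treat $\delta_T \ge 0$ and $\delta_T < 0$ separately, and within each subdivide according to the position of $s$ relative to $\mathcal{A}_X$ — rather than attempting a single unified computation.
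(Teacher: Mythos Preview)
Your proposal is correct and follows essentially the same route as the paper's proof: both compute $\#\mathcal{K}^*(\delta_T,\delta_X)$ by slicing over the first coordinate $\alpha$, counting the fiber sizes (which are $q+1$ for $\alpha\le s$ and $\delta-\eta\alpha+1$ for $\alpha>s$), summing the resulting arithmetic progression, and adding the correction $h$ for the extra column $\alpha=\delta_X$ when $\delta_T\ge 0$ and $q\le\delta_X$. The paper carries this out via an explicit case split on the sign of $\delta_T$ and the position of $q$ relative to $\delta$, $\delta_T$, $A$, $\frac{\eta}{\eta+1}A$, whereas you package the same cases into the single parameter $\tilde s$; your handling of (\ref{H}) --- observing that the removed point $(A,0)$ is exactly the would-be extra column when $\delta_T<0$, so $h=0$ already gives the right count --- is the same cancellation the paper performs case by case.
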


\begin{remark}
\begin{enumerate}
\item For $q$ large enough, the dimension is nothing but the number of points of $\mathcal{P}(\delta_T,\delta_X)$. This case was already studied by J. P. Hansen (\cite{HanToric} Theorem 1.5.)
\item The previous proposition generalizes the result of \cite{CN}, in which the authors studied rational scrolls. For $a_1 \geq a_2 \geq 1$, the rational scroll $S(a_1,a_2)$ is isomorphic to the Hirzebruch surface $\mathcal{H}_{(a_1-a_2)}$ (\cite{CoxToric} Example 3.1.16.). This geometric isomorphism induces a Hamming isometry between the codes. We thus can compare our result with theirs for $\eta=a_1-a_2$ and $\delta_T=a_2\delta_X$. Despite the appearing difference due to a different choice of monomial order (see Definition \ref{ordre} and Remark \ref{rkCN}), both formulae do coincide.
\end{enumerate}
\end{remark}%

\begin{proof}
\textbf{To prove the case $\eta=0$}, it is enough to write
\[\mathcal{K}^*(\delta_T,\delta_X)=\mathcal{K}(\delta_T,\delta_X)=\left\lbrace  (\alpha,\beta) \in \N^2 \: \left| \:
\begin{array}{c} \alpha \in \lsem 0 , \min(\delta_X,q)-1 \rsem \cup \{\delta_X\} \\
0 \leq \beta \leq \min(\delta_T,q)-1 \text{ or } \beta=\delta -\eta\alpha
\end{array}\right.\right\rbrace.\]

\smallskip

\textbf{Now, assume $\eta \geq 2$ and $\delta_T > 0$.} Notice that the sets $\mathcal{K}^*(\delta_T,\delta_X)$ and $\mathcal{K}(\delta_T,\delta_X)$ always coincide in this case.
\begin{itemize}
\item Let us assume that $q > \delta_X$.

\begin{itemize}
\item If $q > \delta$ also, then $s < 0$ and 
\[\mathcal{K}(\delta_T,\delta_X)= \bigcup_{\alpha=0}^{\delta_X} \{(\alpha,\beta) \: | \: \beta \in \lsem 0,\delta -\eta\alpha\rsem\}\]%
\begin{flalign*}\text{and thus }\#\mathcal{K}(\delta_T,\delta_X)&=\sum_{\alpha=0}^{\delta_X} (\delta -\eta\alpha+1)= (\delta_X+1)\left(\delta_T+\eta  \frac{\delta_X}{2}+1\right).& \end{flalign*}
\item If $ \delta_T \leq q \leq \delta$, then $0 \leq s \leq \delta_X$ and one can write
\[\begin{aligned}\mathcal{K}(\delta_T,\delta_X)=&\left(\bigcup_{\alpha=0}^{\lfloor s \rfloor} \{(\alpha,\beta) \: | \: \beta \in \lsem 0,q-1\rsem \cup\{\delta -\eta\alpha\}\}\right)\\
& \cup \left( \bigcup_{\alpha=\lfloor s \rfloor + 1}^{\delta_X} \{(\alpha,\beta) \: | \: \beta \in \lsem 0,\delta -\eta\alpha\rsem\} \right).\end{aligned}\]
\[ \begin{aligned}\text{and thus } \#\mathcal{K}(\delta_T,\delta_X)&=\sum_{\alpha=0}^{\lfloor s \rfloor} (q+1) + \sum_{\alpha=\lfloor s \rfloor + 1}^{\delta_X} (\delta +1 -\eta\alpha)&\\
&= (q+1)(\lfloor s \rfloor +1) + (\delta_X-\lfloor s \rfloor)\left(\delta+1-\eta \frac{\delta_X+ \lfloor s \rfloor+1}{2}\right).& \end{aligned}\]

\item If $\delta_X < q < \delta_T$, then $s > \delta_X$ and 
\[\mathcal{K}(\delta_T,\delta_X)=\left(\bigcup_{\alpha=0}^{\delta_X} \{(\alpha,\beta) \: | \: \beta \in \lsem 0,q-1\rsem \cup\{\delta -\eta\alpha\}\}\right)\]
and then $\#\mathcal{K}(\delta_T,\delta_X)=(q+1)(\delta_X+1)$.
\end{itemize}

\item Let us assume that $q \leq\delta_X$.

\begin{itemize}
\item If $\frac{\delta}{\eta +1} < q$, then $0 \leq s < q$ and $\lfloor s \rfloor \in \mathcal{A}_X$.
\[\begin{aligned}\mathcal{K}(\delta_T,\delta_X)=&\left(\bigcup_{\alpha=0}^{\lfloor s \rfloor} \{(\alpha,\beta) \: | \: \beta \in \lsem 0,q-1\rsem \cup\{\delta -\eta\alpha\}\}\right)\\
& \cup \left( \bigcup_{\alpha=\lfloor s \rfloor + 1}^{q-1} \{(\alpha,\beta) \: | \: \beta \in \lsem 0,\delta -\eta\alpha\rsem\} \right)\\
& \cup \{(\delta_X,\beta) \: | \: \beta \in \lsem 0 , h\rsem \}.\end{aligned}\]
Then
\[ \begin{aligned}\#\mathcal{K}(\delta_T,\delta_X)&=\sum_{\alpha=0}^{\lfloor s \rfloor} (q+1) + \sum_{\alpha=\lfloor s \rfloor + 1}^{q-1} (\delta+1 -\eta\alpha) + h+1.\\
&= (q+1)(\lfloor s \rfloor +1)+ (q-1-\lfloor s \rfloor)\left(\delta+1-\eta \frac{q + \lfloor s \rfloor}{2}\right) + h+1 \end{aligned}\]
\item If $q \leq \frac{\delta}{\eta +1}$, then $s \geq q$ and 
\[\mathcal{K}(\delta_T,\delta_X)=\left(\bigcup_{\alpha=0}^{q-1} \{(\alpha,\beta) \: | \: \beta \in \lsem 0,q-1\rsem \cup\{\delta -\eta\alpha\}\}\right) \cup \{(\delta_X,\beta) \: | \: \beta \in \lsem 0 , h\rsem \}.\]
Then $\#\mathcal{K}(\delta_T,\delta_X)=(q+1)q + h+1$.
\end{itemize}
\end{itemize}
\smallskip

\textbf{Finally assume $\eta \geq 2$ and $\delta_T \leq 0$.}

Let us rewrite $\mathcal{K}^*(\delta_T,\delta_X)$ to lead to formulae that coincide with the general one given above according to the position of $q$ in the increasing sequence 
\[\frac{\eta}{\eta+1} A < A \leq \eta A,\]
with $A=\frac{\delta}{\eta}$. For any $\alpha \in \mathcal{A}_X$,
\[q \leq \delta -\eta\alpha \Leftrightarrow \alpha \leq \frac{\delta - q }{\eta} = s < A.\]

\begin{itemize}
\item If $q > \eta A$, then $\mathcal{K}^*(\delta_T,\delta_X)=\mathcal{K}(\delta_T,\delta_X)$, $s < 0$ and we can write
\[\mathcal{K}(\delta_T,\delta_X)= \bigcup_{\alpha=0}^{\lfloor A \rfloor} \{(\alpha,\beta) \: | \: \beta \in \lsem 0,\delta -\eta\alpha\rsem\}\]%
\begin{flalign*}%
\text{and thus } \#\mathcal{K}(\delta_T,\delta_X)&=\sum_{\alpha=0}^{\lfloor A \rfloor} (\delta -\eta\alpha+1)= (\lfloor A \rfloor+1)\left(\delta+1 -\eta\frac{\lfloor A \rfloor}{2}\right).&
\end{flalign*}

\item If $A < q \leq \eta A$, we know that $\mathcal{K}^*(\delta_T,\delta_X)=\mathcal{K}(\delta_T,\delta_X)$ and we have
\[\mathcal{K}(\delta_T,\delta_X)=\left(\bigcup_{\alpha=0}^{\lfloor s \rfloor} \{(\alpha,\beta) \: | \: \beta \in \lsem 0,q-1\rsem \cup\{\delta -\eta\alpha\}\}\right) \cup \left( \bigcup_{\alpha=\lfloor s \rfloor + 1}^{\lfloor A \rfloor} \{(\alpha,\beta) \: | \: \beta \in \lsem 0,\delta -\eta\alpha\rsem\} \right)\]
and then
\[ \begin{aligned}\#\mathcal{K}(\delta_T,\delta_X)&=\sum_{\alpha=0}^{\lfloor s \rfloor} (q+1) + \sum_{\alpha=\lfloor s \rfloor + 1}^{\lfloor A \rfloor} (\delta -\eta\alpha+1)\\
&= (q+1)(\lfloor s \rfloor +1) + (\lfloor A \rfloor-\lfloor s \rfloor)\left(\delta+1-\eta \frac{\lfloor A \rfloor + \lfloor s \rfloor+1}{2}\right). \end{aligned}\]

\item If $\frac{\eta}{\eta+1}A < q \leq A$, then $q-1 < \lfloor A \rfloor$. Note that $\mathcal{K}^*(\delta_T,\delta_X) \neq \mathcal{K}(\delta_T,\delta_X)$ and

\[\begin{aligned}\mathcal{K}^*(\delta_T,\delta_X)=&\left(\bigcup_{\alpha=0}^{\lfloor s \rfloor} \{(\alpha,\beta) \: | \: \beta \in \lsem 0,q-1\rsem \cup\{\delta -\eta\alpha\}\}\right)\\
& \cup \left( \bigcup_{\alpha=\lfloor s \rfloor + 1}^{q-1} \{(\alpha,\beta) \: | \: \beta \in \lsem 0,\delta -\eta\alpha\rsem\} \right)\end{aligned}\]

Then
\[ \begin{aligned}\#\mathcal{K}^*(\delta_T,\delta_X)&=\sum_{\alpha=0}^{\lfloor s \rfloor} (q+1) + \sum_{\alpha=\lfloor s \rfloor + 1}^{q-1} (\delta -\eta\alpha+1)\\
&= (q+1)(\lfloor s \rfloor +1) + (q-1-\lfloor s \rfloor)\left(\delta+1-\eta \frac{q + \lfloor s \rfloor}{2}\right) \end{aligned}\]

\item If $q \leq \frac{\eta}{\eta +1}A$, then $s \geq q$, $\mathcal{K}^*(\delta_T,\delta_X) \neq \mathcal{K}(\delta_T,\delta_X)$ and 
\[\mathcal{K}^*(\delta_T,\delta_X)=\left(\bigcup_{\alpha=0}^{q-1} \{(\alpha,\beta) \: | \: \beta \in \lsem 0,q-1\rsem \cup\{\delta -\eta\alpha\}\}\right)\]
which gives $\#\mathcal{K}^*(\delta_T,\delta_X)=(q+1)q$.
\end{itemize}
\end{proof}

\subsection{Examples}

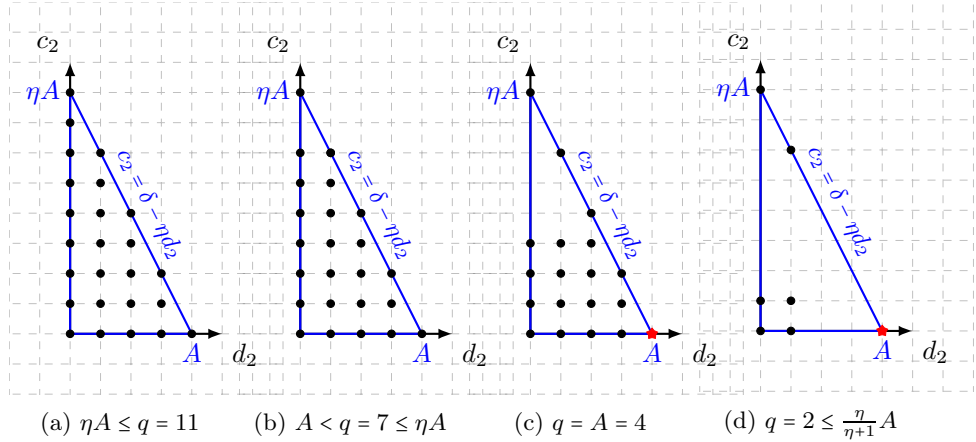
\begin{figure}[h!]

\begin{subfigure}[b]{0.24\textwidth}
\centering
  \begin{tikzpicture}[scale=0.4]

    \clip (-2,-2) rectangle (7cm,11cm); % Clips the picture...

    \draw[style=help lines,dashed,opacity=0.5] (-4,-10) grid[step=1cm] (14,15); % Draws a grid in the new coordinates.

 \draw [thick,-latex] (0,0) -- (0,9) node [above left] {$c_2$};
    \draw [thick,-latex] (0,0) -- (5,0) node [below right] {$d_2$};
        
    \draw [thick,blue] (0,0) -- (0,8) node [left] {$\eta A$};
    \draw [thick,blue] (0,0) -- (4,0) node [below] {$A$} ;
   \draw [thick,blue] (0,8) -- (4,0) node [midway,above,sloped,font=\small] {$c_2=\delta-\eta d_2$} ;

   \node[draw,circle,inner sep=1pt,fill,black] at (4,0) {};
    \foreach \y in {0,1,2,3,4,5,6,7,8}{                       % node on the grid we have drawn 
        \node[draw,circle,inner sep=1pt,fill,black] at (0,\y) {}; % Places a dot at those points}
        }
    \foreach \y in {0,1,2,3,4,5,6}{                       % node on the grid we have drawn 
        \node[draw,circle,inner sep=1pt,fill,black] at (1,\y) {}; % Places a dot at those points}
        }
    \foreach \y in {0,1,2,3,4}{                       % node on the grid we have drawn 
        \node[draw,circle,inner sep=1pt,fill,black] at (2,\y) {}; % Places a dot at those points}
        }
    \foreach \y in {0,1,2}{                       % node on the grid we have drawn 
        \node[draw,circle,inner sep=1pt,fill,black] at (3,\y) {}; % Places a dot at those points}
        }
  \end{tikzpicture}
  \subcaption{$\eta A \leq q=11$}
\end{subfigure} 
\begin{subfigure}[b]{0.24\textwidth}
\centering
  \begin{tikzpicture}[scale=0.4]

    \clip (-2,-2) rectangle (7cm,11cm); % Clips the picture...

    \draw[style=help lines,dashed,opacity=0.5] (-4,-10) grid[step=1cm] (14,15); % Draws a grid in the new coordinates.

 \draw [thick,-latex] (0,0) -- (0,9) node [above left] {$c_2$};
    \draw [thick,-latex] (0,0) -- (5,0) node [below right] {$d_2$};
        
    \draw [thick,blue] (0,0) -- (0,8) node [left] {$\eta A$};
    \draw [thick,blue] (0,0) -- (4,0) node [below] {$A$} ;
   \draw [thick,blue] (0,8) -- (4,0) node [midway,above,sloped,font=\small] {$c_2=\delta-\eta d_2$} ;
   
       \foreach \x in {0,1}{                           % Two indices running over each
      \foreach \y in {0,1,2,3,4,5,6,8-2*\x}{                       % node on the grid we have drawn 
        \node[draw,circle,inner sep=1pt,fill,black] at (\x,\y) {}; % Places a dot at those points
      }
    }   
   \node[draw,circle,inner sep=1pt,fill,black] at (4,0) {};
    \foreach \y in {0,1,2,3,4}{                       % node on the grid we have drawn 
        \node[draw,circle,inner sep=1pt,fill,black] at (2,\y) {}; % Places a dot at those points}
        }
    \foreach \y in {0,1,2}{                       % node on the grid we have drawn 
        \node[draw,circle,inner sep=1pt,fill,black] at (3,\y) {}; % Places a dot at those points}
        }
  \end{tikzpicture}
  \subcaption{$A < q = 7 \leq \eta A$}
\end{subfigure} 
\begin{subfigure}[b]{0.24\textwidth}
\centering
  \begin{tikzpicture}[scale=0.4]

    \clip (-2,-2) rectangle (7cm,11cm); % Clips the picture...

    \draw[style=help lines,dashed,opacity=0.5] (-4,-10) grid[step=1cm] (14,15); % Draws a grid in the new coordinates.

 \draw [thick,-latex] (0,0) -- (0,9) node [above left] {$c_2$};
    \draw [thick,-latex] (0,0) -- (5,0) node [below right] {$d_2$};
        
    \draw [thick,blue] (0,0) -- (0,8) node [left] {$\eta A$};
    \draw [thick,blue] (0,0) -- (4,0) node [below] {$A$} ;
   \draw [thick,blue] (0,8) -- (4,0) node [midway,above,sloped,font=\small] {$c_2=\delta-\eta d_2$} ;
  
       \foreach \x in {0,1,2}{                           % Two indices running over each
      \foreach \y in {0,1,2,3,8-2*\x}{                       % node on the grid we have drawn 
        \node[draw,circle,inner sep=1pt,fill,black] at (\x,\y) {}; % Places a dot at those points
      }
    }   
   \node[draw,star,inner sep=1pt,fill,red] at (4,0) {};
    \foreach \y in {0,1,2}{                       % node on the grid we have drawn 
        \node[draw,circle,inner sep=1pt,fill,black] at (3,\y) {}; % Places a dot at those points}
        }
  \end{tikzpicture}
  \subcaption{$q = A=4$}
\end{subfigure} 
\begin{subfigure}[b]{0.24\textwidth}
\centering
  \begin{tikzpicture}[scale=0.4]

    \clip (-2,-2) rectangle (7cm,11cm); % Clips the picture...

    \draw[style=help lines,dashed,opacity=0.5] (-4,-10) grid[step=1cm] (14,15); % Draws a grid in the new coordinates.

 \draw [thick,-latex] (0,0) -- (0,9) node [above left] {$c_2$};
    \draw [thick,-latex] (0,0) -- (5,0) node [below right] {$d_2$};
        
    \draw [thick,blue] (0,0) -- (0,8) node [left] {$\eta A$};
    \draw [thick,blue] (0,0) -- (4,0) node [below] {$A$} ;
   \draw [thick,blue] (0,8) -- (4,0) node [midway,above,sloped,font=\small] {$c_2=\delta-\eta d_2$} ;
  
       \foreach \x in {0,1}{                           % Two indices running over each
      \foreach \y in {0,1,8-2*\x}{                       % node on the grid we have drawn 
        \node[draw,circle,inner sep=1pt,fill,black] at (\x,\y) {}; % Places a dot at those points
      }
    }   
   \node[draw,star,inner sep=1pt, fill,red] at (4,0) {};
  \end{tikzpicture}
  \subcaption{$q =2 \leq \frac{\eta}{\eta+1} A$}
\end{subfigure}  
 
\caption{$\mathcal{P}(-2,5)$ in $\mathcal{H}_2$ for different values for $q$.}\label{f5}
\end{figure}

\begin{example}
Let us compute the dimension of the code $C_2(-2,5)$ using the previous formula on different finite fields. We have $A=4 \in \N$. Beware that $\eta$ divides $\delta _T$, so (\ref{H}) may hold. See Figure \ref{f5}.
\begin{itemize}
\item On $\F_{11}$, $m=A, \: s<0, \: \tilde{s}=-1$,
\[\dim C_2(-2,5)=(4+1)\left(-2+2\left(5-\frac{4}{2}\right)+1\right)=25. \]
\item On $\F_7$, $m=A, \: s=\tilde{s}=0$, 
\[ \dim C_2(-2,5)=(7+1)+4\left(-2+2\left(5-\frac{5}{2}\right)+1\right)=8+16=24. \]
\item On $\F_4$, $m=3, \: s=\tilde{s}=2$. Then (\ref{H}) holds and 
\[\dim C_2(-2,5)=(4+1)(2+1)+\left(-2+2\left(5-\frac{6}{2}\right)+1\right)=15+3=18. \]
\item On $\F_2$, $m=1, \: s=\tilde{s}=1$. Then (\ref{H}) holds and
\[ \dim C_2(-2,5)=(2+1)(1+1)=6. \]
\end{itemize}
\end{example}

\begin{example}
To illustrate the cases $q \leq \delta_X$, let us compute the dimension of the code $C_2(1,3)$ using the previous formula on $\F_3$ and $\F_2$. See Figure \ref{f6}.
\begin{itemize}
\item On $\F_3$, $m=2, \: s=\tilde{s}=2, \: h=1$, $\dim C_2(1,3)=(3+1)(2+1)+1+1=14$.
\item On $\F_2$, $m=1, \: s=2.5>m,  \:\tilde{s}=1, \: h=1$, $\dim C_2(-2,5)=(2+1)(1+1)+1+1=6+1=8$.
\end{itemize}
\end{example}

\begin{figure}[h!]
\begin{subfigure}[b]{0.48\textwidth}
\centering
\begin{tikzpicture}[scale=0.6]
    \clip (-3,-1) rectangle (6cm,9cm); % Clips the picture...
    \draw[style=help lines,dashed,opacity=0.5] (-4,-17) grid[step=1cm] (14,17); % Draws a grid in the new coordinates.

 \draw [thick,-latex] (0,0) -- (0,8) node [above left] {$c_2$};
    \draw [thick,-latex] (0,0) -- (4,0) node [below right] {$d_2$};

    \draw [thick,blue] (0,0) -- (0,7 ) node [left,font=\small] {$\delta$};
    \draw [thick,blue] (0,0) -- (3,0) node [below, font=\small] {$\delta_X$} ;    
    \draw [thick,blue] (3,0) -- (3,1) node [below] {} ;
        \draw [thick,blue] (3,0) -- (3,1) node [below] {} ;
                \draw [thick,blue] (0,7) -- (3,1) node [midway,above,sloped] {$c_2=\delta- \eta d_2$} ;
   \draw [thick,blue,dashed] (3,1) -- (0,1) node [left,font=\small] {$\delta_T$} ;

       \foreach \x in {0,1,2}{                           % Two indices running over each
      \foreach \y in {0,1,2,7-2*\x}{                       % node on the grid we have drawn 
        \node[draw,circle,inner sep=1pt,fill,black] at (\x,\y) {}; % Places a dot at those points
      }
    }  
    \foreach \y in {0,1}{                       % node on the grid we have drawn 
        \node[draw,circle,inner sep=1pt,fill,black] at (3,\y) {}; % Places a dot at those points
      } 
  \end{tikzpicture} 
\subcaption{$\frac{\delta}{\eta+1} < q=\delta_X=3$}
\end{subfigure}%
\begin{subfigure}[b]{0.48\textwidth}
\centering
\begin{tikzpicture}[scale=0.6]
    \clip (-3,-1) rectangle (6cm,9cm); % Clips the picture...
    \draw[style=help lines,dashed,opacity=0.5] (-4,-17) grid[step=1cm] (14,17); % Draws a grid in the new coordinates.

 \draw [thick,-latex] (0,0) -- (0,8) node [above left] {$c_2$};
    \draw [thick,-latex] (0,0) -- (4,0) node [below right] {$d_2$};

    \draw [thick,blue] (0,0) -- (0,7 ) node [left,font=\small] {$\delta$};
    \draw [thick,blue] (0,0) -- (3,0) node [below, font=\small] {$\delta_X$} ;    
    \draw [thick,blue] (3,0) -- (3,1) node [below] {} ;
        \draw [thick,blue] (3,0) -- (3,1) node [below] {} ;
                \draw [thick,blue] (0,7) -- (3,1) node [midway,above,sloped] {$c_2=\delta- \eta d_2$} ;
   \draw [thick,blue,dashed] (3,1) -- (0,1) node [left,font=\small] {$\delta_T$} ;

       \foreach \x in {0,1,3}{                           % Two indices running over each
      \foreach \y in {0,1,7-2*\x}{                       % node on the grid we have drawn 
        \node[draw,circle,inner sep=1pt,fill,black] at (\x,\y) {}; % Places a dot at those points
      }
    }   
  \end{tikzpicture} 
\subcaption{$q=2 \leq \frac{\delta}{\eta+1}$}
\end{subfigure}
\caption{$\mathcal{P}(1,3)$ in $\mathcal{H}_2$}\label{f6}
\end{figure}

\begin{figure}[h!]
\begin{subfigure}[b]{0.32\textwidth}
\begin{tikzpicture}[scale=0.5]
    \clip (-3,-1) rectangle (5cm,13cm); % Clips the picture...
    \draw[style=help lines,dashed,opacity=0.5] (-4,-17) grid[step=1cm] (14,17); % Draws a grid in the new coordinates.

 \draw [thick,-latex] (0,0) -- (0,12) node [above left] {$c_2$};
    \draw [thick,-latex] (0,0) -- (4,0) node [below right] {$d_2$};

    \draw [thick,blue] (0,0) -- (0,11 ) node [left,font=\small] {$\delta$};
    \draw [thick,blue] (0,0) -- (3,0) node [below,font=\small] {$\delta_X$} ;    
    \draw [thick,blue] (3,0) -- (3,5) node [below] {} ;
       \draw [thick,blue,dashed] (3,5) -- (0,5) node [left,font=\small] {$\delta_T$} ;
   \draw [thick,blue] (0,11) -- (3,5) node [midway,above,sloped,font=\small] {$c_2=\delta- \eta d_2$} ;
          \foreach \y in {0,1,2,3,4,5,6,7,8,9,10,11}{                       % node on the grid we have drawn 
        \node[draw,circle,inner sep=1pt,fill,black] at (0,\y) {}; % Places a dot at those points
      } 
          \foreach \y in {0,1,2,3,4,5,6,7,8,9}{                       % node on the grid we have drawn 
        \node[draw,circle,inner sep=1pt,fill,black] at (1,\y) {}; % Places a dot at those points
      } 
          \foreach \y in {0,1,2,3,4,5,6,7}{                       % node on the grid we have drawn 
        \node[draw,circle,inner sep=1pt,fill,black] at (2,\y) {}; % Places a dot at those points
      } 
          \foreach \y in {0,1,2,3,4,5}{                       % node on the grid we have drawn 
        \node[draw,circle,inner sep=1pt,fill,black] at (3,\y) {}; % Places a dot at those points
      } 
  \end{tikzpicture} 
\subcaption{$\delta < q=13$}
\end{subfigure}
\begin{subfigure}[b]{0.32\textwidth}
\begin{tikzpicture}[scale=0.5]
    \clip (-3,-1) rectangle (5cm,13cm); % Clips the picture...
    \draw[style=help lines,dashed,opacity=0.5] (-4,-17) grid[step=1cm] (14,17); % Draws a grid in the new coordinates.

 \draw [thick,-latex] (0,0) -- (0,12) node [above left] {$c_2$};
    \draw [thick,-latex] (0,0) -- (4,0) node [below right] {$d_2$};

    \draw [thick,blue] (0,0) -- (0,11 ) node [left,font=\small] {$\delta$};
    \draw [thick,blue] (0,0) -- (3,0) node [below,font=\small] {$\delta_X$} ;    
    \draw [thick,blue] (3,0) -- (3,5) node [below] {} ;
       \draw [thick,blue,dashed] (3,5) -- (0,5) node [left,font=\small] {$\delta_T$} ;
   \draw [thick,blue] (0,11) -- (3,5) node [midway,above,sloped,font=\small] {$c_2=\delta- \eta d_2$} ;
       \foreach \x in {0,1,2}{                           % Two indices running over each
      \foreach \y in {0,1,2,3,4,5,6,11-2*\x}{                       % node on the grid we have drawn 
        \node[draw,circle,inner sep=1pt,fill,black] at (\x,\y) {}; % Places a dot at those points
      }
    }  
          \foreach \y in {0,1,2,3,4,5}{                       % node on the grid we have drawn 
        \node[draw,circle,inner sep=1pt,fill,black] at (3,\y) {}; % Places a dot at those points
      } 
  \end{tikzpicture} 
\subcaption{$\delta_T \leq q=7 \leq \delta $}
\end{subfigure}
\begin{subfigure}[b]{0.32\textwidth}
\begin{tikzpicture}[scale=0.5]
    \clip (-3,-1) rectangle (5cm,13cm); % Clips the picture...
    \draw[style=help lines,dashed,opacity=0.5] (-4,-17) grid[step=1cm] (14,17); % Draws a grid in the new coordinates.

 \draw [thick,-latex] (0,0) -- (0,12) node [above left] {$c_2$};
    \draw [thick,-latex] (0,0) -- (4,0) node [below right] {$d_2$};

    \draw [thick,blue] (0,0) -- (0,11 ) node [left,font=\small] {$\delta$};
    \draw [thick,blue] (0,0) -- (3,0) node [below,font=\small] {$\delta_X$} ;    
    \draw [thick,blue] (3,0) -- (3,5) node [below] {} ;
       \draw [thick,blue,dashed] (3,5) -- (0,5) node [left,font=\small] {$\delta_T$} ;
   \draw [thick,blue] (0,11) -- (3,5) node [midway,above,sloped,font=\small] {$c_2=\delta- \eta d_2$} ;
       \foreach \x in {0,1,2,3}{                           % Two indices running over each
      \foreach \y in {0,1,2,3,11-2*\x}{                       % node on the grid we have drawn 
        \node[draw,circle,inner sep=1pt,fill,black] at (\x,\y) {}; % Places a dot at those points
      }
    }   
  \end{tikzpicture} 
\subcaption{$\delta_X<q=4< \delta_T$}
\end{subfigure}
\caption{$\mathcal{P}(5,3)$ in $\mathcal{H}_2$}\label{f7}
\end{figure}

\begin{example}
On $\mathcal{H}_2$, let us compute the dimension of the code $C_2(5,3)$ on the finite fields $\F_{13}$, $\F_7$ and $\F_4$. See Figure \ref{f7}. Since $q > \delta_X$, we have $m=\delta_X=3$. 
\begin{itemize}
\item On $\F_{13}$, $s<0$ then $\tilde{s}=-1$.
\[\dim C_2(5,3)=(3+1)(5+3+1)=36.\]
\item On $\F_7$, $s=\tilde{s}=2$.
\[\dim C_2(5,3)=(7+1)(2+1)+(3-2)\left(5+2\left(3-\frac{3+2+1}{2}\right)+1\right)=24+6=30.\]
\item On $\F_4$, $s>m$ then $\tilde{s}=m=3$.
\[\dim C_2(5,3)=(4+1)(3+1)=20.\]
\end{itemize}
\end{example}

\section{Gr\"obner Basis}\label{secgrob}

Our strategy to compute the dimension of the code highlights the key role of monomials in our study. Monomials remain crucial in the calculus of the minimum distance, through the use of Gr\"obner bases. Until now, every technique we used has come from linear algebra, focusing on the finite dimensional vector spaces $R(\delta_T,\delta_X)$ and vector subspaces $\ker \ev_{(\delta_T,\delta_X)}$. However considering a convenient ideal of the ring $R$ gives the possibility of using algebraic tools, Gr\"obner bases theory here, to handle the minimum distance problem.

\smallskip

Let us first recall classical facts about Gr\"obner bases. The reference for this section is \cite{CLO}.

Let $R$ be a polynomial ring. A \emph{monomial order} is a total order on the monomials, denoted by $<$, satisfying the following compatibility property with multiplication: for all monomials $M, \: N, \: P$,
\[ M < N \: \Rightarrow \: MP < NP \text{ and } M < MP.\]
For every polynomial $F \in R$, one can define the \emph{leading monomial} of $F$, denoted by $\LM(F)$, to be the greatest monomial for this ordering that appears in $F$. The \emph{leading term} of $F$ is denoted by $\LT(F)$ and is defined as the leading monomial of $F$ multiplied by its coefficient in $F$.

Let $I$ be an ideal of the polynomial ring $R$, endowed with a monomial order $<$. The \emph{monomial ideal} $\LT(I) \subset R$ associated to $I$ is the ideal generated by the leading terms $\LT(F)$ of all polynomials $F \in I$.  A finite subset $G$ of an ideal $I$ is a \emph{Gr\"obner basis} of the ideal $I$ if $\LT(I)=\left\langle \LT(g) \: | \:  g \in G \right\rangle$.

The pleasing property of Gr\"obner bases (see \cite{Sturm} Proposition 1.1) that will be used to compute the minimum distance of the code is the following.

\begin{prop}\label{Grob}
Let $I$ be an ideal of a polynomial ring $R$ with Gr\"obner basis $G$. Then, setting $\pi$ as the canonical projection of $R$ onto $\faktor{R}{I}$, the set
\[\{\pi(M) \: | \: M \text{ monomials of } R \text{ such that for all } g \in G, \: \LT(g) \nmid M\}\]
is a basis of $\faktor{R}{I}$ as a vector space.
\end{prop}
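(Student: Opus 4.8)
The plan is to deduce the statement from the multivariate division algorithm relative to the ordered Gröbner basis $G=\{g_1,\dots,g_s\}$, exactly as in the classical treatment of \cite{CLO}. Call a monomial $M$ of $R$ \emph{standard} if $\LT(g)\nmid M$ for every $g\in G$, and let $\mathcal{S}$ denote the set of standard monomials; the claim splits into showing that $\{\pi(M)\mid M\in\mathcal{S}\}$ spans $\faktor{R}{I}$ and that it is linearly independent.

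For the spanning part, I would take an arbitrary $f\in R$ and run the division algorithm of $f$ by $g_1,\dots,g_s$, producing an expression $f=\sum_{i=1}^s q_i g_i+r$ in which $r$ is a $k$-linear combination of standard monomials — this is precisely the exit condition of the algorithm, and termination itself relies on a monomial order being a well-order on monomials. Since $\sum_i q_i g_i\in I$, one gets $\pi(f)=\pi(r)$, so $\pi(f)$ lies in the span of $\{\pi(M)\mid M\in\mathcal{S}\}$; as $f$ is arbitrary, these classes span $\faktor{R}{I}$.

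For linear independence, I would argue by contradiction: suppose $\sum_{M\in\mathcal{S}}c_M M\in I$ for scalars $c_M$ not all zero, and call this element $f$, so $f\neq 0$. Then $\LT(f)\in\LT(I)=\langle\LT(g_1),\dots,\LT(g_s)\rangle$ by the defining property of a Gröbner basis. Invoking the elementary fact that a monomial belonging to a monomial ideal is divisible by one of its monomial generators, I obtain $\LT(g_i)\mid\LM(f)$ for some $i$; but $\LM(f)$ is one of the monomials $M$ occurring in $f$ with nonzero coefficient, hence $M\in\mathcal{S}$, contradicting $\LT(g_i)\nmid M$. Therefore all $c_M$ vanish and the family is independent.

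I do not expect a genuine obstacle here: the two ingredients — that the division remainder is supported on standard monomials (and that the algorithm terminates) and that a monomial lying in a monomial ideal is a multiple of one of the generators — are both classical and recalled in \cite{CLO}, and the proof simply glues them together through the identity $\LT(I)=\langle\LT(g)\mid g\in G\rangle$. The only mild point of attention is to keep the "remainder" genuinely free of monomials divisible by any leading term, which is built into the division algorithm's exit test.
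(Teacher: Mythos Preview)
Your proof is correct and is precisely the standard argument from \cite{CLO}. Note that the paper does not actually prove this proposition: it is quoted as a background fact with a reference to \cite{Sturm}, Proposition~1.1, so there is no in-paper proof to compare against.
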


\smallskip

Now that the necessary background is set up, let us define the ideal we shall use here.

\begin{notation}
Set $\displaystyle \mathcal{I}=\bigoplus_{(\delta_T,\delta_X) \in \Z \times \N} \mathcal{I}(\delta_T,\delta_X)$.
\end{notation}

Therefore, the ideal $\mathcal{I}$ is homogeneous : whenever it contains an element, it also contains all the homogeneous components of this element. This entails that  $\mathcal{I}$ is the homogeneous vanishing ideal of the subvariety consisting of the $\F_q$-rational points of the Hirzebruch surface $\mathcal{H}_\eta$.

\smallskip

Another ingredient to benefit from Gr\"obner bases theory is a suitable monomial order over $R=\F_q[T_1,T_2,X_1,X_2]$.

\begin{definition}\label{ordre}
Let us define a order on monomials of $R$ by stating that
\[T_1^{c'_1}T_2^{c'_2}X_1^{d'_1}X_2^{d'_2} < T_1^{c_1}T_2^{c_2}X_1^{d_1}X_2^{d_2}\]
if and only if
\[d'_1+d'_2 < d_1+d_2 \text{ or } \left\vert\begin{array}{rl}d'_1+d'_2&=d_1+d_2 \\ d'_2 &< d_2\end{array}\right. \text{ or } \left\vert\begin{array}{cl}d'_1&=d_1 \\ d'_2 &= d_2\\
c'_2&<c_2\end{array}\right. \text{ or } \left\vert\begin{array}{cl}d'_1&=d_1 \\ d'_2 &= d_2\\
c'_2&=c_2\\c'_1& < c_1\end{array}\right.\]
\end{definition}
One can easily check that $<$ is a monomial order.

\begin{remark}\label{rkCN}
Notice that exchanging the role of $d_1$ and $d_2$ and the one of $c_1$ and $c_2$, we recover the monomial order chosen by Carvalho and Neumann \cite{CN}.
\end{remark}

The choice of this monomial order is motivated by the choice of representatives of monomials under $\equiv$, hence by the choice of the  projection map $\pi_{(\delta_T,\delta_X)}$, as stated by the following lemma.
\begin{lemma}\label{lt}
Any monomial $M\in \Mon(\delta_T,\delta_X)$ is greater than the leading term of its image under $\pi_{(\delta_T,\delta_X)}$.
\end{lemma}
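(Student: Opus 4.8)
The plan is to compare, monomial by monomial, the input $M = M(d_2,c_2)$ with the output of $\pi_{(\delta_T,\delta_X)}$, using the explicit description of $p_{(\delta_T,\delta_X)}$ from Definition \ref{proj} together with the monomial order of Definition \ref{ordre}. The key observation is that the order first compares the total $X$-degree $d_1+d_2$ (which equals $\delta_X$ and is therefore constant on $\Mon(\delta_T,\delta_X)$), then $d_2$, then $c_2$, then $c_1$. So it suffices to track how $d_2$, then $c_2$, then $c_1$ change under the reduction.

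First I would treat the generic case, where $\pi_{(\delta_T,\delta_X)}(M) = M(d_2', c_2')$ with $(d_2',c_2') = p_{(\delta_T,\delta_X)}(d_2,c_2)$. By the construction in Definition \ref{proj}, either $d_2' = d_2$ (when $d_2 \in \{0,A\}$), or $1 \le d_2' \le q-1$ with $d_2' \equiv d_2 \bmod q-1$; in the latter case, since $d_2 \le \lfloor A\rfloor$ and $d_2 \notin\{0,A\}$, one checks $d_2' \le d_2$, with equality only when $d_2$ itself already lies in $\lsem 1,q-1\rsem$. If $d_2' < d_2$, then $M(d_2',c_2') < M$ immediately from the second clause of the order. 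If $d_2' = d_2$, then by construction $c_2' = c_2$ when $c_2\in\{0,\delta-\eta d_2\}$ and otherwise $1\le c_2' \le c_2$ with equality only if $c_2\in\lsem 1,q-1\rsem$; if $c_2' < c_2$ we conclude via the third clause. Finally, if $d_2'=d_2$ and $c_2'=c_2$, then $M(d_2',c_2')=M$ and the "greater" is non-strict — which is exactly what the statement allows (it says "greater than", but a monomial fixed by $\pi$ is its own leading term, so equality must be permitted; I would phrase the lemma's conclusion as $\LT(\pi_{(\delta_T,\delta_X)}(M)) \le M$, or note the fixed-point case explicitly). One must also verify that in the case $d_2'=d_2$, $c_2'=c_2$ forces $c_1' = c_1$: this follows from $c_1 = \delta_T + \eta(\delta_X - d_2) - c_2$, which depends only on $d_2$ and $c_2$.

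The main obstacle — and the only genuinely delicate point — is the exceptional case of Definition \ref{pi}, where $(\delta_T,\delta_X)$ satisfies (\ref{H}) and $M = M(A,0)$ with $A = \frac{\delta}{\eta}$. Here $\pi_{(\delta_T,\delta_X)}(M) = M(r,0) + M(r,\eta k(q-1)) - M(r,q-1)$, a genuine three-term combination, and I must show that each of the three monomials $M(r,0)$, $M(r,\eta k(q-1))$, $M(r,q-1)$ is strictly less than $M(A,0)$ so that the leading term of the image is among them and is $< M$. All three have second coordinate $d_2 = r$, and since $r \in \lsem 1, q-1\rsem$ and $A = k(q-1)+r$ with (\ref{H}) forcing $A \ge q > r$ (indeed $k\ge 1$), we get $r < A$, so the second clause of the order gives $M(r,\cdot) < M(A,0)$ for all three terms at once. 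Hence $\LT(\pi_{(\delta_T,\delta_X)}(M(A,0))) < M(A,0)$. I would close by remarking that this is precisely why the monomial order was chosen to prioritize the $X_2$-exponent after total $X$-degree: it makes the reduction of $M(A,0)$ to smaller-$d_2$ monomials order-decreasing, which is the property feeding into the Gröbner basis computation of Section \ref{secgrob}.
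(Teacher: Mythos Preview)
Your proposal is correct and follows essentially the same approach as the paper's own proof: first observe that the bidegree (hence $d_1+d_2$) is fixed so only the $d_2$, $c_2$ comparisons matter; in the generic case use Definition~\ref{proj} to get $d_2'\le d_2$ and, when equal, $c_2'\le c_2$; then handle the exceptional case (\ref{H}) by noting $r<A$ so all three monomials $M(r,\cdot)$ are strictly smaller than $M(A,0)$. Your write-up is more detailed than the paper's (in particular you spell out why $d_2'\le d_2$ and why $r<A$, and you flag that equality occurs exactly on fixed points), but the structure is identical.
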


\begin{proof}
Since any $M \in \Mon(\delta_T,\delta_X)$ and its image under $\pi_{(\delta_T,\delta_X)}$ have the same bidegree, the first case of Definition \ref{ordre} never occurs.

Except if (\ref{H}) holds and $(d_2,c_2)=\left(\frac{\delta}{\eta},0\right)$, the image of a monomial $M(d_2,c_2) \in \Mon(\delta_T,\delta_X)$ under $\pi_{(\delta_T,\delta_X)}$ is the monomial $M(p_{(\delta_T,\delta_X)}(d_2,c_2))$, where $p_{(\delta_T,\delta_X)}$ is given in Definition \ref{proj}.

Write $(d'_2,c'_2)=p_{(\delta_T,\delta_X)}(d_2,c_2)$. Then $d_2 \geq d_2'$. If $d_2=d'_2$, then $c_2 \geq c'_2$, which means that $M \geq \pi_{(\delta_T,\delta_X)}(M)$.

\smallskip

It remains to check that it is also true for $M=M\left(\frac{\delta}{\eta},0\right)$ when (\ref{H}) holds. In this case, according to Definition \ref{pi}, 
\[\pi_{(\delta_T,\delta_X)}\left(M\left(\frac{\delta}{\eta},0\right)\right)= M(r,0)+M(r,\eta k(q-1))-M(r,q-1).\]
with $r < \frac{\delta}{\eta}$. Then $\LT\left(\pi_{(\delta_T,\delta_X)}\left(M\left(\frac{\delta}{\eta},0\right)\right)\right) < M\left(\frac{\delta}{\eta},0\right)$, which concludes the proof.
\end{proof}

\begin{notation}
Let $(\delta_T,\delta_X) \in \Z \times \N$. Let us set 
\[\mathcal{G}(\delta_T,\delta_X)=\{M-\pi_{(\delta_T,\delta_X)}(M) \: | \: M \in \Mon(\delta_T,\delta_X)\setminus  \Delta^\star(\delta_T,\delta_X)\}\]
and
\[\mathcal{G}=\bigcup_{(\delta_T,\delta_X)} \mathcal{G}(\delta_T,\delta_X).\]
\end{notation}

\begin{prop}\label{grob}
There exists a finite subset $\mathcal{G}'$ of $\mathcal{G}$ that forms a Gr\"obner basis of the ideal $\mathcal{I}$.
\end{prop}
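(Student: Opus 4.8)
The plan is to show that the (infinite) set $\mathcal{G}$ already has the defining property of a Gröbner basis, namely that $\langle \LT(g) : g \in \mathcal{G}\rangle = \LT(\mathcal{I})$, and then invoke Noetherianity of $R$ to extract a finite subset $\mathcal{G}'$ whose leading terms still generate this monomial ideal; such a $\mathcal{G}'$ is automatically a Gröbner basis of $\mathcal{I}$. So the heart of the matter is the equality of monomial ideals.

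First I would record that $\mathcal{G} \subset \mathcal{I}$: each element $M - \pi_{(\delta_T,\delta_X)}(M)$ lies in $\mathcal{I}(\delta_T,\delta_X) \subset \mathcal{I}$ by Proposition~\ref{span} (the relation $M - \pi_{(\delta_T,\delta_X)}(M) \in \mathcal{I}(\delta_T,\delta_X)$ is exactly what is used there), so $\langle \LT(g) : g \in \mathcal{G}\rangle \subset \LT(\mathcal{I})$ is immediate. For the reverse inclusion, since $\mathcal{I}$ is homogeneous it suffices to show that every monomial $N$ appearing as $\LT(F)$ for some nonzero homogeneous $F \in \mathcal{I}(\delta_T,\delta_X)$ is divisible by $\LT(g)$ for some $g \in \mathcal{G}$. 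By Lemma~\ref{lt}, for every monomial $M \in \Mon(\delta_T,\delta_X) \setminus \Delta^\star(\delta_T,\delta_X)$ the element $g_M = M - \pi_{(\delta_T,\delta_X)}(M) \in \mathcal{G}$ has leading term $\LT(g_M) = M$. Hence it is enough to prove: every nonzero homogeneous $F \in \mathcal{I}(\delta_T,\delta_X)$ has $\LM(F) \in \Mon(\delta_T,\delta_X) \setminus \Delta^\star(\delta_T,\delta_X)$, equivalently $\LM(F) \notin \Delta^\star(\delta_T,\delta_X)$. Suppose for contradiction that $\LM(F) \in \Delta^\star(\delta_T,\delta_X)$. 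Reduce $F$ modulo the rewriting rules $M \rightsquigarrow \pi_{(\delta_T,\delta_X)}(M)$ for $M \notin \Delta^\star$: since each such rule strictly decreases the leading monomial (Lemma~\ref{lt}) and replaces $M$ by a linear combination of elements of $\Delta^\star(\delta_T,\delta_X)$ (this is the content of Definition~\ref{pi} together with Remark~\ref{ok} and Proposition~\ref{equi}), and since $<$ is a well-order on the finitely many monomials of a fixed bidegree, this process terminates in a polynomial $\tilde F \in \Span \Delta^\star(\delta_T,\delta_X)$ with $F - \tilde F \in \mathcal{I}(\delta_T,\delta_X)$, whence $\tilde F \in \mathcal{I}(\delta_T,\delta_X)$ as well. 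Now $\tilde F \neq 0$: indeed all terms of $F$ whose monomial lies in $\Delta^\star$ are left untouched by the rewriting, so the coefficient of $\LM(F)$ in $\tilde F$ is still nonzero (no rule ever reintroduces a strictly larger monomial, again by Lemma~\ref{lt}). But $\tilde F$ is then a nonzero element of $\Span \Delta^\star(\delta_T,\delta_X) \cap \mathcal{I}(\delta_T,\delta_X)$, contradicting Lemma~\ref{zerocombi}. This proves $\LT(\mathcal{I}) \subset \langle \LT(g) : g \in \mathcal{G}\rangle$, hence equality.

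Finally, $\langle \LT(g) : g \in \mathcal{G} \rangle = \LT(\mathcal{I})$ is a monomial ideal in the Noetherian ring $R$, so it is generated by finitely many of the monomials $\LT(g)$, say $\LT(g_1), \dots, \LT(g_N)$ with $g_i \in \mathcal{G}$. Setting $\mathcal{G}' = \{g_1, \dots, g_N\}$, we get a finite subset of $\mathcal{G} \subset \mathcal{I}$ with $\langle \LT(g) : g \in \mathcal{G}'\rangle = \LT(\mathcal{I})$, i.e.\ $\mathcal{G}'$ is a Gröbner basis of $\mathcal{I}$.

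The main obstacle I anticipate is the termination-and-nontriviality argument for the reduction process: one must check carefully that applying the rules $M \rightsquigarrow \pi_{(\delta_T,\delta_X)}(M)$ (including the exceptional three-term rule of Definition~\ref{pi} when (\ref{H}) holds) cannot loop and cannot destroy the top coefficient, which is where Lemma~\ref{lt} is used in an essential way — it guarantees both that the leading monomial never increases and that the rewritten monomials are strictly smaller, so the normal form is well defined and preserves $\LM(F)$. Everything else is a bookkeeping of inclusions of monomial ideals plus the standard Noetherian extraction of a finite Gröbner basis.
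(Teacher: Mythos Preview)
Your proof is correct and follows the same overall strategy as the paper: show that $\langle \LT(g):g\in\mathcal{G}\rangle=\LT(\mathcal{I})$ by checking that the leading monomial of every nonzero homogeneous $F\in\mathcal{I}(\delta_T,\delta_X)$ lies in $\Mon(\delta_T,\delta_X)\setminus\Delta^\star(\delta_T,\delta_X)$, and then pass to a finite subset by Noetherianity (the paper phrases this last step as Dickson's Lemma).

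The only difference is in how the key step is argued. You run a rewriting process and invoke Lemma~\ref{zerocombi} to derive a contradiction from $\LM(F)\in\Delta^\star$. The paper is more direct: since by Proposition~\ref{span} the map $\pi_{(\delta_T,\delta_X)}$ is the projection along $\mathcal{I}(\delta_T,\delta_X)$, the set $\mathcal{G}(\delta_T,\delta_X)$ already \emph{spans} $\mathcal{I}(\delta_T,\delta_X)$, so one writes $F=\sum_{M\notin\Delta^\star} c_M\,(M-\pi(M))$ and reads off $\LM(F)=M_{\max}\notin\Delta^\star$ immediately from Lemma~\ref{lt}. In effect your rewriting reproduces the computation of $\pi(F)$, and your appeal to Lemma~\ref{zerocombi} is subsumed by Proposition~\ref{span} (which already contains that lemma). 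Both routes are valid; the paper's is shorter because it leverages the projection property rather than rebuilding it via reduction.
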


\begin{proof}
First, let us prove that the leading term of any polynomial of $\mathcal{I}$ is divisible by the leading term of an element of $\mathcal{G}$.

Fix $f \in \mathcal{I}$. We write $f_{(\delta_T,\delta_X)}$ the homogeneous component of $f$ that has bidegree $(\delta_T,\delta_X)$.
The leading term of $f$ is the leading term of one of its homogeneous component. Therefore, it is enough to prove that the leading term of any $f_{(\delta_T,\delta_X)}$  is divisible by the leading term of an element of $\mathcal{G}$.

By Proposition \ref{span}, the map $\pi_{(\delta_T,\delta_X)}$ is a projection along $\mathcal{I}(\delta_T,\delta_X)$ onto $\Span \Delta^*(\delta_T,\delta_X)$. Hence $\ker \pi_{(\delta_T,\delta_X)}=\mathcal{I}(\delta_T,\delta_X)=\range(\Id - \pi_{(\delta_T,\delta_X)})$. The set $\mathcal{G}(\delta_T,\delta_X)$ is thus a spanning family for the vector space $\mathcal{I}(\delta_T,\delta_X)$. Therefore any $f_{(\delta_T,\delta_X)}$ can be written as a linear combination of elements of $\mathcal{G}(\delta_T,\delta_X)$:
\[f_{(\delta_T,\delta_X)} = \sum_{\substack{M \in \Mon(\delta_T,\delta_X)\\ M \notin \Delta^\star(\delta_T,\delta_X)}} c_M \left(M-\pi_{(\delta_T,\delta_X)}(M)\right)\]
By Lemma \ref{lt}, the leading monomial of $f_{(\delta_T,\delta_X)}$ is the maximum monomial $M_{max}$ with respect to the monomial order $<$ among the monomials $M$ in $\Mon(\delta_T,\delta_X)\setminus  \Delta^\star(\delta_T,\delta_X)$ such that $c_M \neq 0$. It is thus clear that the leading term of $f$ is divisible by the leading term of $M_{max}- \pi_{(\delta_T,\delta_X)}\left(M_{max}\right)$, that belongs to $\mathcal{G}$. 

\smallskip

To conclude, it is enough to apply Dickson's Lemma (\cite{CLO} \S 4 Theorem 4) to the monomial ideal $\LT(\mathcal{I})$.
\end{proof}

Let us highlight that the homogeneity of the ideal $\mathcal{I}$ gives a natural graduation of the quotient 
\[\faktor{R}{I}=\bigoplus_{(\delta_T,\delta_X)} \left(\faktor{R(\delta_T,\delta_X)}{\mathcal{I}(\delta_T,\delta_X)}\right),\]
from which, with Propositions \ref{Grob} and \ref{grob}, the next corollary arises. 

\begin{corollary}\label{base}
Let $(\delta_T,\delta_X) \in \Z \times \N$ such that $\delta \geq 0$. A basis of a complementary space of $\mathcal{I}(\delta_T,\delta_X)$ in $R(\delta_T,\delta_X)$ is the set $\{ M \in \Mon(\delta_T,\delta_X) \: | \: \forall \: g \in \mathcal{G}', \: \LT(g) \nmid M\}$.
\end{corollary}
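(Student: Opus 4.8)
The plan is to specialize Proposition~\ref{Grob} to the ideal $\mathcal{I}$ equipped with the Gr\"obner basis $\mathcal{G}'$ furnished by Proposition~\ref{grob}, and then to split the resulting statement by bidegree using the homogeneity of $\mathcal{I}$.

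First I would apply Proposition~\ref{Grob} with $I=\mathcal{I}$ and $G=\mathcal{G}'$: writing $\pi\colon R\to\faktor{R}{\mathcal{I}}$ for the canonical projection, the family $\{\pi(M)\}$, where $M$ runs over the monomials of $R$ not divisible by $\LT(g)$ for any $g\in\mathcal{G}'$, is a basis of $\faktor{R}{\mathcal{I}}$ over $\F_q$. The key point is that every $g\in\mathcal{G}'\subset\mathcal{G}$ is homogeneous: by construction each generator of $\mathcal{G}(\delta_T,\delta_X)$ is of the form $M-\pi_{(\delta_T,\delta_X)}(M)$ with $M\in\Mon(\delta_T,\delta_X)$ and $\pi_{(\delta_T,\delta_X)}(M)\in R(\delta_T,\delta_X)$, so $g$ lies in a single graded piece and $\LT(g)$ is a monomial with a well-defined bidegree. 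Hence the set of monomials of $R$ occurring in this basis is the disjoint union over all bidegrees $(\delta_T,\delta_X)$ of the sets
\[\mathcal{B}(\delta_T,\delta_X):=\{M\in\Mon(\delta_T,\delta_X)\ |\ \forall\,g\in\mathcal{G}',\ \LT(g)\nmid M\},\]
the very set appearing in the statement.

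Next I would exploit the graded decomposition $\faktor{R}{\mathcal{I}}=\bigoplus_{(\delta_T,\delta_X)}\bigl(\faktor{R(\delta_T,\delta_X)}{\mathcal{I}(\delta_T,\delta_X)}\bigr)$ recalled just before the corollary. Since $\pi$ preserves the bidegree, it maps $\mathcal{B}(\delta_T,\delta_X)$ into the summand $\faktor{R(\delta_T,\delta_X)}{\mathcal{I}(\delta_T,\delta_X)}$; as the union of the $\pi(\mathcal{B}(\delta_T,\delta_X))$ over all bidegrees is a basis of the direct sum, each $\pi(\mathcal{B}(\delta_T,\delta_X))$ must be a basis of the corresponding summand. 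In particular the monomials in $\mathcal{B}(\delta_T,\delta_X)$ are linearly independent modulo $\mathcal{I}(\delta_T,\delta_X)$, so their span $V$ meets $\mathcal{I}(\delta_T,\delta_X)$ trivially, and $V+\mathcal{I}(\delta_T,\delta_X)=R(\delta_T,\delta_X)$ because the composite $V\hookrightarrow R(\delta_T,\delta_X)\to\faktor{R(\delta_T,\delta_X)}{\mathcal{I}(\delta_T,\delta_X)}$ is onto; thus $V$ is a complement of $\mathcal{I}(\delta_T,\delta_X)$ in $R(\delta_T,\delta_X)$ with basis $\mathcal{B}(\delta_T,\delta_X)$, which is the claim.

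I do not expect any genuine obstacle here: the combinatorial description of a complement (Proposition~\ref{span}) and the Gr\"obner machinery (Propositions~\ref{Grob} and~\ref{grob}) are already in place, so the corollary is purely a matter of transporting a graded basis of $\faktor{R}{\mathcal{I}}$ to its homogeneous components. The only point that deserves explicit mention is the homogeneity of the elements of $\mathcal{G}'$, which is what makes this degreewise splitting legitimate, and it holds precisely because $\mathcal{G}$ was assembled from the bidegree-preserving maps $\pi_{(\delta_T,\delta_X)}$.
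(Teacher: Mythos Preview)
Your proposal is correct and follows essentially the same route as the paper: apply Proposition~\ref{Grob} with the Gr\"obner basis $\mathcal{G}'$ of Proposition~\ref{grob} to obtain a monomial basis of $\faktor{R}{\mathcal{I}}$, and then restrict to each homogeneous component using the grading $\faktor{R}{\mathcal{I}}=\bigoplus_{(\delta_T,\delta_X)}\faktor{R(\delta_T,\delta_X)}{\mathcal{I}(\delta_T,\delta_X)}$. Your version is more explicit (and your remark on the homogeneity of the elements of $\mathcal{G}'$ is true though not strictly needed, since the splitting only uses that monomials themselves are homogeneous), but the argument is the same.
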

\begin{proof}
By Propositions \ref{grob} and \ref{Grob}, the set 
\[\mathcal{B}=\left\{ M \in \bigcup_{(\delta_T,\delta_X) \in \Z\times \N} \Mon(\delta_T,\delta_X) \: | \: \forall \: g \in \mathcal{G}', \: \LT(g) \nmid M\right\}\]
is a basis of a complementary of $\mathcal{I}$, seen as $\F_q$-vector subspace of $R$. Every element of $\mathcal{B}$ is homogeneous. The result consists only on restricting on a homogeneous component.
\end{proof}

In Proposition \ref{span} we displayed $\Delta^*(\delta_T,\delta_X)$ as a basis of $R(\delta_T,\delta_X)$ modulo the subspace $\mathcal{I}(\delta_T,\delta_X)$ for each couple $(\delta_T,\delta_X)$. Actually the image under the canonical projection of the union of the $\Delta^*(\delta_T,\delta_X)$ is exactly the basis given by the previous proposition, as stated in the following lemma.

\begin{lemma}\label{repDiv}
Let us set 
$\Delta^*=\bigcup_{(\delta_T,\delta_X)} \Delta^*(\delta_T,\delta_X)$.
Then $\Delta^*$ is the set of monomials of $R$ that are not divisible by the leading term of any polynomial of $\mathcal{G}$.
\end{lemma}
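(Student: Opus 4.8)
The statement claims the equality of two sets of monomials: on one hand $\Delta^*=\bigcup_{(\delta_T,\delta_X)}\Delta^*(\delta_T,\delta_X)$, on the other hand the set of monomials of $R$ not divisible by $\LT(g)$ for any $g\in\mathcal{G}$. Since $\mathcal{G}$ is a union of the $\mathcal{G}(\delta_T,\delta_X)$ and the divisibility condition respects bidegree (divisibility of a homogeneous monomial by a leading term forces the divisor to live in a compatible bidegree), the whole problem splits over bidegrees: it suffices to fix $(\delta_T,\delta_X)$ with $\delta\geq 0$ and show that $\Delta^*(\delta_T,\delta_X)$ equals the set of $M\in\Mon(\delta_T,\delta_X)$ such that $\LT(g)\nmid M$ for every $g\in\mathcal{G}(\delta_T,\delta_X)$ — and, moreover, that no leading term coming from a \emph{different} bidegree can divide such an $M$. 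This last point is handled by observing that $\mathcal{G}(\delta_T,\delta_X)$ consists of differences $M-\pi_{(\delta_T,\delta_X)}(M)$ with both parts of bidegree $(\delta_T,\delta_X)$, so $\LT(g)$ has bidegree exactly $(\delta_T,\delta_X)$ and can only divide monomials of bidegree $(\delta_T,\delta_X)$ that are "at least as big" in every exponent — I will spell this out but it is essentially bookkeeping.

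\textbf{The two inclusions.} For the inclusion $\Delta^*(\delta_T,\delta_X)\subseteq\{M:\LT(g)\nmid M\}$: take $M\in\Delta^*(\delta_T,\delta_X)$ and suppose for contradiction that $\LT(g)\mid M$ for some $g=N-\pi_{(\delta_T,\delta_X)}(N)\in\mathcal{G}(\delta_T,\delta_X)$, with $N\notin\Delta^*(\delta_T,\delta_X)$. By Lemma \ref{lt}, $\LT(g)=\LM(N)=N$ (the leading monomial of $N-\pi(N)$ is $N$ itself, since $N$ dominates the leading term of $\pi(N)$). But $N$ and $M$ have the same bidegree $(\delta_T,\delta_X)$, so $N\mid M$ forces $N=M$, contradicting $N\notin\Delta^*(\delta_T,\delta_X)$ while $M\in\Delta^*(\delta_T,\delta_X)$. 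For the reverse inclusion $\{M:\LT(g)\nmid M\}\subseteq\Delta^*(\delta_T,\delta_X)$: take $M\in\Mon(\delta_T,\delta_X)$ with $M\notin\Delta^*(\delta_T,\delta_X)$; then $g=M-\pi_{(\delta_T,\delta_X)}(M)$ lies in $\mathcal{G}(\delta_T,\delta_X)\subseteq\mathcal{G}$ and, again by Lemma \ref{lt}, $\LT(g)=M$, so trivially $\LT(g)\mid M$ and $M$ is excluded from the right-hand side. The two inclusions give the equality.

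\textbf{Main obstacle.} The genuinely delicate point — the one requiring care rather than bookkeeping — is the special bidegree where (\ref{H}) holds and $\pi_{(\delta_T,\delta_X)}$ is \emph{not} simply "monomial to monomial": there $\pi_{(\delta_T,\delta_X)}(M(\delta/\eta,0))=M(r,0)+M(r,\eta k(q-1))-M(r,q-1)$ is a genuine three-term polynomial. I must check that Lemma \ref{lt} still delivers $\LT\bigl(M(\delta/\eta,0)-\pi_{(\delta_T,\delta_X)}(M(\delta/\eta,0))\bigr)=M(\delta/\eta,0)$ — which it does, precisely because Lemma \ref{lt} was proved to cover this case ($r<\delta/\eta$ forces each of $M(r,0)$, $M(r,\eta k(q-1))$, $M(r,q-1)$ to be strictly smaller than $M(\delta/\eta,0)$ in the order of Definition \ref{ordre}, as their $X_2$-exponent $r$ is strictly smaller). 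Granted that, the argument above goes through verbatim: $M(\delta/\eta,0)\notin\Delta^*(\delta_T,\delta_X)=\mathcal{K}^*$-image, and the other three monomials appearing in $\pi$ do lie in $\Delta^*(\delta_T,\delta_X)$ by Remark \ref{ok}, so no new leading term of $\mathcal{G}$ divides any element of $\Delta^*(\delta_T,\delta_X)$, and conversely $M(\delta/\eta,0)$ is correctly excluded. Thus the only real work is to invoke Lemma \ref{lt} uniformly — including its (\ref{H}) branch — and then the set equality is immediate.
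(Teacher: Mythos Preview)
Your reverse inclusion is fine: if $M\notin\Delta^*(\delta_T,\delta_X)$ then $g=M-\pi_{(\delta_T,\delta_X)}(M)\in\mathcal G$ and $\LT(g)=M$ divides $M$. But the forward inclusion is where the content lies, and your argument there has a real gap. You only consider $g\in\mathcal G(\delta_T,\delta_X)$ of the \emph{same} bidegree as $M$; you then say $N\mid M$ with equal bidegree forces $N=M$, which is true but beside the point. The set $\mathcal G$ is $\bigcup_{(\epsilon_T,\epsilon_X)}\mathcal G(\epsilon_T,\epsilon_X)$, and a leading term $N=T_1^{c_1}T_2^{c_2}X_1^{d_1}X_2^{d_2}$ coming from a \emph{smaller} bidegree $(\epsilon_T,\epsilon_X)$ can perfectly well divide $M\in\Mon(\delta_T,\delta_X)$: divisibility in $\F_q[T_1,T_2,X_1,X_2]$ is exponent-wise and in no way forces equal bidegree. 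Your ``bookkeeping'' remark acknowledges the issue but does not address it.

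This cross-bidegree check is precisely the substance of the paper's proof. Given $M=M(\alpha,\beta)\in\Delta^*(\delta_T,\delta_X)$ and $N\in\Mon(\epsilon_T,\epsilon_X)\setminus\Delta^*(\epsilon_T,\epsilon_X)$ with $N\mid M$, one has to use the explicit description of $\Delta^*$ (Notations~\ref{notrep} and~\ref{etoile}) together with the conditions (\ref{D1})--(\ref{D4}) encoding why $N\notin\Delta^*$ to derive a contradiction. Concretely: $N\notin\Delta^*(\epsilon_T,\epsilon_X)$ means either $d_2\ge q$ with $d_1\ne 0$, or $c_2\ge q$ with $c_1\ne 0$ (or the special (\ref{H}) situation); each of these forces the corresponding exponent of $M$ to be at least $q$, which then pins $\alpha$ or $\beta$ to the boundary value $A$ or $\delta-\eta\alpha$ in $\mathcal K^*(\delta_T,\delta_X)$, and one closes the argument case by case. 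That analysis is short but not vacuous; it cannot be replaced by the same-bidegree observation you gave. Once the inclusion $\Delta^*\subseteq\{\text{non-divisible}\}$ is established this way, your direct argument for the other inclusion finishes the proof (the paper instead concludes by a cardinality count via Proposition~\ref{span} and Corollary~\ref{base}).
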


\begin{proof} Fix $(\alpha,\beta) \in\mathcal{K}^*(\delta_T,\delta_X)$ and $M=M(\alpha,\beta) \in \Delta^*(\delta_T,\delta_X)$.

\smallskip

Let $G=N - \pi_{(\epsilon_T,\epsilon_X)}(N) \in \mathcal{G}(\epsilon_T,\epsilon_X)$ with 
\[N=T_1^{c_1}T_2^{c_2}X_1^{d_1}X_2^{d_2} \in \Mon(\epsilon_T,\epsilon_X) \setminus \Delta^*(\epsilon_T,\epsilon_X).\]

By Lemma \ref{lt}, $\LT(G)=N$.

Assume that $N$ divides $M$, that is to say
\begin{align}
\tag{i} \label{i} d_2 &\leq \alpha \\
\tag{ii} \label{ii} d_1=\epsilon_X-d_2 &\leq \delta_X-\alpha\\
\tag{iii} \label{iii} c_2 &\leq \beta \\
\tag{iv} \label{iv} c_1=\epsilon_T+\eta(\epsilon_X-d_2)-c_2 & \leq \delta -\eta\alpha-\beta
\end{align}
We want to reach a contradiction.

\smallskip

First suppose that $\pi_{(\epsilon_T,\epsilon_X)}(N)= T_1^{c'_1}T_2^{c'_2}X_1^{d'_1}X_2^{d'_2}$ is a monomial. By Lemma \ref{traduc}, since $N \equiv \pi_{(\epsilon_T,\epsilon_X)}(N)$, there exist $k, \: l \in \Z$ such that
\[d_2=d'_2+k(q-1), \: d_1=d'_1-k(q-1), \: c_2=c'_2+l(q-1)\text{ and } c_1 = c'_1 - (l + \eta k)(q-1).\]
Since $\LT(G)=N$, either $d'_2 > d_2$ or $d'_2=d_2$ and $c'_2 > c_2$, i.e either $k \in \N^*$ or $k=0$ and $l \in \N^*$. 

\begin{itemize}
\item Let first assume that $k \in \N^*$. By condition (\ref{D1}) for $i=2$, this implies that $d'_2 \geq 1$ and then $d_2 \geq q$. By (\ref{i}), the only possible value for $\alpha$ is thus $\alpha = A$ if $A \geq q$.
\begin{itemize}
\item If $\delta_T \geq 0$, then $\alpha=A=\delta_X$ and then, by (\ref{ii}), $d_1=0$. By condition (\ref{D3}) for $i=1$, $d'_1=0$, which implies $k=0$ and leads to a contradiction.
\item If $\delta_T < 0$, there is no integer $\alpha \geq q$ such that there exists $\beta \in \N$ satisfying $(\alpha,\beta) \in \mathcal{K}^*(\delta_T,\delta_X)$. This case never occurs.
\end{itemize}
\item Now, let us assume that $k=0$ and $l \in \N^*$, which implies $c_2 \geq q$. Since  $c_2 \leq \beta$, by Notation \ref{notrep}, $\beta=\delta_T+\eta (\delta_X-\alpha)$. Then $c_1 =0$ hence $c'_1=0$, which contradicts the hypothesis $l \neq 0$.
\end{itemize}

\smallskip

Now assume that $(\epsilon_T,\epsilon_X)$ satisfies (\ref{H}) and $N=X_1^{d_1}X_2^{d_2}$ with $d_1=-\frac{\epsilon_T}{\eta}\neq 0$ and $d_2=\epsilon_X+\frac{\epsilon_T}{\eta} \geq q$.
As before, by (\ref{i}), $\alpha$ can only be equal to $A$ if $A\geq q$, which happens only if $\delta_T \geq 0$ and $A=\delta_X$. The same reasoning than previously leads to a contradiction.

We then have proved that $\Delta^*$ is a subset of the set of monomials non divisible by the leading term of any polynomial in $\mathcal{G}$. But these two sets are basis of two complementary spaces of a same vector space by Proposition \ref{span} and Corollary \ref{base}. Therefore, these two sets coincide.
\end{proof}

\section{Minimum distance of $C_\eta(\delta_T,\delta_X)$}\label{secdist}

\subsection{Proof of the the lower bound of the minimum distance in Theorem \ref{DIST}}

Let us fix $(\epsilon_T,\epsilon_X) \in \N^2$ such that $\epsilon_T, \: \epsilon_X \geq q$.

\begin{notation}\label{DeltaM}
Let us set
\[\Delta^*(\epsilon_T,\epsilon_X)_F=\left\{ N \in \Delta^*(\epsilon_T,\epsilon_X) \: | \: \LT(F) \: | \: N\right\}\]
with $\Delta^*(\epsilon_T,\epsilon_X)$ defined in Notations \ref{notrep} and \ref{etoile}.
\end{notation}

Let $F \in R(\delta_T,\delta_X) \setminus \ker \ev_{(\delta_T,\delta_X)}$ and $\mathcal{Z}(F)$ its zero set in $\mathcal{H}_\eta$. We define
\[N_F=\# \mathcal{Z}(F)(\F_q).\]

We prove now the lower bound
\[d_\eta(\delta_T,\delta_X)  \geq \min_{M \in \Delta^*(\delta_T,\delta_X)} \# \Delta^*(\epsilon_T,\epsilon_X)_M.\]

%To prove equality, we explicit the minimum above in Proposition \ref{distance}. Then in Proposition \ref{atteint} we produce polynomials which codewords have weight exactly equal to this lower bound, ensuring it is indeed the minimum distance.

\begin{proof}[Proof of the lower bound]
Recall that the minimum distance is defined by
\[d_\eta(\delta_T,\delta_X)=\min_{\substack{F \in R(\delta_T,\delta_X)\\ F \notin \ker \ev_{(\delta_T,\delta_X)}}} \omega(\ev_{(\delta_T,\delta_X)}(F)).\]

\medskip

First, Proposition \ref{span} gives $\ev_{(\delta_T,\delta_X)}(F)=\ev_{(\delta_T,\delta_X)}\left( \pi_{(\delta_T,\delta_X)}(F)\right)$ and then
\[d_\eta(\delta_T,\delta_X)=\min_{F \in \Span \Delta^*(\delta_T,\delta_X)} \omega(\ev_{(\delta_T,\delta_X)}(F))=\min_{F \in \Span \Delta^*(\delta_T,\delta_X)} N - N_F,\]
so that we aim to bound from below $N-N_F$ uniformly in $F \in \Span \Delta^*(\delta_T,\delta_X)$.

\medskip

Let us fix such a polynomial $F \in \Span \Delta^*(\delta_T,\delta_X)$. 

Second, we aim to regard $N_F$ as the dimension of some vector space. For this purpose, fix $(\epsilon_T,\epsilon_X) \in \Z \times \N$ and consider the map
\[\ev_{(\epsilon_T,\epsilon_X),F}: \left\{ \begin{array}{rcl} R(\epsilon_T,\epsilon_X) & \rightarrow & \F_q^{N_F} \\
G & \mapsto & \left(G(Q)\right)_{Q \in \mathcal{Z}(F)(\F_q)}
\end{array}\right..\]

For $\epsilon_T,\: \epsilon_X \geq q$ the evaluation map $\ev_{(\epsilon_T,\epsilon_X)}$ is surjective by Example \ref{surj}. The map $\ev_{(\epsilon_T,\epsilon_X),F}$  is thus also surjective for any $F \in R(\delta_T,\delta_X)$, as illustrated by the diagram

\begin{figure}[h!]
\centering
\begin{tikzpicture}
\matrix (m) [matrix of math nodes,column sep=6em]{
R(\epsilon_T,\epsilon_X)& \F_q^{\mathcal{H}_\eta(\F_q)} & \F_q^{\mathcal{Z}(F)(\F_q)}.\\
};
\path[->>]
(m-1-1) edge node[below]{$\ev_{(\epsilon_T,\epsilon_X)}$} (m-1-2)
		edge [bend left=20] node[above]{$\ev_{(\epsilon_T,\epsilon_X),F}$} (m-1-3)
(m-1-2) edge (m-1-3);
\end{tikzpicture}
\end{figure}
It follows that
\[N_F=\dim \left(\faktor{R(\epsilon_T,\epsilon_X)}{\ker \ev_{(\epsilon_T,\epsilon_X),F}}\right).\] 

\medskip

Third we aim to display an upper bound $\tilde{N}_F$ of $N_F$ such that $N-\tilde{N}_F$ turns to be easier to handle. Let us denote by $\left\langle F \right\rangle$ the ideal of $R$ generated by $F$ and by $\left\langle F \right\rangle_{(\epsilon_T,\epsilon_X)}$ the subspace $F R(\epsilon_T-\delta_T,\epsilon_X-\delta_X) \subset R(\epsilon_T,\epsilon_X)$ spanned by $F$.

Observing that $\ker \ev_{(\epsilon_T,\epsilon_X)} +\left\langle F \right\rangle_{(\epsilon_T,\epsilon_X)} \subset \ker \ev_{(\epsilon_T,\epsilon_X),F}$, we get $\tilde{N}_F \geq N_F$ with
\[\tilde{N}_F=\dim \left(\faktor{R(\epsilon_T,\epsilon_X)}{\ker \ev_{(\epsilon_T,\epsilon_X)}+ \left\langle F \right\rangle_{(\epsilon_T,\epsilon_X)}}\right).\]
Hence
\begin{equation}\label{bound}d_\eta(\delta_T,\delta_X) \geq \min_ {F \in \Span \Delta^*(\delta_T,\delta_X)}N - \tilde{N}_F\end{equation}
and we are now reduced to bound from below $N-\tilde{N}_F$ uniformly in $F \in \Span \Delta^*(\delta_T,\delta_X)$.

\medskip

Fourth, we now prove that
\begin{equation}\label{chouette}N-\tilde{N}_F\geq \# \Delta^*(\epsilon_T,\epsilon_X)_F.\end{equation}
In fact, we display $\Delta^*(\epsilon_T,\epsilon_X)_F$ as a subfamily of $\Delta^*(\epsilon_T,\epsilon_X)$ which complement would be a spanning family of the vector space $R(\epsilon_T,\epsilon_X)$ modulo the vector subspace ${{\ker \ev_{(\epsilon_T,\epsilon_X)}}+ \left\langle F \right\rangle_{(\epsilon_T,\epsilon_X)}}$.

By Corollary \ref{base} and Lemma \ref{repDiv},
\[\Delta^*(\epsilon_T,\epsilon_X)=\{M, \: M \in \Mon(\epsilon_T,\epsilon_X) \text{ such that } \forall \: g \in \mathcal{G}, \: LT(g) \nmid M\}\]
is a basis of $R(\epsilon_T,\epsilon_X)$ modulo $\mathcal{I}(\epsilon_T,\epsilon_X)$. By Example \ref{surj}, its cardinality equals $N$.

As $F$ is a homogeneous element, the ideal  $\mathcal{I} + \left\langle F \right\rangle$ is homogeneous. Let $\widehat{\mathcal{G}}$ be a Gr\"obner basis of the ideal $\mathcal{I} + \left\langle F \right\rangle$ that contains $\mathcal{G} \cup \{F\}$. Using Proposition \ref{Grob} and restricting on each homogeneous component as in Corollary \ref{base}, the set 
\[\tilde{\Delta}(\epsilon_T,\epsilon_X)=\{M, \: M \in \Mon(\epsilon_T,\epsilon_X) \text{ such that } \forall \: h \in \widehat{\mathcal{G}}, \: LT(h) \nmid M\}\]
is a basis of $R(\epsilon_T,\epsilon_X)$ modulo $\mathcal{I}(\epsilon_T,\epsilon_X)+ \left\langle F \right\rangle_{(\epsilon_T,\epsilon_X)}$ of cardinality $\tilde{N}_F$.

Since $\mathcal{G} \subset \widehat{\mathcal{G}}$ and $F \in \widehat{\mathcal{G}}$, we have $\Delta^*(\epsilon_T,\epsilon_X)_F \subset \Delta^*(\epsilon_T,\epsilon_X) \setminus \tilde{\Delta}(\epsilon_T,\epsilon_X)$, from which (\ref{chouette}) follows.

\medskip

We conclude the proof noticing that $\Delta^*(\epsilon_T,\epsilon_X)_F = \Delta^*(\epsilon_T,\epsilon_X)_{\LT(F)}$ for every polynomial $F$ and using (\ref{bound}) and (\ref{chouette}).
\end{proof}

\subsection{Explicit formulae of the minimum distance}\label{choix}

The previous paragraph gives a lower bound of the minimal distance for any couple $(\epsilon_T,\epsilon_X) \in \N^2$ such $\epsilon_T, \,\epsilon_X \geq q$. We aim to maximize the quantity depending on this couple. \textbf{From now, we set}
\[\epsilon_X= q + \delta_X \text{ and } \epsilon_T = q+ \delta\]
 where as usual $\delta=\delta_T+\eta \delta_X$. The hypotheses for $R(\delta_T,\delta_X)$ not to be zero imply that $\epsilon_T$ and $\epsilon_X$ are greater than $q$. By Theorem \ref{DIST}, one way to compute a lower bound of the minimum distance is to calculate $\#\Delta^*(\epsilon_T,\epsilon_X)_M$ for every monomial $M \in \Delta^*(\delta_T,\delta_X)$ and then minimize the quantity over $\Delta^*(\delta_T,\delta_X)$.

\begin{prop}\label{tomin}
Let $(\alpha_0,\beta_0) \in \mathcal{K}^*(\delta_T,\delta_X)$, defined in Notations \ref{notrep} and \ref{etoile}. Then
\[\#\Delta(\epsilon_T,\epsilon_X)_{M(\alpha_0,\beta_0)}=\max(q-\alpha_0+\mathds{1}_{\{\alpha_0=\delta_X\}},1)\max(q-B(\alpha_0)+\mathds{1}_{\{\beta_0=B(\alpha_0)\}},1)\]
with $B(\alpha_0)=\delta -\eta\alpha_0$.
\end{prop}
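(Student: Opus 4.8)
The plan is to compute $\#\Delta^*(\epsilon_T,\epsilon_X)_{M(\alpha_0,\beta_0)}$ directly, by describing explicitly which monomials $N \in \Delta^*(\epsilon_T,\epsilon_X)$ are divisible by $\LT(M(\alpha_0,\beta_0))$. Since $\epsilon_T = q+\delta \geq q$ and $\epsilon_X = q+\delta_X \geq q$, the hypothesis (\ref{H}) cannot hold for $(\epsilon_T,\epsilon_X)$ (as $\epsilon_T > 0$), so by Example \ref{surj} and Notation \ref{etoile} we have $\Delta^*(\epsilon_T,\epsilon_X) = \Delta(\epsilon_T,\epsilon_X)$ and
\[\mathcal{K}^*(\epsilon_T,\epsilon_X)=\mathcal{K}(\epsilon_T,\epsilon_X)=\left\{(\alpha,\beta) \in \N^2 \mid \alpha \in \lsem 0,q-1\rsem \cup \{\epsilon_X\},\ \beta \in \lsem 0,q-1\rsem \cup \{\epsilon_X\text{-related bound}\}\right\},\]
more precisely $\beta \in \lsem 0,q-1\rsem$ or $\beta = \epsilon - \eta\alpha$ where $\epsilon = \epsilon_T + \eta\epsilon_X$. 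By Lemma \ref{lt} the leading term of $M(\alpha_0,\beta_0)$ is $M(\alpha_0,\beta_0)$ itself, and writing $M(\alpha_0,\beta_0) = T_1^{c_1}T_2^{\beta_0}X_1^{d_1}X_2^{\alpha_0}$ with $d_1 = \delta_X - \alpha_0$ and $c_1 = \delta - \eta\alpha_0 - \beta_0 = B(\alpha_0) - \beta_0$, a monomial $N = M(\alpha,\beta) \in \Delta(\epsilon_T,\epsilon_X)$ is divisible by it precisely when, in the notation of Lemma \ref{repDiv}, conditions (\ref{i})--(\ref{iv}) hold: $\alpha_0 \leq \alpha$, $\delta_X - \alpha_0 \leq \epsilon_X - \alpha$, $\beta_0 \leq \beta$, and $B(\alpha_0) - \beta_0 \leq (\epsilon - \eta\alpha) - \beta$.

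Next I would count the admissible $\alpha$ and $\beta$ separately, using the product structure of both $\mathcal{K}(\epsilon_T,\epsilon_X)$ and the divisibility conditions. For the $X$-variables: the constraint is $\alpha_0 \leq \alpha \leq \epsilon_X - \delta_X + \alpha_0 = q + \alpha_0$, together with $\alpha \in \lsem 0,q-1\rsem \cup \{\epsilon_X\}$. If $\alpha_0 \leq q-1$ this gives $\alpha \in \{\alpha_0, \alpha_0+1,\dots,q-1\}$, i.e. $q - \alpha_0$ values, plus possibly $\alpha = \epsilon_X$ when $\epsilon_X \leq q+\alpha_0$, which forces $\alpha_0 = \delta_X$; note also that $\alpha_0 = \delta_X$ is the only case where $\alpha_0$ can fail to be $\leq q-1$ (when $\delta_X \geq q$), and then $\alpha$ is forced to equal $\epsilon_X$, giving exactly $1$ value. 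Carefully merging these sub-cases yields the count $\max(q - \alpha_0 + \mathds{1}_{\{\alpha_0 = \delta_X\}}, 1)$. The argument for the $T$-variables is the symmetric one: the constraints $\beta_0 \leq \beta$ and $B(\alpha_0) - \beta_0 \leq \epsilon - \eta\alpha - \beta$ combined with $\alpha$ already fixed and $\beta \in \lsem 0,q-1\rsem \cup \{\epsilon - \eta\alpha\}$ give, after substituting $\epsilon - \eta\alpha = q + B(\alpha_0) - \eta(\alpha - \alpha_0)$ and tracking when the exceptional value $\beta = \epsilon - \eta\alpha$ is reachable, the count $\max(q - B(\alpha_0) + \mathds{1}_{\{\beta_0 = B(\alpha_0)\}}, 1)$. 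Crucially one must check that these two counts multiply — i.e. that the set of valid $(\alpha,\beta)$ is a genuine product — which follows because once $\alpha$ is chosen in its range, the bound $\epsilon - \eta\alpha$ for $\beta$ still exceeds $q-1$ (since $\epsilon - \eta\alpha \geq \epsilon - \eta(q+\alpha_0) = \epsilon_T + \eta(\epsilon_X - q) - \eta\alpha_0 \geq \epsilon_T \geq q$), so the shape of the $\beta$-range does not depend on which $\alpha$ was picked.

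The main obstacle I anticipate is the bookkeeping around the two indicator-function corrections and the nested $\max(\cdot,1)$: one has to treat separately the generic interior case ($\alpha_0 \leq q-1$, $\beta_0 \leq B(\alpha_0) - 1$, neither exceptional), the edge cases where $\alpha_0 = \delta_X$ or $\beta_0 = B(\alpha_0)$ (a point of $\mathcal{K}^*(\delta_T,\delta_X)$ lying on an edge of $P_D$, by Observation \ref{seul}), and the degenerate situations where a range becomes empty except for a single exceptional monomial — which is exactly what the $\max(\cdot,1)$ records. A clean way to organize this is to first establish the inequalities $q - \alpha_0 \geq 1$ whenever $\alpha_0 \leq q-1$ and $q - B(\alpha_0) \geq 1$ whenever $B(\alpha_0) \leq q - 1$, so that the $\max$ with $1$ only activates precisely on the exceptional edge cases, and then verify the formula case by case. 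I would also double-check the boundary subtlety that when $\alpha_0 = \delta_X$ we may simultaneously have $\delta_X < q$ (so $\alpha_0 \leq q-1$ and the count is $q - \delta_X + 1$) or $\delta_X \geq q$ (count $1$), both of which are correctly captured by $\max(q - \alpha_0 + 1, 1)$.
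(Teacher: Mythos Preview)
Your approach is essentially identical to the paper's: both explicitly describe $\mathcal{K}^*(\epsilon_T,\epsilon_X)$ using $\epsilon_T,\epsilon_X\geq q$, translate divisibility of $M(\alpha_0,\beta_0)$ into the four exponent inequalities, and count the admissible $(\alpha,\beta)$ by separating the $X$- and $T$-contributions. Your explicit justification that the $\beta$-count is independent of the chosen $\alpha$ (so the count genuinely factors) is a point the paper leaves implicit. One small slip: your substitution $\epsilon-\eta\alpha = q+B(\alpha_0)-\eta(\alpha-\alpha_0)$ is off by $\eta\epsilon_X$; the paper's corresponding upper bound on $\beta$ is $q+\eta(\epsilon_X-\alpha+\alpha_0)+\beta_0$, and you should also notice that in the generic case $\beta_0<B(\alpha_0)$ the count is $q-\beta_0$ rather than $q-B(\alpha_0)$ (the paper's own proof records this in its four-case display, so the stated formula with $B(\alpha_0)$ is really only literally correct when $\beta_0=B(\alpha_0)$, which is all that is needed downstream in Lemma~\ref{rewrite}).
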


\begin{proof}
Set $M= M(\alpha_0, \beta_0)=T_1^{\delta -\eta\alpha_0 - \beta_0} T_2^{\beta_0}X_1^{\delta_X-\alpha_0} X_2^{\alpha_0}$ with $(\alpha_0,\beta_0) \in \mathcal{K}^*(\delta_T,\delta_X)$, that is to say according to Notations \ref{notrep} and \ref {etoile}
\[\begin{array}{rcl}0 \leq \alpha_0 \leq \min(\lfloor A \rfloor,q-1) &\text{or}& (\alpha_0=\delta_X \text{ if } \delta_T \geq 0), \\
0 \leq \beta_0 \leq \min(\delta -\eta\alpha_0,q)-1 & \text{or} &\beta_0=\delta_T+  \eta(\delta_X-\alpha_0).
\end{array}\]

Let $N\in \Delta^*(\epsilon_T,\epsilon_X)$. Write 
\[N= T_1^{\epsilon_T+\eta(\epsilon_X-\alpha)- \beta} T_2^\beta X_1^{\epsilon_X-\alpha} X_2^\alpha\]
with $(\alpha,\beta) \in \mathcal{K}^*(\epsilon_T,\epsilon_X)$. Since $\epsilon_T,\: \epsilon_X \geq q$, then
\[
\begin{array}{rcl}0 \leq \alpha \leq q-1 &\text{or}& \alpha=\epsilon_X, \\
0 \leq \beta \leq q-1 & \text{or} &\beta=\epsilon_T+\eta(\epsilon_X-\alpha).
\end{array}\]
Suppose that $M$ divides $N$. Then
\[\begin{aligned}\alpha_0 &\leq \alpha\\
\delta_X-\alpha_0 &\leq \epsilon_X - \alpha\\
\beta_0 &\leq \beta  \\
\delta -\eta\alpha_0 - \beta_0 &\leq \epsilon_T+\eta(\epsilon_X-\alpha) - \beta\end{aligned} \]
One can rewrite the previous conditions as
\[
\begin{array}{ccccc}
&\alpha_0\leq \alpha \leq q + \alpha_0 &\text{ and }& \beta_0 \leq \beta \leq q + \eta(\epsilon_X - \alpha+\alpha_0) + \beta_0.&\end{array}\]
Since $\alpha \leq \epsilon_X$ and $\alpha_0, \beta_0 \in \N$, both upperbounds are greater than $q-1$. Moreover, 
\begin{align*}q + \alpha_0 = \epsilon_X \: &\Leftrightarrow \:  \alpha_0 = \delta_X, \\
q+\eta(\epsilon_X-\alpha+\alpha_0)+\beta_0 = \epsilon_T+\eta(\epsilon_X-\alpha) \: &\Leftrightarrow \: \beta_0=B(\alpha_0),\end{align*}
which justifies the choice of $\epsilon_T$ and $\epsilon_X$ to maximize the quantity $\#\Delta^*(\epsilon_T,\epsilon_X)_{(M(\alpha_0,\beta_0)}$.

\smallskip

To sum up, determining $\# \Delta^*(\epsilon_T,\epsilon_X)_M$ is equivalent to compute the number of couples $(\alpha,\beta)\in \mathcal{K}^*(\epsilon_T,\epsilon_X)$ satisfying the following conditions.
\begin{equation}\tag{$\star$}\label{cond}
\left\{\begin{array}{rcl}
\alpha_0\leq \alpha \leq q-1 &\text{or}& \{\alpha=\epsilon_X \text{ and } \alpha_0=\delta_X\}\\
\beta_0 \leq \beta \leq q-1 &\text{or}& \{\beta=\epsilon_T+\eta(\epsilon_X-\alpha) \text{ and } \beta_0=B(\alpha_0)\}\end{array}\right.\end{equation}

Moreover,
\begin{itemize}
\item If $\delta_X \geq q$ and $\alpha_0=\delta_X$, the only $\alpha$ that satifies (\ref{cond}) is $\alpha=\epsilon_X$.
\item If $\delta_T+\eta\delta_T \geq q$ and $\beta_0=\delta_T+\eta\delta_T$, the only $\beta$ satifying (\ref{cond}) is $\beta=\epsilon_T+\eta\epsilon_X$. 
\end{itemize}
Then, one can write

\[\#\Delta(\epsilon_T,\epsilon_X)_M=\left\{\begin{array}{cl}
(q-\alpha_0)(q-\beta_0) & \text{if } \alpha_0 < \delta_X \text{ and }\beta_0 < B(\alpha_0),\\
(q-\alpha_0)\max(q-B(\alpha_0)+1,1) & \text{if } \alpha_0 < \delta_X \text{ and }\beta_0 =B(\alpha_0),\\
\max(q-\delta_X+1,1)(q-\beta_0) & \text{if } \alpha_0=\delta_X \text{ and } \beta_0 < B(\alpha_0),\\
\max(q-\delta_X+1,1)\max(q-B(\alpha_0)+1,1) & \text{if } \alpha_0=\delta_X \text{ and } \beta_0=B(\alpha_0).
\end{array}\right.\]

Let us highlight that a couple $(\alpha_0,\beta_0) \in \mathcal{K}^*(\delta_T,\delta_X)$ is less or equal to $q-1$ or equal to $\delta_X$. Then either $(\alpha_0,\beta_0)$ or $(\alpha_0,\epsilon_T+\eta(\epsilon_X-\alpha_0))$ satisfies (\ref{cond}). Then the quantity $\#\Delta^*(\epsilon_T,\epsilon_X)_{M(\alpha_0,\beta_0)}$ can never be zero.
\end{proof}

\smallskip

To lowerbound the minimum distance, it remains to minimize
\[\max(q-\alpha_0+\mathds{1}_{\{\alpha_0=\delta_X\}},1)\max(q-\beta_0+\mathds{1}_{\{\beta_0=B(\alpha_0)\}},1)\]
over $(\alpha_0,\beta_0) \in \mathcal{K}^*(\delta_T,\delta_X)$. The problem can be reduced to minimize a univariate function, thanks to the following lemma.

\begin{lemma}\label{rewrite}
Let $\eta \geq 0$ and $(\delta_T,\delta_X) \in \Z \times \N$ such that $\delta \geq 0$. Then
\[d_\eta(\delta_T,\delta_X) \geq \min_{\alpha_0\in \mathcal{A}^*_X} f(\alpha_0)\]
with 
\[\mathcal{A}^*_X=\left\{\begin{array}{cl}
\lsem 0 ,\max(q,\delta_X) -1 \rsem \cup \{\delta_X\} &\text{if } \delta_T \geq 0,\\
\left\lsem 0 ,\max\left(q-1,\left\lfloor \frac{\delta}{\eta}\right\rfloor\right)\right\rsem & \text{if } \delta_T < 0\end{array}\right.\]
 and
\[f(\alpha_0)=\max(q-\alpha_0 + \mathds{1}_{\alpha_0=\delta_X},1)\max(q-\delta +\eta\alpha_0 + 1,1)\]
\end{lemma}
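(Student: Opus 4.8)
The plan is to feed the lower bound of Theorem \ref{DIST} into the closed form of Proposition \ref{tomin} and then collapse the two-variable minimum over $\mathcal{K}^*(\delta_T,\delta_X)$ to a one-variable minimum over $\mathcal{A}^*_X$. With $\epsilon_X=q+\delta_X$ and $\epsilon_T=q+\delta$ fixed, Theorem \ref{DIST}, rephrased through the description $\Delta^*(\delta_T,\delta_X)=\{M(\alpha_0,\beta_0)\:|\:(\alpha_0,\beta_0)\in\mathcal{K}^*(\delta_T,\delta_X)\}$ of Notation \ref{etoile}, gives
\[d_\eta(\delta_T,\delta_X)\ \geq\ \min_{(\alpha_0,\beta_0)\in\mathcal{K}^*(\delta_T,\delta_X)}\#\Delta^*(\epsilon_T,\epsilon_X)_{M(\alpha_0,\beta_0)}.\]
Denote by $\mathcal{A}'_X$ the set of those $\alpha_0$ occurring as a first coordinate of some element of $\mathcal{K}^*(\delta_T,\delta_X)$. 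It then suffices to prove: (a) for every $\alpha_0\in\mathcal{A}'_X$ one has $\min_{\beta_0}\#\Delta^*(\epsilon_T,\epsilon_X)_{M(\alpha_0,\beta_0)}=f(\alpha_0)$, the minimum running over all $\beta_0$ with $(\alpha_0,\beta_0)\in\mathcal{K}^*(\delta_T,\delta_X)$; and (b) $\mathcal{A}'_X\subseteq\mathcal{A}^*_X$.

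For (a), fix $\alpha_0\in\mathcal{A}'_X$ and set $B(\alpha_0)=\delta-\eta\alpha_0$. By Notations \ref{notrep} and \ref{etoile} the admissible $\beta_0$ are exactly those in $\{0,\dots,\min(B(\alpha_0),q)-1\}\cup\{B(\alpha_0)\}$; in case (\ref{H}) the sole deleted couple $\left(\frac{\delta}{\eta},0\right)$ merely removes $\alpha_0=\frac{\delta}{\eta}$ from $\mathcal{A}'_X$ and changes nothing for the surviving $\alpha_0$. By Proposition \ref{tomin} the factor $\max(q-\alpha_0+\mathds{1}_{\{\alpha_0=\delta_X\}},1)$ is independent of $\beta_0$, while the second factor equals $q-\beta_0$ when $\beta_0<B(\alpha_0)$ and $\max(q-B(\alpha_0)+1,1)$ when $\beta_0=B(\alpha_0)$. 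On the first range $q-\beta_0$ decreases with $\beta_0$, so its minimum is attained at $\beta_0=\min(B(\alpha_0),q)-1$ and equals $q-\min(B(\alpha_0),q)+1=\max(q-B(\alpha_0)+1,1)$, which is exactly the value taken at $\beta_0=B(\alpha_0)$. Hence the minimum of the second factor is $\max(q-B(\alpha_0)+1,1)$ in every case, including the degenerate ones $B(\alpha_0)=0$ (only $\beta_0=B(\alpha_0)$ admissible) and $B(\alpha_0)\geq q$ (value $1$). Using $q-B(\alpha_0)=q-\delta+\eta\alpha_0$, and noting that any $\alpha_0\in\mathcal{A}'_X$ with $\alpha_0\neq\delta_X$ satisfies $\alpha_0\leq q-1$ so the first factor really is $q-\alpha_0$, this gives $\min_{\beta_0}\#\Delta^*(\epsilon_T,\epsilon_X)_{M(\alpha_0,\beta_0)}=f(\alpha_0)$.

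For (b) one runs through the three shapes of $\mathcal{P}(\delta_T,\delta_X)$. If $\delta_T\geq0$, then $\mathcal{A}'_X=\mathcal{A}_X=\lsem0,\min(\delta_X,q-1)\rsem\cup\{\delta_X\}\subseteq\lsem0,\max(q,\delta_X)-1\rsem\cup\{\delta_X\}=\mathcal{A}^*_X$. If $\delta_T<0$ and (\ref{H}) fails, then $A<q$, so $\mathcal{A}'_X=\mathcal{A}_X=\lsem0,\min(\lfloor A\rfloor,q-1)\rsem\cup(\{A\}\cap\N)\subseteq\lsem0,\max(q-1,\lfloor\delta/\eta\rfloor)\rsem=\mathcal{A}^*_X$. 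If (\ref{H}) holds, then $\mathcal{A}'_X=\lsem0,q-1\rsem\subseteq\lsem0,\delta/\eta\rsem=\mathcal{A}^*_X$. Combining (a), (b) and the first display,
\[d_\eta(\delta_T,\delta_X)\ \geq\ \min_{\alpha_0\in\mathcal{A}'_X}f(\alpha_0)\ \geq\ \min_{\alpha_0\in\mathcal{A}^*_X}f(\alpha_0),\]
the last step being monotonicity of the minimum under enlarging the index set.

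I expect (a) to be the delicate step, precisely because of the interaction between the cut-off $\min(B(\alpha_0),q)$ bounding the $\beta_0$-range and the $\max(\cdot,1)$ truncations in Proposition \ref{tomin}: one must locate the minimising $\beta_0$ correctly when $B(\alpha_0)$ is $0$, when $B(\alpha_0)\in\lsem1,q-1\rsem$, and when $B(\alpha_0)\geq q$, and check in each case that the first factor $q-\alpha_0$ stays $\geq1$. Part (b) and the final monotonicity step are routine bookkeeping of the index sets.
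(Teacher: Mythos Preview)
Your proof is correct and follows essentially the same approach as the paper's, which also collapses the two-variable minimum to a one-variable one via the observation $q-\beta_0\geq q-B(\alpha_0)+1$ for $\beta_0\in\lsem 0,\min(B(\alpha_0),q)-1\rsem$; you simply spell out the index-set bookkeeping that the paper leaves implicit. One small slip in part (b): the clause ``then $A<q$'' when $\delta_T<0$ and (\ref{H}) fails is not quite right (it can happen that $A\geq q$ with $\eta\nmid\delta_T$), but this is harmless since then $\{A\}\cap\N=\varnothing$ and the inclusion $\mathcal{A}'_X\subseteq\mathcal{A}^*_X$ that you actually need holds anyway.
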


\begin{proof}
By Theorem \ref{DIST}, we have to minimize $\#\Delta^*(\delta_T,\delta_X)_{M_{(\alpha_0,\beta_0)}}$ for $(\alpha_0,\beta_0) \in \mathcal{K}^*(\delta_T,\delta_X)$.
The only observation we need to prove this lemma is that for each $\alpha_0 \in \mathcal{A}^*_X$, for all $\beta \in \lsem 0 , \min(B(\alpha_0),q)-1\rsem$, 
\[q-\beta_0 \geq q-B(\alpha_0) + 1.\]
Substituting in the formula of Proposition \ref{tomin} gives the desired conclusion.
\end{proof}

In other words, Lemma \ref{rewrite} means that the minimum is reached by monomials of the form $M(\alpha_0,\delta -\eta\alpha_0)$ for $\alpha_0 \in \mathcal{A}^*_X$.

\begin{prop}\label{distance}
Let $\eta \geq 0$, $(\delta_T,\delta_X) \in \Z \times \N$ with $\delta \geq 0$. The code $C_\eta(\delta_T,\delta_X)$ on the Hirzebruch surface $\mathcal{H}_\eta$ has minimum distance that is given as follows:
\begin{itemize}
\item If $\eta \geq 2$,
\begin{itemize}
\item If $q > \delta$, then
\[d_\eta(\delta_T,\delta_X)= (q+\mathds{1}_{\delta_X=0})(q-\delta+1),\]
\item If $\max\left( \frac{\delta}{\eta +1},\delta_T\right) < q \leq \delta$, then 
\[d_\eta(\delta_T,\delta_X)=q - \left\lfloor \frac{\delta-q}{\eta} \right\rfloor,\]
\item If $ q \leq \max\left( \frac{\delta}{\eta +1},\delta_T\right)$,
\[d_\eta(\delta_T,\delta_X)= \left\{\begin{array}{cl}\max(q-\delta_X+1,1) & \text{if } \delta_T \geq 0,\\
1 &\text{if } \delta_T <0,
\end{array}\right.\]
\end{itemize} 
\item if $\eta=0$,
\[d_\eta(\delta_T,\delta_X) = \max(q-\delta_X + 1, 1)  \max(q-\delta_T + 1, 1).\]
\end{itemize}
\end{prop}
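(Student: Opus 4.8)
The plan is to combine Lemma~\ref{rewrite} with a case-by-case minimization of the univariate function $f$ defined there, namely
\[f(\alpha_0)=\max(q-\alpha_0+\mathds{1}_{\alpha_0=\delta_X},1)\,\max(q-\delta+\eta\alpha_0+1,1),\]
over the finite set $\mathcal{A}^*_X$. First I would record the two elementary facts that drive the analysis: the first factor $\max(q-\alpha_0+\mathds{1}_{\alpha_0=\delta_X},1)$ is (weakly) decreasing in $\alpha_0$ on $\lsem 0,q-1\rsem$ and equals $1$ for $\alpha_0\geq q$ (with the possible bump at $\alpha_0=\delta_X$), whereas the second factor $\max(q-\delta+\eta\alpha_0+1,1)$ is (weakly) increasing in $\alpha_0$ and equals $1$ exactly when $\alpha_0\leq\frac{\delta-q}{\eta}$, i.e. $\alpha_0\leq s$ in the notation of Proposition~\ref{dimcode}. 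So $f$ is a product of a decreasing and an increasing factor, each eventually constant equal to $1$; the minimum is therefore attained either where the increasing factor just leaves the value $1$, or at an endpoint of $\mathcal{A}^*_X$.

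Then I would split according to the position of $q$ relative to $\delta$, $\delta_T$ and $\frac{\delta}{\eta+1}$, mirroring exactly the trichotomy in the statement. \textbf{Case $q>\delta$ ($\eta\geq 2$).} Here the second factor is $q-\delta+\eta\alpha_0+1\geq q-\delta+1>1$ for all $\alpha_0\geq 0$, hence strictly increasing, so $f$ is minimized at the smallest admissible $\alpha_0$. If $\delta_X>0$ then $\alpha_0=0$ is admissible and $f(0)=q(q-\delta+1)$; if $\delta_X=0$ then $\delta=\eta\delta_X+\delta_T=\delta_T$ forces $\delta_T=\delta\geq 0$, the polygon degenerates to the segment $\alpha_0=0$, and the first factor at $\alpha_0=0=\delta_X$ picks up the indicator, giving $f(0)=(q+1)(q-\delta+1)$. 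This yields the formula $(q+\mathds{1}_{\delta_X=0})(q-\delta+1)$. \textbf{Case $\max(\tfrac{\delta}{\eta+1},\delta_T)<q\leq\delta$.} Now $0\le s=\frac{\delta-q}{\eta}<1$ is false in general; rather $q>\frac{\delta}{\eta+1}$ is equivalent to $s<q$, so $\lfloor s\rfloor$ is an admissible value of $\alpha_0$ in $\mathcal{A}^*_X$ (one must check $\lfloor s\rfloor\le\max(q-1,\lfloor\delta/\eta\rfloor)$, which follows from $s<\delta/\eta\le\lfloor\delta/\eta\rfloor+1$ and $s<q$). At $\alpha_0=\lfloor s\rfloor$ the second factor equals $1$ and the first equals $q-\lfloor s\rfloor$ (the indicator cannot fire since $\lfloor s\rfloor<q\le\delta_X+\delta_T/\eta\le\delta_X$ when $\delta_T\le0$, and when $\delta_T\ge0$ one checks $\lfloor s\rfloor<\delta_X$ using $q>\delta_T$). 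For $\alpha_0<\lfloor s\rfloor$ the second factor is still $1$ but the first is larger, and for $\alpha_0>\lfloor s\rfloor$ the product exceeds $q-\lfloor s\rfloor$ because the increasing factor has grown by at least $\eta\geq 2$ while the decreasing factor dropped by $1$ — a short inequality $(q-\alpha_0)(q-\delta+\eta\alpha_0+1)\geq q-\lfloor s\rfloor$ on the relevant range. Hence the minimum is $q-\lfloor s\rfloor=q-\lfloor\frac{\delta-q}{\eta}\rfloor$. \textbf{Case $q\leq\max(\tfrac{\delta}{\eta+1},\delta_T)$.} Then $s\geq q-1$, the second factor is constant equal to $1$ on all of $\mathcal{A}^*_X$, so $f$ reduces to its first factor, minimized at the largest admissible $\alpha_0$: if $\delta_T\geq 0$ this is $\alpha_0=\delta_X$, giving $f(\delta_X)=\max(q-\delta_X+1,1)$; if $\delta_T<0$ this is $\alpha_0=\lfloor\delta/\eta\rfloor$, and since $q\le\frac{\eta}{\eta+1}\cdot\frac{\delta}{\eta}$ forces $\lfloor\delta/\eta\rfloor\geq q$, the first factor there is $1$. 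For $\eta=0$ the two variables decouple completely — the polygon is the rectangle $\lsem 0,\delta_X\rsem\times\lsem 0,\delta_T\rsem$ and $\#\Delta^*(\epsilon_T,\epsilon_X)_{M(\alpha_0,\beta_0)}$ factors as a product over the two coordinates — so minimizing each factor independently gives $\max(q-\delta_X+1,1)\max(q-\delta_T+1,1)$; here $\delta=\delta_T$.

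The routine parts are the monotonicity bookkeeping and verifying that the claimed minimizer actually lies in $\mathcal{A}^*_X$ in each case; I would relegate those to one or two displayed inequalities per case. The step I expect to be the genuine obstacle is the middle case $\max(\tfrac{\delta}{\eta+1},\delta_T)<q\le\delta$: one has to show that once $\alpha_0$ passes $\lfloor s\rfloor$ the product never dips back below $q-\lfloor s\rfloor$, and simultaneously that the bump $\mathds{1}_{\alpha_0=\delta_X}$ in the first factor cannot rescue a smaller value at $\alpha_0=\delta_X$ — this requires using $q>\delta_T$ (resp.\ $q>\delta_T\ge 0$ or $\delta_T<0$) to locate $\delta_X$ relative to $q$ and $s$, and is exactly where the hypothesis $\eta\ge 2$ (as opposed to $\eta=1$, excluded throughout) is used to guarantee the discrete slope of the increasing factor dominates that of the decreasing one. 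The $\eta=0$ case is easy but must be stated separately precisely because the $\eta\geq 2$ arithmetic (division by $\eta$, the quantity $\frac{\delta}{\eta+1}$) degenerates.
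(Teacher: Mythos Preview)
Your overall plan---reduce via Lemma~\ref{rewrite} to minimizing $f$ over $\mathcal{A}^*_X$, then split according to the trichotomy in the statement---is exactly the paper's approach, and your treatment of the third case and of $\eta=0$ is fine. But there is a genuine gap in your first case $q>\delta$.

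You write: ``the second factor is $q-\delta+\eta\alpha_0+1\geq q-\delta+1>1$ for all $\alpha_0\geq 0$, hence strictly increasing, so $f$ is minimized at the smallest admissible $\alpha_0$.'' This is a non sequitur. The first factor is decreasing, the second increasing; the product $(q-\alpha_0)(q-\delta+\eta\alpha_0+1)$ is a concave quadratic in $\alpha_0$, so its minimum on an interval is at an \emph{endpoint}---but nothing you have said tells you which one. You must actually compare $f(0)$ with $f$ at the right end of $\mathcal{A}^*_X$ (namely $f(\delta_X-1)$ and $f(\delta_X)$ when $\delta_T\geq 0$, or $f(\lfloor A\rfloor)$ when $\delta_T<0$). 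The paper carries out this comparison explicitly, and it is precisely here---not in the middle case---that the hypothesis $\eta\geq 2$ is indispensable: for $\delta_T<0$ one is led to the inequality $1-\eta\{A\}\leq q(\eta-1)$, which fails when $\eta=1$. So you have correctly sensed that $\eta\geq 2$ is the crux, but you have placed it in the wrong case and left the first case unproved.

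Two smaller points. In the middle case your heuristic ``the increasing factor has grown by at least $\eta\geq 2$ while the decreasing factor dropped by $1$'' is not quite right (the second factor at $\lfloor s\rfloor+1$ lies in $[2,\eta+1]$, not necessarily $\geq 1+\eta$), and beyond $\lfloor s\rfloor+1$ you still need concavity plus a comparison with the right endpoint; the paper does both. Finally, your outline establishes only the lower bound $d_\eta\geq\min f$; equality requires exhibiting codewords of that weight, which the paper does separately in Proposition~\ref{atteint}.
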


We first prove the lower bounding. Equality will follow from Proposition \ref{atteint}, exhibiting polynomials associated to words of minimum weight.

\begin{proof}

By Lemma \ref{rewrite}, we aim to prove the lower bound by minimizing the function $f$ on $\mathcal{A}^*_X$, the function and the set depending on parameters $\eta,\: \delta_T, \: \delta_X$ and $q$.

Let us highlight that $\mathcal{A}^*_X \subset [0,\delta_X]$ regardless of parameters.

The form of the function $f$ as a product of two maxima of a linear function with $1$  implies that the real function $f : [0,\delta_X] \rightarrow \R$ is a concave piecewise function. The pieces depend on the size of $q$ with respect to the parameters.

More precisely, note that
\[q-\delta +\eta\alpha_0 + 1 \leq 1 \: \Leftrightarrow \: \alpha_0 \leq s \]
where $\displaystyle{s=\frac{\delta-q}{\eta}}$ has already been defined in Section \ref{sectionS}.

Then $f$ is a piecewise function that has a decreasing linear polynomial on the interval $[0,s]$ and a concave quadratic function on the interval $[s,\delta_X[$ with negative dominant coefficient, provided that $s \in [0,\delta_X[$.

If $s \leq 0$, then the function $f$ is quadratic and concave on $[0,\delta_X[$. Finally, if $s \geq \delta_X$, then $f$ is decreasing on $[0,\delta_X[$.

Then the minimum point of $f$ on $\mathcal{A}^*_X$ is the floor or the ceiling of  one the bound of these intervals. 

\textbf{Let us first suppose that $\eta  \geq 2$ and $\delta_T \geq 0$.} Then
\[\mathcal{A}^*_X=\left\{\begin{array}{cl} \lsem 0,\delta_X \rsem & \text{if } \delta_X \leq q \\
 \lsem 0,q-1 \rsem \cup \{\delta_X\} & \text{if } \delta_X \geq q \end{array} \right.\]

\begin{enumerate}
 \item If $\delta< q$, then $s < 0$ and $ q > \delta_X$. In this case 
\[f(\alpha_0)=\left\{\begin{array}{cl}
(q-\alpha_0)(q-\delta +\eta\alpha_0 +1) &\text{if } \alpha_0 \neq \delta_X, \\
(q-\delta_X+1)(q-\delta_T+1) & \text{if } \alpha_0=\delta_X.
\end{array}\right.\]

 \begin{itemize}
\item If $\delta_X=0$, then $\mathcal{A}^*_X=\{0\}$ and 
 \[\min_{\alpha_0 \in \mathcal{A}_X}f(\alpha_0)= f(\delta_X)=(q+1)(q-\delta_T+1).\] 
 
\item Otherwise, on the interval $[0,\delta_X-1]$, the minimum is reached by one of the bounds of the interval, i.e. $\alpha_0=0$ or $\alpha_0=\delta_X-1$ (see Fig. \ref{f8}).

\begin{figure}[h]
\centering
\begin{tikzpicture}
		%Variables
		\def\dX{5};
		\def\dT{1};
		\def\n{1};
		\def\q{7};
		\def\imX{(\q-\dX+1)*(\q-\dT+1)}
		\pgfmathsetmacro{\im}{max(\q-\dX+1,1)*max(\q-\dT+1,1)}
 		\pgfmathsetmacro{\a}{\dX-1} %Variable utilisée dans les domaines
		%Dessiner 
	    \begin{axis}[xmin=0,ymin=10,clip=true,axis lines*=middle, grid = both,
    grid style={dashed}, xtick distance=1, ytick distance=1,x=0.5cm,y=0.5cm]
		\addplot[domain=0:\a,red] {(\q-x)*(\q-\dT-\n*(\dX-x)+1)};
		\addplot[color=black,mark=*] coordinates{(\dX,\imX)};
		\addplot[color=black,mark=*] coordinates{(4,18)};
		\addplot[color=black,mark=*] coordinates{(0,14)};
		\end{axis}
		\draw [dashed] (2.5,0) node[below,fill=white,font=\boldmath]{$\delta_X$} -- (2.5,5.5);
    \end{tikzpicture} 
   \caption{Graph of $f$ when $q> \delta$ \\ e.g. $(\eta,\delta_T,\delta_X,q)=(1,1,5,7)$}\label{f8}
   \end{figure}
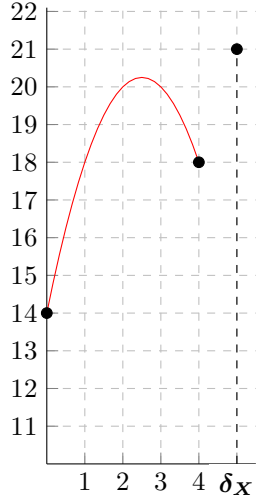

In addition, one can notice that $f(\delta_X)=f(\delta_X-1)+\eta (q-\delta_X-1)$. Then
\[f(\delta_X) > f(\delta_X-1).\]
The minimum of the function $f$ on $\mathcal{A}^*_X$ is reached either by $\alpha_0=0$ or $\alpha_0=\delta_X-1$. It remains to compare both values.

We have $f(0) \leq f(\delta_X-1)$ if and only if
\[q\eta\delta_X \geq (\delta_X-1)(q-\delta_T-\eta+1)+\eta q\]
which is equivalent to
\begin{equation}\label{tjrs}q(\delta_X-1)(\eta-1) \geq -(\delta_X-1)(\delta_T+\eta-1)\end{equation}
Since $\eta \geq 2$, $\delta_T \geq 0$, $\delta_X \geq 1$ and $q \geq 2$, the left hand side is non negative, whereas the the right hand side is non positive. The inequation (\ref{tjrs}) is thus always satisfied and 
\[\min_{\alpha_0 \in \mathcal{A}^*_X}f(\alpha_0)= f(0)= q(q-\delta+1)\]
\end{itemize}  
 
 \item If $\max\left( \frac{\delta}{\eta +1},\delta_T\right) < q \leq \delta$, then $\delta_X \geq 1$ and $\lfloor s \rfloor \in \lsem 0,\min(\delta_X,q)-1\rsem$). In this case
 \[f(\alpha_0)=\left\{\begin{array}{cl}
(q-\alpha_0) &\text{if } \alpha_0 \leq s,\\
(q-\alpha_0)(\eta(\alpha_0-s) +1) &\text{if } \alpha_0 \neq \delta_X \text{ and } \alpha_0 \geq s,\\
\max(q-\delta_X+1,1)(q-\delta_T+1) & \text{if } \alpha_0=\delta_X.
\end{array}\right.\]
See Figure \ref{f9} for examples of graph of the function $f$.
\begin{figure}[h!]

\begin{subfigure}[b]{0.48\textwidth}
\centering
\begin{tikzpicture}
		%Variables
		\def\dX{9};
		\def\dT{1};
		\def\n{3};
		\def\q{9};
		\pgfmathsetmacro{\im}{max(\q-\dX+1,1)*max(\q-\dT+1,1)}
		\def\A{ \dX < \q ? \dX : \q}
 		\pgfmathsetmacro{\a}{\A-1} %Variable utilisée dans les domaines
		\def\S{(\dT+\n*\dX-\q)/\n}
		\pgfmathsetmacro{\s}{ \S < 0 ? 0 : \S > \a ? \a : \S} %Variable utilisée dans les domaines
		\def\ss{floor(\s)}
		\def\ims{(\q-\ss-1)*(\q-\dT-\n*(\dX-\ss-1)+1)}
		%Dessiner 
	    \begin{axis}[xmin=0,ymin=0,clip=false,axis lines*=middle, grid = both,xtick={1,2,3,4,5,8},
    grid style={dashed}, xtick distance=1, ytick distance=1,x=0.5cm,y=0.5cm]
	    \addplot[domain=0:\s,blue] {\q-x};
		\addplot[domain=\s:8,red] {(\q-x)*(\q-\dT-\n*(\dX-x)+1)};
		\addplot[color=black,mark=*] coordinates{(\ss,\q-\ss)};
		\addplot[color=black,mark=*] coordinates{(\ss+1,\ims)};
		\addplot[color=black,mark=*] coordinates{(\dX,\im)};
		\addplot[color=black,mark=*] coordinates{(8,6)};
		\end{axis}
		\draw [dashed] (4.5,0) node[below=0.05cm,font=\boldmath]{$\delta_X$} -- (4.5,4.5);
		\draw [dashed] (3,0) node[below=0.05cm,font=\boldmath]{$\lfloor s \rfloor$} -- (3,1.5);
		\draw [dashed] (3.5,0) node[below=0.05cm,font=\boldmath]{$\lceil s \rceil$} -- (3.5,3);
    \end{tikzpicture}
  \caption{$\delta_X \geq q$ \\
  e.g. $(\eta,\delta_T,\delta_X,q)=(3,1,9,9)$}
   \end{subfigure} \hfill
   \begin{subfigure}[b]{.48\linewidth}
	\centering
\begin{tikzpicture}
		%Variables
		\def\dX{3};
		\def\dT{3};
		\def\n{1};
		\def\q{5};
		\pgfmathsetmacro{\im}{max(\q-\dX+1,1)*max(\q-\dT+1,1)}
		\def\A{ \dX < \q ? \dX : \q}
 		\pgfmathsetmacro{\a}{\A-1} %Variable utilisée dans les domaines
		\def\S{(\dT+\n*\dX-\q)/\n}
		\pgfmathsetmacro{\s}{ \S < 0 ? 0 : \S > \a ? \a : \S} %Variable utilisée dans les domaines
		\def\ss{floor(\s)}
		\def\ims{(\q-\ss-1)*(\q-\dT-\n*(\dX-\ss-1)+1)}
		%Dessiner 
	    \begin{axis}[xmin=0,ymin=0,clip=false,axis lines*=middle, grid = both, xtick={2},
    grid style={dashed}, xtick distance=1, ytick distance=1,x=0.5cm,y=0.5cm]
	    \addplot[domain=0:\ss,blue] {\q-x};
		\addplot[domain=\s:2,red] {(\q-x)*(\q-\dT-\n*(\dX-x)+1)};
		\addplot[color=black,mark=*] coordinates{(\ss,\q-\ss)};
		\addplot[color=black,mark=*] coordinates{(2,6)};
		\addplot[color=black,mark=*] coordinates{(\dX,\im)};
		\end{axis}
		\draw [dashed] (1.5,0) node[below,font=\boldmath]{$\delta_X$} -- (1.5,4.5);
		\draw [dashed] (0.5,0) node[below,font=\boldmath]{$s$} -- (0.5,2);
    \end{tikzpicture}
  \caption{$\delta_X < q$ \\ e.g. $(\eta,\delta_T,\delta_X,q)=(1,3,3,5)$}
 \end{subfigure} 
 \caption{Examples of graph of the function $f$ when $\max\left( \frac{\delta}{\eta +1},\delta_T\right) < q \leq \delta$}\label{f9}
 \end{figure}
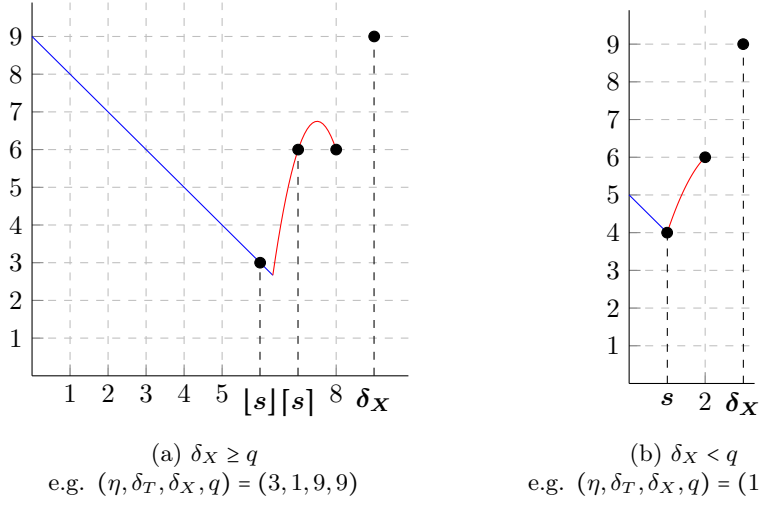

The possible arguments for the minimum are $\lfloor s \rfloor$, $\lfloor s \rfloor+1$, $\min(q,\delta_X)-1$ and $\delta_X$.

First notice that  $\lfloor s \rfloor \leq q-1$ and
\[\begin{aligned}f(\lfloor s \rfloor +1) &= (q-\lfloor s \rfloor-1)(\eta(\lfloor s \rfloor+1-s)+1\\
&= f(\lfloor s \rfloor) -1 + (q-\lfloor s \rfloor)\eta(\lfloor s \rfloor+1-s).\end{aligned}\]
Therefore
\begin{equation}\label{s}f(\lfloor s \rfloor+1) \geq f(\lfloor s \rfloor).\end{equation}

Second let us check that the minimum of $f$ cannot be reached by $\alpha_0=\delta_X$.
\begin{itemize}
\item If $\delta_X \leq q$, then $f(\delta_X)=f(\delta_X-1)+\eta (q-\delta_X-1)$ and $f(\delta_X) > f(\delta_X-1)$.
\item If $\delta_X \geq q$, then $f(\delta_X)=\eta(\delta_X-s)+1 \geq \eta(q-1-s)+1 = f(q-1)$.
\end{itemize}

Then the minimum of $f$ is reached by either $\alpha_0=\lfloor s \rfloor$ or $\alpha_0=\min(\delta_X,q)-1$.

\begin{itemize}
\item If $\delta_X \geq q$, we want to prove that $f(\lfloor s \rfloor) \leq f(q-1)$.
\[\begin{aligned}f(q-1)&=\eta(q-s-1)+1& \\
&\geq \eta(q-\lfloor s \rfloor -1)+ 1& \\
&\geq q-\lfloor s \rfloor +(\eta-1)(q-\lfloor s \rfloor -1)&\text{since } \lfloor s \rfloor \leq q-1\\
& \geq q-\lfloor s \rfloor =f(\lfloor s \rfloor)\end{aligned}\]

\item If  If $\delta_X \leq q$,  we want to prove that $f(\lfloor s \rfloor) \leq f(\delta_X-1)$. 

Let us assume $\lfloor s \rfloor \neq \delta_X-1$ and $f(\lfloor s \rfloor) > f(\delta_X-1)$ and let us display a contradiction.
\[f(\lfloor s \rfloor) > f(\delta_X-1) \: \Leftrightarrow \: \lfloor s \rfloor < \delta_X-1 -\eta(\delta_X-1-s)(q-\delta_X+1)\]
Since the right hand side is an integer, we have
\[f(\lfloor s \rfloor) > f(\delta_X-1) \: \Rightarrow \: s < \delta_X-1 -\eta(\delta_X-1-s)(q-\delta_X+1)\]
Replacing $s$ by its value, we get
\[\frac{\delta_T}{\eta} < - 1 - \eta(\delta_X-1-s)(q-\delta_X+1)\]
But the assumption $\lfloor s \rfloor \neq \delta_X-1$ ensures that $\delta_X-1 > s$ and then $\eta(\delta_X-1-s)(q-\delta_X+1) \geq 0$. The right handside being negative, it is a contradiction with $\delta_T \geq 0$.
\end{itemize}
  
Then, in both cases,
\[\min_{\alpha_0 \in \mathcal{A}^*_X}f(\alpha_0)= f(\lfloor s \rfloor)= q-\lfloor s \rfloor\]

 \item if $q \leq \max\left( \frac{\delta}{\eta +1},\delta_T\right)$, then $s \geq \min(\delta_X,q)$ and
 \[f(\alpha_0)=\max(q-\alpha_0+\mathds{1}_{\alpha_0=\delta_X},1).\]
This is a decreasing function on $[0,\delta_X]$, as $f(\delta_X-1)=f(\delta_X)$. It follows easily that
 \[\min_{\alpha_0 \in \mathcal{A}^*_X}f(\alpha_0)= f(\delta_X)=\left\{ \begin{array}{cl}
1 &\text{if } q < \delta_X,  \\
 q-\delta_X+1 &\text{if } q \geq \delta_X.
\end{array}\right.\]  
 \end{enumerate}

%dT négatif !!!

\textbf{Now, let us focus on the case $\eta \geq 2$ and $\delta_T <0$.} Let us recall that $A=\frac{\delta}{\eta} < \delta_X$ does not belong to $\mathcal{A}^*_X$ if $A \geq q$ and 
\[\mathcal{A}^*_X=\left\{\begin{array}{cl} \lsem 0,\lfloor A \rfloor \rsem & \text{if } A < q, \\
 \lsem 0,q-1 \rsem & \text{if } A \geq q. \end{array} \right.\]
Moreover, $s=A - \frac{q}{\eta} < A$. . Since $\delta_X \notin \mathcal{A}^*_X$, one can rewrite
\[f(\alpha_0)=\left\{\begin{array}{cl}
(q-\alpha_0) &\text{if } \alpha_0 \leq s,\\
(q-\alpha_0)(q-\eta(A-\alpha_0) +1) &\text{if } \alpha_0 \geq s\end{array}\right.\]

\begin{enumerate}
 \item If $\delta=\eta A < q$, then $s<0$ and $\mathcal{A}^*_X=\lsem 0, \lfloor A \rfloor \rsem$. Therefore the function $f$ can be written
 \[f(\alpha_0)=(q-\alpha_0)(q+\eta(\alpha_0-A)+1)\]
It is increasing then decreasing on $\mathcal{A}^*_X$ so its minimum is reached for either $\alpha_0=0$ or $\alpha_0=\lfloor A \rfloor$. Let us compare $f(0)$ and $f(\lfloor A \rfloor)$.

\[f(0) \leq f(\lfloor A \rfloor) \: \Leftrightarrow \: \lfloor A \rfloor(q+\eta (\lfloor A \rfloor-A) + 1) \leq q \eta \lfloor A \rfloor\]

If $\lfloor A \rfloor \neq 0$ we can simplify by $\lfloor A \rfloor$ and, writing $\{A\}=A - \lfloor A \rfloor $, we get
\[f(0) \leq f(\lfloor A \rfloor) \: \Leftrightarrow \: 1 -\eta \{A \} \leq q (\eta -1)\]
However, $0 \leq \eta \{A\} \leq \eta-1$, which implies that $1-\eta\{A\} \leq 1$ whereas the right hand-side is a non negative integer. Then the right hand-side is greater than the left one if and only if it is non zero, which is equivalent to to $\eta \geq 2$, which is always true\footnote{Here is one of the arguments that fail when $\eta=1$.}.

Otherwise, it is obvious. Then
\[\min_{\alpha_0 \in \mathcal{A}^*_X}f(\alpha_0) =  f(0)=q(q-\delta+1). \]

 \item If $\frac{\delta}{\eta+1} < q \leq \delta$, then
 \[\lfloor s \rfloor \in \mathcal{A}^*_x = \left\{\begin{array}{cl}
 \lsem 0,\lfloor A \rfloor \rsem &\text{if } q > A = \frac{\delta}{\eta},\\
 \lsem 0 ,q-1 \rsem &\text{if } q \leq A.
 \end{array}\right.\]
 
 The function $f$ has a linear decreasing piece on $[ 0, s ]$ and then it is concave on $[0,\min(\lfloor A \rfloor,q-1)]$, as illustrated in Figure \ref{f9}. 
 
    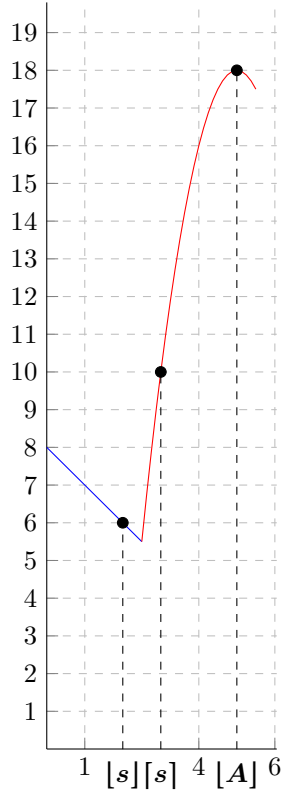
\begin{figure}[h] 
\centering
\begin{tikzpicture}
		%Variables
		\def\dX{9};
		\def\dT{-5};
		\def\n{2};
		\def\q{8};
		\pgfmathsetmacro{\im}{max(\q-\dX+1,1)*max(\q-\dT+1,1)}
		\def\A{ \dX+\dT/\n < \q ?  \dX+\dT/\n : \q}
 		\pgfmathsetmacro{\a}{\A-1} %Variable utilisée dans les domaines
		\def\S{(\dT+\n*\dX-\q)/\n}
		\pgfmathsetmacro{\s}{ \S < 0 ? 0 : \S > \a ? \a : \S} %Variable utilisée dans les domaines
		\def\ss{floor(\s)}
		\def\ims{(\q-\ss-1)*(\q-\dT-\n*(\dX-\ss-1)+1)}
		%Dessiner 
	    \begin{axis}[xmin=0,ymin=0,clip=false,axis lines*=middle, grid = both, xtick={1,4,6},
    grid style={dashed}, xtick distance=1, ytick distance=1,x=0.5cm,y=0.5cm]
	    \addplot[domain=0:\s,blue] {\q-x};
		\addplot[domain=\s:\a-1,red] {(\q-x)*(\q-\dT-\n*(\dX-x)+1)};
		\addplot[color=black,mark=*] coordinates{(\ss,\q-\ss)};
		\addplot[color=black,mark=*] coordinates{(\ss+1,\ims)};
		\addplot[color=black,mark=*] coordinates{(5,18)};
		\end{axis}
		\draw [dashed] (1,0) node[below =0.05cm,font=\boldmath]{$\lfloor s \rfloor$} -- (1,3);
		\draw [dashed] (1.5,0) node[below=0.05cm,font=\boldmath]{$\lceil s \rceil$} -- (1.5,5);
		\draw [dashed] (2.5,0) node[below=0.05cm,font=\boldmath]{$\lfloor A \rfloor$} -- (2.5,9);
    \end{tikzpicture} 
   \caption{Graph of $f$ for $\frac{\delta}{\eta} < q \leq \delta$ and $\eta \geq 2$ \\ e.g. $(\eta,\delta_T,\delta_X,q)=(2,-5,9,8)$}\label{f10}
   \end{figure}

Then it can be proved in a same way as in the second case for $\delta_T \geq 0$ (Equation (\ref{s})) that
 \[f(\lfloor s \rfloor+1) \geq f(\lfloor s \rfloor).\]
The minimum of $f$ on $\mathcal{A}^*_X$ is thus either reached for $\alpha_0=\lfloor s \rfloor$ or
\begin{itemize}
\item $\alpha_0=\lfloor A \rfloor$ if $q > A$,
\item $\alpha_0=q-1$ if $q \leq A$.
\end{itemize}

Let us prove that the minimum is reached at $\alpha_0=\lfloor s \rfloor$ in both cases.
\begin{itemize}
\item If $q > A$, let us first notice that, since $s < A$, we have $\lfloor s \rfloor \leq \lfloor A \rfloor$.

If they are equal, the problem is solved.

 Otherwise, one can write 
 \[f(A)=(q-A)(q+1) > f(\lfloor s \rfloor).\] As a fonction on $\R$, $f$ is increasing on $[ \lfloor s \rfloor+1, A]$, then 
 \[f(\lfloor A \rfloor) \geq f(\lfloor s +1 \rfloor) \geq  f(\lfloor s \rfloor).\]
 \item If $q \leq A$, we have 
 \[f(q-1)=\eta(q-1-s)+1=q-s +(\eta-1)(q-1-s).\]
 Since $\eta \geq 2$, we have
\[\begin{aligned} f(q-1)&=q-s +(\eta-1)(q-1-s),&&&\\
&\geq q-s +(\eta-1)\left(q\left(1-\frac{1}{\eta}\right)+1\right)&&\text{since } \frac{\eta-1}{\eta}q \leq s,\\
& \geq q-s + \left(\frac{q}{2} + 1\right) &&\text{because } \eta \geq 2,\\
& \geq q-s+1 &&\text{because } q \geq 2,\\
& \geq f(\lfloor s \rfloor)= q-\lfloor s \rfloor&& \text{because } s-1 \leq \lfloor s \rfloor.\end{aligned}\]

\end{itemize}
\begin{flalign*}\text{Then, in both cases, } &\min_{\alpha_0 \in \mathcal{A}^*_X}f(\alpha_0)= f(\lfloor s \rfloor)=q - \lfloor s \rfloor. & \end{flalign*}
  
 \item If $q \leq \frac{\delta}{\eta+1}$, then $s \geq q$. For all $\alpha \in \mathcal{A}^*_X= \lsem 0 ,q-1 \rsem$, 
\[f(\alpha_0)=\max(q-\alpha_0,1).\] 
 Since $q-1 \in \mathcal{A}^*_X$, we have
  \[\min_{\alpha_0 \in \mathcal{A}^*_X}f(\alpha_0) = f(q-1)=1\]
 \end{enumerate}

\textbf{Finally, for $\eta=0$}, the expression in the first maximum is a decreasing function of $\alpha_0$ and the expression in the second maximum does not depend on $\alpha_0$ anymore. Then
\[\min_{\alpha_0 \in \mathcal{A}^*_X}f(\alpha_0) = f(\delta_X)=\max(q-\delta_X + 1, 1)  \max(q-\delta_T + 1, 1)\]

\end{proof}

The following proposition displays some polynomials which codewords has weight that reaches the lower bound given in Proposition \ref{distance}.

\begin{prop}\label{atteint}
Write $\F_q=\{\xi_1,\xi_2,\dots,\xi_q\}$.
\begin{itemize}
\item If $\eta \geq 2$,
\begin{itemize}
\item If $q > \delta$, set
\[F(T_1,T_2,X_1,X_2)=
X_1^{\delta_X} \prod_{i=1}^{\delta} (T_2-\xi_i T_1), \]
\item If $\max\left( \frac{\delta}{\eta +1},\delta_T\right) < q \leq \delta$, write $s=\lfloor \frac{\delta - q}{\eta}\rfloor$. Set
\[F(T_1,T_2,X_1,X_2)=T_2^{\delta-\eta s-q}(T_2^q-T_2T_1^{q-1})X_1^{\delta_X-s}\prod_{i=1}^{s} (X_2-\xi_i T_1^\eta X_1),\]

\item If $ q \leq \max\left( \frac{\delta}{\eta +1},\delta_T\right)$, set
\[F(T_1,T_2,X_1,X_2)=\left\{\begin{array} {cl}
T_2^{\delta_T-q}(T_2^q -T_2T_1^{q-1}) \prod_{i=1}^{\delta_X}(X_2-\xi_i X_1 T_1 ^\eta)& \text{if } \delta_X< q,\\
X_2^{\delta_X-q}T_2^{\delta-q}(T_2^q -T_2T_1^{q-1})\prod_{i=1}^{q}(X_2-\xi_i X_1 T_1^\eta)&\text{if } \delta_X \geq q,\end{array}\right.\]
\end{itemize} 
\item if $\eta=0$, set $m_T=\min(q,\delta_T)$ and $m_X=\min(q,\delta_X)$. Set
\[F(T_1,T_2,X_1,X_2)=X_2^{\delta_X-m_X} T_2^{\delta_T-m_T} \prod_{i=1}^{m_X}(X_2-\xi_i X_1 T_1 ^\eta) \prod_{j=1}^{m_T}(T_2-\xi_j T_1).\]
\end{itemize}
Then the weight of the codeword associated to $F$ in $C_\eta(\delta_T,\delta_X)$ reaches the minimum distance.
\end{prop}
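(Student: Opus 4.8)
The plan is to turn the statement into a zero-counting problem. For $F\in R(\delta_T,\delta_X)\setminus\ker\ev_{(\delta_T,\delta_X)}$ the weight of its codeword is $\omega(\ev_{(\delta_T,\delta_X)}(F))=N-N_F$ with $N=(q+1)^2$ and $N_F=\#\mathcal{Z}(F)(\F_q)$, so it suffices to check, in each of the cases, that the polynomial $F$ displayed above lies in $R(\delta_T,\delta_X)$ and that $N-N_F$ equals the lower bound for $d_\eta(\delta_T,\delta_X)$ obtained in Proposition~\ref{distance}; combined with that proposition this forces $F$ to realise the minimum distance (and $F\notin\ker\ev_{(\delta_T,\delta_X)}$ is then automatic, since $N-N_F\ge 1$). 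The preliminary verification that $F$ is homogeneous of bidegree $(\delta_T,\delta_X)$ is a direct computation with (\ref{systdelta}): $X_1$ has bidegree $(-\eta,1)$, each factor $X_2-\xi_iX_1T_1^\eta$ has bidegree $(0,1)$, each factor $T_2-\xi_iT_1$ has bidegree $(1,0)$, and $T_2^q-T_2T_1^{q-1}$ has bidegree $(q,0)$; one must also check that all exponents of $F$ are $\ge 0$, and this is exactly where the hypotheses of each case (for instance $\delta-\eta s-q\ge 0$ and $\delta_X-s>0$ when $\max(\frac{\delta}{\eta+1},\delta_T)<q\le\delta$) are used.

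For the count $N_F$ I would stratify $\mathcal{H}_\eta(\F_q)$ into the four families of normal forms of Paragraph~\ref{SensEval}: the $q^2$ points $(1,a,1,b)$, the $q$ points $(0,1,1,b)$, the $q$ points $(1,a,0,1)$, and the point $(0,1,0,1)$, and evaluate $F$ family by family. The one arithmetic input that does all the work is Fermat's identity $a^q=a$ on $\F_q$: it makes $T_2^q-T_2T_1^{q-1}=\prod_{a\in\F_q}(T_2-aT_1)$ vanish at every point with $t_1\ne 0$, that is at all $q(q+1)$ points of the first and third families, and similarly makes a full product $\prod_{i=1}^{q}(X_2-\xi_iX_1T_1^\eta)$ vanish on the corresponding family in the fibre direction. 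Once such a factor has eliminated a whole block of points, the remaining monomial powers $T_2^{*},X_1^{*},X_2^{*}$ and the partial products $\prod_{i=1}^{m}(\cdots)$ with $m<q$ are evaluated directly on the comparatively few surviving points, which are then enumerated by hand. As the model computation, for $\eta\ge 2$, $q>\delta$ and $\delta_X\ne 0$ one gets that $F=X_1^{\delta_X}\prod_{i=1}^{\delta}(T_2-\xi_iT_1)$ vanishes precisely on the $q+1$ points with $x_1=0$ and on the $\delta q$ points $(1,\xi_i,1,b)$, so $N_F=q+1+\delta q$ and $N-N_F=q(q-\delta+1)$, the value predicted by Proposition~\ref{distance}; the subcase $\delta_X=0$ is analogous and gives $(q+1)(q-\delta+1)$. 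The case $\eta=0$ is the most transparent: $\mathcal{H}_0\simeq\PP^1\times\PP^1$, the $T$-factors of $F$ depend only on the base coordinate and the $X$-factors only on the fibre coordinate, so $N-N_F$ splits as a product of two one-variable counts, yielding $\max(q-\delta_T+1,1)\max(q-\delta_X+1,1)$.

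The difficulty here is entirely one of organisation. One has to run this scheme through all the subcases of Proposition~\ref{distance} --- the sign of $\delta_T$, whether $\delta_X<q$ or $\delta_X\ge q$, and the position of $\lfloor(\delta-q)/\eta\rfloor$ in the admissible range for $\alpha_0$ --- and for each of them keep precise track of which of the $2q+1$ ``points at infinity'' ($t_1=0$ or $x_1=0$) are not killed by $F$, since the weight equals that number up to a bounded correction and so is highly sensitive to off-by-one errors and to accidental coincidences between the zero loci of the individual factors (for example, the factor with $\xi_i=0$ degenerating to a pure power of $X_2$). \textbf{The main obstacle is precisely this boundary bookkeeping}, together with checking non-negativity of the exponents; the ideas are all already present in the single worked case. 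A convenient independent check, and the reason these particular $F$ are chosen, is that the leading term of each $F$ for the monomial order of Definition~\ref{ordre} should be the monomial $M(\alpha_0,\delta-\eta\alpha_0)$ that Lemma~\ref{rewrite} singles out as realising $\min_{\alpha_0\in\mathcal{A}^*_X}f(\alpha_0)$; re-deriving this both explains the construction and, via Lemma~\ref{repDiv} and Corollary~\ref{base}, re-confirms that $F\notin\ker\ev_{(\delta_T,\delta_X)}$.
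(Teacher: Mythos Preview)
Your proposal is correct and follows essentially the same approach as the paper: reduce to counting $N_F$ via the four-family stratification of $\mathcal{H}_\eta(\F_q)$, use the factorisation $T_2^q-T_2T_1^{q-1}=\prod_{a\in\F_q}(T_2-aT_1)$ to kill whole blocks of points, and compare the resulting weight with the lower bound of Proposition~\ref{distance}. The paper's proof is in fact equally terse---it writes out each subcase in one or two lines without checking bidegrees or non-negativity of exponents---so your explicit bidegree verification and the leading-term sanity check (which the paper relegates to a remark after the proof) are welcome additions rather than deviations.
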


\begin{remark}
\begin{enumerate}
\item The minimum $\#\Delta^*(\delta_T,\delta_X)_{M}$ on $M \in \Delta^*(\delta_T,\delta_X)$ is reached for by the leading term of $F$ in each case. 
\item The previous proposition guarantees us than the choice of $\epsilon_T$ and $\epsilon_X$ in Paragraph \ref{choix} is adequate.
\item Focusing on the points lying on the torus, J. Little and H. Schenck \cite{Little} already proved that the polynomial with the most zero $\F_q$-points on a Hirzeburch surface have the form given in Proposition \label{atteint} for $q$ large enough to make the evaluation map injective.
I. Soprunov and E. Soprunova \cite{SS} demonstrated that the number of $\F_q$ torus-points of a curve defined by $f=0$ depends on the number $L$ of absolutely irreducible factors of $f$:
\[\#C(\F_q^\times)\leq L(q-1) + \lfloor 2 \sqrt{q}\rfloor -1.\]
Even if one could fairly excepted that maximal curves are union of ``lines'', a comprehensive computation of polynomials associated to minimal codewords highlights non linear factors among these polynomials, as stated by I. Soprunov and E. Soprunova.
\end{enumerate}
\end{remark}

\begin{proof}
First, suppose $\eta=2$.
\begin{itemize}
\item If $q > \delta$, the polynomial $F(T_1,T_2,X_1,X_2)=
X_1^{\delta_X} \prod_{i=1}^{\delta} (T_2-\xi_i T_1)$ vanishes at every point of the form $(1,\xi_i,x_1,x_2)$ or $(t_1,t_2,0,1)$, that is to say at $(\delta)(q+1) + q+1- \delta$ points.
\item If $\max\left( \frac{\delta}{\eta +1},\delta_T\right) < q \leq \delta$, note that 
$T_2^q-T_2T_1^{q-1}=\prod_{a \in \F_q}(T_2-a T_1)$. Then the polynomial
\[F(T_1,T_2,X_1,X_2)=T_2^{\delta-\eta s-q}\prod_{a \in \F_q}(T_2-a T_1) X_1^{\delta_X-s}\prod_{i=1}^{s} (X_2-\xi_i T_1^\eta X_1) \]
vanishes at every point except at the ones of the form $(0,1,1,\xi)$ with $\xi \notin \{\xi_i, i \in \{1,\dots,s\}\}$. The code word associated has weight equal to $q-s$.
\item Assume $ q \leq \max\left( \frac{\delta}{\eta +1},\delta_T\right)$.
If $q > \delta_X$, the polynomial
\[F(T_1,T_2,X_1,X_2)=T_2^{\delta_T-q}(T_2^q -T_2T_1^{q-1}) \prod_{i=1}^{\delta_X}(X_2-\xi_i X_1 T_1 ^\eta)\]
vanishes at the point with the form $(1,a,x_1,x_2)$ and $(0,1,1,\xi_i)$, that is to say $q(q+1)+\delta_X$. 

If $\delta_X \geq q$, the only point at which $F= X_2^{\delta_X-q}T_2^{\delta-q}(T_2^q -T_2T_1^{q-1})\prod_{i=1}^{q}(X_2-\xi_i X_1 T_1^\eta)$ is not zero is $(0,1,0,1)$.
\end{itemize} 
Finally, if $\eta=0$, the polynomial 
\[F(T_1,T_2,X_1,X_2)=X_2^{\delta_X-m_X} T_2^{\delta_T-m_T} \prod_{i=1}^{m_X}(X_2-\xi_i X_1 T_1 ^\eta) \prod_{j=1}^{m_T}(T_2-\xi_j T_1)\]
vanishes at every point of the form $(t_1,t_2,1,\xi_i)$ and $(1,\xi_j,x_1,x_2)$, i.e. at $(m_T+m_X)(q+1)-m_Tm_X$ points. Moreover, if $q < \delta_X$ (resp. $q < \delta_T$), it also vanishes at $(t_1,t_2,1,0)$ (resp. $(1,0,x_1,x_2)$).
\end{proof}

\begin{remark}
The parameters for the code $C_0(\delta_T,\delta_X)$ on $\PP^1 \times \PP^1$, are the same as in \cite{CD} (see Theorem 2.1 and Remark 2.2).
\end{remark}

\section{Upperbound on the number of $\F_q$-rational points of curves on Hirzebruch surfaces}\label{nombre}\label{seccurve}

Proposition \ref{distance} gives an upper bound on the number of $\F_q$-rational points of a non-filling curve on a Hirzebruch surface $\mathcal{H}_\eta$. It is worth to highlight that there exists a filling curve of bidegree $(\delta_T,\delta_X)$ if and only if $q < \delta$.

\begin{corollary}
Let $\eta \geq 0$, $\eta \neq 1$ and $(\delta_T,\delta_X) \in \Z \times \N$ with $\delta=\delta_T+\eta \delta_X \geq 0$. Let $\mathcal{C}$  be a non-filling curve on the Hirzebruch surface $\mathcal{H}_\eta$ which Picard class is $\delta_T \mathcal{F} + \delta_X \sigma$. Then the number of $\F_q$-rational point of the curve $\mathcal{C}$ is upper-bounded as follow.

\begin{itemize}
\item If $\eta \geq 2$,
\begin{itemize}
\item If $q > \delta$, then
\[\#\mathcal{C}(\F_q)\leq\left\{ \begin{array}{cl}
(q+1)\delta_T & \text{if } \delta_X=0 \text{ and } \delta_T \geq 0 ,\\
q(\delta+1)+1  &  \text{otherwise.}\end{array}\right.\]
\item If $\max\left( \frac{\delta}{\eta +1},\delta_T\right) < q \leq \delta$, then 
\[\#\mathcal{C}(\F_q)\leq q^2+q+1+ \left\lfloor \frac{\delta-q}{\eta} \right\rfloor. \]
\item If $ q \leq \max\left( \frac{\delta}{\eta +1},\delta_T\right)$ and $q \geq \delta_X$,
\[\#\mathcal{C}(\F_q)\leq q^2+q+\delta_X. \]
\end{itemize}
\item if $\eta=0$,
\[\#\mathcal{C}(\F_q) \leq (q+1)^2 - \max(q-\delta_X + 1, 1)  \max(q-\delta_T + 1, 1).\]
\end{itemize}
Moreover each upper bound is reached by Proposition \ref{atteint}.
\end{corollary}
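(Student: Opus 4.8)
The plan is to recast the statement in the coding-theoretic language already developed and then simply read off the minimum distance from Proposition \ref{distance}.

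First I would identify a curve $\mathcal{C}$ on $\mathcal{H}_\eta$ whose Picard class is $\delta_T \mathcal{F} + \delta_X \sigma$ with the vanishing locus $\mathcal{Z}(F)$ of a nonzero homogeneous polynomial $F \in R(\delta_T,\delta_X)$; this is exactly the content of the $\Z^2$-grading on the Cox ring $R$ recalled in Section \ref{secdebut}, where "curve" is read as "effective divisor in that class". Since $F$ is homogeneous, whether $F(P)=0$ for a given $P \in \mathcal{H}_\eta(\F_q)$ does not depend on the chosen homogeneous coordinates, so $\mathcal{Z}(F)(\F_q)$ is well defined and, by the definition of the evaluation map and of the weight $\omega$,
\[\#\mathcal{C}(\F_q) = \#\mathcal{Z}(F)(\F_q) = N_F = N - \omega\big(\ev_{(\delta_T,\delta_X)}(F)\big), \qquad N=(q+1)^2.\]
The hypothesis that $\mathcal{C}$ is non-filling says precisely that $F$ does not vanish at every $\F_q$-rational point of $\mathcal{H}_\eta$, i.e. $F \notin \ker \ev_{(\delta_T,\delta_X)}$; hence $\ev_{(\delta_T,\delta_X)}(F)$ is a nonzero codeword of $C_\eta(\delta_T,\delta_X)$, so $\omega(\ev_{(\delta_T,\delta_X)}(F)) \geq d_\eta(\delta_T,\delta_X)$ and therefore
\[\#\mathcal{C}(\F_q) \leq (q+1)^2 - d_\eta(\delta_T,\delta_X).\]

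Second I would substitute into $(q+1)^2 - d_\eta(\delta_T,\delta_X)$ the explicit values of $d_\eta$ supplied by Proposition \ref{distance}, case by case. For $\eta \geq 2$ and $q > \delta$ the value $(q+\mathds{1}_{\delta_X=0})(q-\delta+1)$ yields $(q+1)\delta_T$ when $\delta_X=0$ (then $\delta=\delta_T\geq 0$) and $q(\delta+1)+1$ otherwise; for $\max(\tfrac{\delta}{\eta+1},\delta_T) < q \leq \delta$ the value $q-\lfloor\tfrac{\delta-q}{\eta}\rfloor$ yields $q^2+q+1+\lfloor\tfrac{\delta-q}{\eta}\rfloor$; for $q \leq \max(\tfrac{\delta}{\eta+1},\delta_T)$ together with $q \geq \delta_X$ I would first observe that this range forces $\delta_T \geq 0$, because if $\delta_T<0$ then $\tfrac{\delta}{\eta+1}\leq\tfrac{\delta}{\eta}=A<\delta_X\leq q$ and also $\delta_T<q$, contradicting $q \leq \max(\tfrac{\delta}{\eta+1},\delta_T)$, so $d_\eta=\max(q-\delta_X+1,1)=q-\delta_X+1$ and the bound is $q^2+q+\delta_X$; and for $\eta=0$ one simply transports $d_\eta=\max(q-\delta_X+1,1)\max(q-\delta_T+1,1)$ into the formula. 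Finally, for the sharpness clause I would invoke Proposition \ref{atteint}: each polynomial $F$ exhibited there is homogeneous of bidegree $(\delta_T,\delta_X)$ and its codeword is nonzero of weight exactly $d_\eta(\delta_T,\delta_X)$, so $\mathcal{Z}(F)$ is a non-filling curve of the prescribed Picard class with $\#\mathcal{Z}(F)(\F_q)=(q+1)^2-d_\eta(\delta_T,\delta_X)$, attaining each displayed upper bound.

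This argument is essentially formal once Propositions \ref{distance} and \ref{atteint} are in hand, so I do not expect a genuine obstacle; the only points requiring care are bookkeeping ones — ensuring that the correspondence between curves of the given Picard class and nonzero homogeneous $F \in R(\delta_T,\delta_X)$ is exact (no smoothness or irreducibility hypothesis slips in), and checking that the two conditions $q \leq \max(\tfrac{\delta}{\eta+1},\delta_T)$ and $q \geq \delta_X$ in the third case for $\eta\geq 2$ are compatible only when $\delta_T \geq 0$, which is what makes the single formula $q^2+q+\delta_X$ correct there.
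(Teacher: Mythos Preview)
Your proposal is correct and is exactly the argument the paper intends: the corollary is stated without proof precisely because it is obtained by writing $\#\mathcal{C}(\F_q)=(q+1)^2-\omega(\ev_{(\delta_T,\delta_X)}(F))\leq (q+1)^2-d_\eta(\delta_T,\delta_X)$ and substituting the values of Proposition~\ref{distance}, with sharpness from Proposition~\ref{atteint}. Your observation that the third subcase for $\eta\geq 2$ forces $\delta_T\geq 0$ (hence $d_\eta=q-\delta_X+1$) is the only point the paper leaves implicit, and your justification is sound.
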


These upper bounds cannot be compared to Hasse-Weil. Indeed the curves that reach these bounds can be highly reducible and singular, as displayed in Proposition \ref{atteint}. Such a phenomenon has already been observed on general toric surfaces by I. Soprunov and J. Soprunova \cite{SS}.

\section{Punctured codes}\label{punc}

J. P. Hansen \cite{HanToric} and B. L. De La Rosa Navarro and M. Lahyane \cite{ DLR} studied codes, only on $\F_q$ with $q$ big enough so that the evaluation map is injective, that turn to be punctured codes of our evaluation code $C_\eta(\delta_T,\delta_X)$.

\medskip

In \cite{DLR}, the authors in fact considered a punctured code of $C_\eta(\delta_T,0)$ at $q+1$ coordinates as the evaluation code of polynomials of bidegree $(\delta_T,0)$ outside the fiber $\mathcal{F}$ for $q \geq \delta_T$. They obtained a quite bad puncturing since here the code $C_\eta(\delta_T,0)$ has parameters $[N,k,d]=[(q+1)^2,\delta_T+1,(q+1)(q-\delta_T+1)]$ whereas theirs has parameters $[N-(q+1),k, d-(q+1)]$.

\medskip

Among other toric surfaces, J.P. Hansen \cite{HanToric} and, more recently, J. Little and H. Schenck \cite{Little}, studied toric codes on Hirzeburch surfaces that evaluate polynomials of $R(\delta_T,\delta_X)$ for $\delta_T$ and $\delta_X$ both positive and only on $\F_q$-rational points of the torus $\Gb_m^2$. They also assumed that $\delta=\delta_T+\eta \delta_X < q-1$, which ensures the evaluation map to be injective. They proved the minimum distance to be equal to $(q-1)^2 - \delta(q-1)$.

Such a code is obtained from puncturing $C_\eta(\delta_T,\delta_X)$ at the $4q$ rational points of $\mathcal{Z}(T_1T_2X_1X_2)=D_1+D_2+E_1+E_2$.
They obtained a quite good puncturing since here the code $C_\eta(\delta_T,\delta_X)$ has parameters 
\[[N,k',d']=\left[(q+1)^2,(\delta_X+1)\left(\delta_T+\eta\frac{\delta_X}{2}+1\right),q(q-\delta_T+1)\right],\]
whereas theirs has parameters $[N-4q,k',d'-(3q-\delta-1)]$. Note that, as stated in the introduction, the difference between minimum distances is at least $2q$, the half of the difference between the lengths. This feature was already observed by G. Lachaud when extending Reed-Muller codes to projective Reed-Muller codes \cite{Lachaud}.

\medskip

We highlight here an interesting puncturing of codes $C_\eta(\delta_T,\delta_X)$ when $\delta_T$ is negative, in the sense that all common zero coordinates of codewords and only them are punctured. Let us define the linear code $C^\star_\eta(\delta_T,\delta_X)$ over $\F_q$ obtained by punctuation of the code $C_\eta(\delta_T,\delta_X)$ at the points of $\mathcal{Z}(X_1)=D_1$.

\begin{theorem}
Let $\eta \geq 1$, $\delta_T < 0$ and $\delta_X > 0$. The code $C^\star_\eta(\delta_T,\delta_X)$ has length $q(q+1)$ and has the same dimension and minimum distance as $C_\eta(\delta_T,\delta_X)$.
\end{theorem}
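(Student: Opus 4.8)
The plan is to show that passing from $C_\eta(\delta_T,\delta_X)$ to $C^\star_\eta(\delta_T,\delta_X)$ merely deletes coordinates on which \emph{every} codeword already vanishes, so that the puncturing map is a Hamming isometry; the three claimed facts (length, dimension, minimum distance) all follow at once from this.

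First I would identify $\mathcal{Z}(X_1)(\F_q)$. Since the $\Gb_m^2$-action rescales the coordinate $x_1$ by a nonzero scalar, the condition $x_1=0$ descends to orbits, and among the four normal forms of Paragraph \ref{SensEval} it is satisfied exactly by $(1,a,0,1)$ with $a\in\F_q$ and by $(0,1,0,1)$. Hence $\#\mathcal{Z}(X_1)(\F_q)=q+1$ (consistently with $D_1\cong\PP^1$), and $C^\star_\eta(\delta_T,\delta_X)$, obtained by deleting these $q+1$ coordinates from the length-$(q+1)^2$ code $C_\eta(\delta_T,\delta_X)$, has length $(q+1)^2-(q+1)=q(q+1)$.

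Next I would check that $\delta_T<0$ together with $\eta\geq 1$ forces $X_1\mid F$ for every $F\in R(\delta_T,\delta_X)$. Indeed, by the explicit description (\ref{Pexplicite}) every $(d_2,c_2)\in\mathcal{P}(\delta_T,\delta_X)$ satisfies $d_2\leq\lfloor A\rfloor$ with $A=\delta_X+\frac{\delta_T}{\eta}<\delta_X$, so the exponent $\delta_X-d_2$ of $X_1$ in the monomial $M(d_2,c_2)$ is at least $1$; since these monomials span $R(\delta_T,\delta_X)$ via the bijection (\ref{bijection}), every polynomial of this bidegree is divisible by $X_1$. Consequently every $F\in R(\delta_T,\delta_X)$ vanishes identically on $\mathcal{Z}(X_1)$, i.e. each codeword of $C_\eta(\delta_T,\delta_X)$ has all $q+1$ of its punctured entries equal to $0$.

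Finally I would conclude. The linear projection $\rho$ forgetting the coordinates indexed by $\mathcal{Z}(X_1)(\F_q)$ is surjective onto $C^\star_\eta(\delta_T,\delta_X)$ by the very definition of puncturing; it is injective on $C_\eta(\delta_T,\delta_X)$ because the forgotten coordinates are identically zero there; and, for the same reason, it preserves the Hamming weight of every codeword. Thus $\rho$ is a Hamming isometry between $C_\eta(\delta_T,\delta_X)$ and $C^\star_\eta(\delta_T,\delta_X)$, which therefore have equal dimension and equal minimum distance. There is no genuine obstacle here; the only point deserving care is the bookkeeping that matches $\mathcal{Z}(X_1)(\F_q)$ with the set of punctured coordinates and confirms that no further coordinate of the code is forced to vanish — precisely the role played by the hypothesis $\delta_T<0$.
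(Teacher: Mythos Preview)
Your proof is correct and follows essentially the same approach as the paper: the key point in both is that $X_1$ divides every monomial of $R(\delta_T,\delta_X)$ when $\delta_T<0$, so all punctured coordinates are identically zero. The paper obtains $d_1>0$ directly from the bidegree relation $0\leq c_1+c_2=\delta_T+\eta d_1<\eta d_1$, whereas you deduce it from the polygon description $d_2\leq\lfloor A\rfloor<\delta_X$; and you spell out the length count and the Hamming-isometry conclusion that the paper leaves implicit.
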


\begin{proof}
Every monomial $M=T_1^{c_1}T_2^{c_2}X_1^{d_1}X_2^{d_2} \in R(\delta_T,\delta_X)$ on $\mathcal{H}_\eta$ satisfies
\[0 \leq c_1+c_2=\delta_T + \eta d_1 < \eta d_1.\]
Hence $d_1 >0$ and $M$ is zero on $X_1=0$.
\end{proof}

\begin{remark}
The previous theorem is true even if $\eta =1$.
\end{remark}

\begin{example}
Here are some examples of punctured code $\F_3$, of length $12$ that reach the bounds given by code.tables \cite{codetables}.

\begin{center}
\begin{tabular}{|c|c|c|c|}
$\eta$ & $\delta_T$ & $\delta_X$ & Parameters of $C^\star_\eta(\delta_T,\delta_X)$\\
\hline
2 & -1 & 1 & [12,2,9] \\
2 & -1 & 3 & [12,10,2]\\
2 & -2 & 2 & [12,4,6] \\
2 & -2 & 3 & [12,8,3] \\
\end{tabular}
\end{center}
\end{example}

\section*{Acknowledgement}
The author expresses his gratitude to the anonymous referee for his careful work and
his helpful suggestions, especially for the third section.

\end{document}